\documentclass[twoside]{article}

\usepackage{pgfplots}
\usepackage{amsfonts}
\usepackage{stmaryrd}
\usepackage{amssymb}
\usepackage{euscript}
\usepackage{amsthm}
\usepackage{amsmath}

\usepackage{latexsym}
\usepackage{mathrsfs}
\usepackage{graphicx}
\usepackage{color}
\usepackage{dsfont}
\usepackage{bm}
\usepackage{booktabs}
\usepackage{multirow}
\usepackage{array}

\usepackage{enumitem}%

\usepackage[hidelinks]{hyperref}

\numberwithin{equation}{section}

\newtheorem{theorem}{Theorem}[section]
\newtheorem{lemma}{Lemma}[section]

\newtheorem{corollary}{Corollary}[section]
\newtheorem{remark}{Remark}[section]

\newtheorem{hypothesis}{Assumption}[section]

\usepackage{comment}

\newcommand{\wt}{\widetilde}

\newcommand{\I}{\mathds{1}}

\newcommand{\esssup}{\operatornamewithlimits{ess\,sup}}

\newcommand{\cH}{\mathcal{H}}
\newcommand{\cS}{\mathcal{S}}
\newcommand{\cK}{\mathcal{K}}

\newcommand{\cA}{{\mathcal A}}

\newcommand{\cF}{{\mathcal F}}

\newcommand{\cM}{{\mathcal M}}
\newcommand{\cT}{{\mathcal T}}

\newcommand{\FF}{{\mathbb F}}

\newcommand{\PP}{{\mathbb P}}

\newcommand{\EE}{{\mathbb E}}

\begin{document}
\title{Entropy-regularized penalization schemes and reflected BSDEs with singular generators \vskip35pt}

\author{Daniel Chee$\,^{a}$, Noufel Frikha$\,^{b}$ and Libo Li$\,^{a}$ \\ \\ \\ \\
\\ $^{a\,}$School of Mathematics and Statistics, University of New South Wales \\ Sydney, NSW 2052, Australia \\ \\
$^{b\,}$Universit\'e Paris 1 Panth\' eon-Sorbonne, Centre d’Economie de la Sorbonne, \\106 Boulevard de l’H\^opital, 75642 Paris Cedex 13, France\\ }

\maketitle
\vskip20pt
\begin{abstract}
This paper extends Chee \emph{et al.}~\cite{CFL2025} to continuous-time optimal stopping, focusing on American options in an exploratory setting. Our first contribution is an entropy-regularized penalization scheme, inspired by classical penalization techniques for reflected BSDEs. It yields a smooth approximation of the stopping rule, promotes exploration, and enables gradient-based learning methods. We prove well-posedness, convergence, and illustrate numerical performance in low-dimensional examples. Our second contribution analyzes the behaviour of the scheme as the penalization parameter grows, showing that the limit solves a reflected BSDE with a logarithmically singular generator, for which we establish existence and uniqueness via a monotone limit argument.
\end{abstract}

\newpage
\tableofcontents
\newpage

\section{Introduction}
The numerical resolution of optimal stopping problems has long been central in mathematical finance, with applications ranging from American option pricing to optimal liquidation and real options. Recent advances in machine learning and reinforcement learning (RL) have renewed interest in Monte Carlo-based methods, particularly in high‑dimensional and model‑agnostic settings. A growing literature exploits randomized stopping representations and entropy regularization to recast optimal stopping as a stochastic control problem amenable to modern machine learning or RL techniques; see, for example, \cite{BCJ2019, DD2024, DSXZ2024, DFX2024, D2023, RST2022, STG2023}.

In this work, we consider a continuous‑time optimal stopping problem on a filtered probability space $(\Omega,\mathcal{F},\mathbb{F},\mathbb{P})$. For a given payoff process $P$, the value process is
\begin{equation}
	V_t = \esssup_{\tau \in \mathcal{T}_{t,T}} \mathbb{E}[P_\tau \mid \mathcal{F}_t],
	\label{Vstop}
\end{equation}
where $\mathcal{T}_{t,T}$ denotes the set of $\mathbb{F}$‑stopping times valued in $[t,T]$. It is well known that $V$ is the Snell envelope of $P$ and admits a Doob-Meyer decomposition, which can be characterized through a reflected backward stochastic differential equation (RBSDE).

Our starting point is the randomized stopping representation of Gy\"ongy and \v{S}i\v{s}ka~\cite{GS2008}, which rewrites the optimal stopping problem as a control problem over non‑negative stopping intensities. Specifically,
\begin{align}
	V_t
	= \esssup_{\gamma \in \Lambda}
	\mathbb{E}\Big[
	P_T e^{-\int_t^T \gamma_u du}
	+ \int_t^T P_s \gamma_s e^{-\int_t^s \gamma_u du} \, ds
	\,\Big|\, \mathcal{F}_t
	\Big],
	\label{rstopping}
\end{align}
where $\Lambda$ denotes the set of admissible intensity controls. This representation underpins several recent RL-based approaches to optimal stopping and admits a natural formulation in terms of BSDEs. It also brings into sharp focus the main difficulty of the continuous-time setting: the optimal stopping intensity is typically degenerate, taking only the extreme values $0$ and $+\infty$. Such degeneracy entails a severe lack of regularity, which in turn gives rise to substantial theoretical and numerical challenges, particularly for gradient-based learning algorithms.

To address this issue, several recent works have introduced entropy-regularization within exploratory HJB equations and relaxed-control frameworks; see \cite{DD2024, DSXZ2024, DFX2024, D2023}. Our approach differs. Rather than relying on PDEs or a specific diffusion model, we work directly with BSDEs to introduce entropy-regularization (via a temperature parameter) through the standard penalization scheme, see El Karoui \emph{et al.}~\cite{EKPPQ1997}. This yields an entropy‑regularized numerical scheme that is model‑independent, data-driven and flexible enough to incorporate nonlinear market dynamics.

The paper has two objectives. First, we develop and analyze this scheme for continuous‑time optimal stopping. For fixed penalization and temperature parameters, we establish well‑posedness and study convergence as these parameters are sent to their limits. Under suitable scaling, the entropy‑regularized value process converges to the classical value $V$, with an explicit rate under additional regularity. We also design a Policy Improvement Algorithm (PIA) tailored to the regularized formulation.

Second, we examine the scheme when the penalization parameter tends to infinity while temperature remains fixed. In this limit, the scheme converges monotonically to a process $V^\lambda$, which we show solves a RBSDE with a logarithmically singular driver, a class that appears to be new. Related work on BSDEs and RBSDEs with singular drivers includes \cite{ B2019, BEO2017, BT2021, DE1992,  LRZ2024, WJ2025, Z2024, ZZF2021,ZZM2024}, though these approaches rely on domination or Itô-Krylov techniques. Our analysis instead uses a monotone limit argument in the spirit of Peng~\cite{P1999}. 


The paper is organized as follows. Section~\ref{erps} studies the entropy‑regularized penalization scheme, its convergence properties, and the associated PIA. Section~\ref{ERRBSDE} is devoted to the asymptotic analysis of the scheme, including the well-posedness of the limiting singular RBSDE and its probabilistic interpretation. Section~\ref{NEP} provides numerical experiments, and technical lemmas are collected in the appendix.

\section{Notations and Setup}
We work on a filtered probability space $(\Omega, \cF, \FF=(\cF_t)_{t\ge0}, \PP)$. Additional assumptions on $\FF$ will be stated when needed (see Assumption~\ref{hyp: Fcont}). We denote by $\mathcal{O}(\FF)$ (and $\mathcal{P}(\FF)$) the space of $\FF$‑optional (and predictable) processes, by $\cM$ the $\FF$‑martingales, and by $\cA^+$ the predictable, positive, increasing processes. We use $x\vee y=\max(x,y)$, $x^{+}=\max(x,0)$, and $x^-=\max(-x,0)$. Throughout, $C$ and $K$ denote generic positive constants.

The Banach space of square-integrable optional processes, denoted $\mathcal{S}^2$, the
space of square-integrable martingales, denoted $\mathcal{H}^2$, and the space of
square-integrable, predictable, increasing processes, denoted $\mathcal{K}^2$, are defined by
\begin{align*}
	\cS^2 & := \Big\{X \in \mathcal{O}(\mathbb{F}) : \mathbb{E}\big[\sup_{0\leq t \leq T} X_t^2\big] < \infty \Big\},\\
	\cH^2 & := \left\{M \in \mathcal{M}: \mathbb{E}\left[[M]_T\right] < \infty \right\},\\
	\cK^2  & := \big\{A \in \mathcal{A}^{+} : \mathbb{E}\big[A_T^2\big] < \infty \big\},
\end{align*}
where $[M]$ stands for the quadratic variation of $M$.
If the payoff process $P$ is c\`adl\`ag and satisfies suitable integrability conditions, the value process $V$ in \eqref{Vstop} belongs to class (D).
In this case, $V$ satisfies the RBSDE
\begin{align}
	V_t & = P_T - (M_T - M_t) + (A_T - A_t), \qquad t\in [0,T] \label{VRBSDE}\\
	V_t & \geq P_t \quad \mathrm{and} \quad \int^T_0 (V_{s-}- P_{s-}) dA_s = 0, \nonumber 
\end{align}
\noindent where $M$ is a uniformly integrable martingale and $A$ is a predictable increasing process. In particular, when $P \in \cS^2$, one has $(V,M,A)\in \cS^2\times\cH^2\times \cK^2$; see, for example, Steps~1-4 in the proof of Lemma~3.3 in Grigorova \emph{et al.}~\cite{GIOOQ2017}.

\section{Entropy-Regularized Penalization Scheme}\label{erps}

In this section, we introduce our entropy-regularized penalization approach for American options. The methodology builds upon the relaxed control framework developed for Bermudan options in our earlier work \cite{CFL2025}. A key distinction between Bermudan and American options is that, in the latter case, the optimal control $\gamma$ in \eqref{rstopping} may take the value $+\infty$. 


To address this issue, we combine entropy-regularization with the classical penalization technique commonly used in the theory of reflected BSDEs. This hybrid approach allows us to retain the tractability of entropy-based methods while controlling the singular behavior of the optimal control. For an alternative approach see our related work \cite{CFL2026}. Throughout this paper, we impose the following assumption. 

\begin{hypothesis}\label{A}
	The payoff process $P$ is positive, càdlàg and belongs to $\mathcal{S}^2$.
\end{hypothesis}

We begin by truncating the control space and restricting attention to bounded controls. For a fixed $n \geq 1$, let $\Lambda_n$ denote the set of $\mathbb{F}$-adapted control processes $\gamma$ taking values in $[0,n]$. We start by considering the BSDE
\begin{align}
	V^n_t & = P_T - (M^n_T- M^n_t) + \esssup_{\gamma \in \Lambda_n}  \int^T_t (P_{s} - V^n_{s})\gamma_s \,  ds. \label{penalized}
\end{align}

Formally, if a solution $(V^n,M^n)$ to \eqref{penalized} exists, the optimal control is given pointwise by
$
\gamma^*_s = n\,\mathbf{1}_{\{P_s - V^n_s > 0\}},
$
so that the value process satisfies
\begin{align}
	V^n_t & = P_T - (M^n_T- M^n_t) + \int^T_t   n(P_{s} - V^n_{s})^+ \, ds. 
	\label{pscheme}
\end{align}
Following \cite{CFL2025}, we introduce an entropy-regularized version of \eqref{penalized} by relaxing the control. For $\lambda \geq 0$, referred to as the \emph{temperature parameter}, we consider the BSDE
\begin{align}
	V^{\lambda,n}_t & = P_T \!- \!\int^T_t\! dM^{\lambda,n}_s + \esssup_{\pi\in \Pi_{n}}\left[ \int^T_t \!\int^n_0 (P_{s} - V^{\lambda,n}_{s})u \pi_s(u) - \lambda \pi_s(u) \ln \pi_s(u) du ds\right], \label{AmericanBSDE}
\end{align}
where $\Pi_n$ denotes the set of $\mathbb{F}$-adapted probability densities on $[0,n]$, namely
\begin{align*}
	\left\{ \pi = (\pi_{s}(u))_{s\in [0,T]} \! :\! (u,s) \mapsto \pi_{s}(u) \in \mathcal{B}([0,n])\otimes \cF_{s}, \pi(u) \geq 0 \text{ and} \int^n_0 \pi_{s}(u) du = 1\right\}.
\end{align*}
Assuming that \eqref{AmericanBSDE} is well-posed, the pointwise optimal control admits a Gibbs-type representation:
\begin{align}\label{optimalpi}
	\pi^*_s(u) = \frac{\frac{1}{\lambda}(P_s-V^{\lambda,n}_{s})}{e^{\frac{n}{\lambda}(P_s-V^{\lambda,n}_{s})}-1} e^{\frac{1}{\lambda}(P_s-V^{\lambda,n}_{s})u}, \quad u \in [0,n].
\end{align}

Our objective is now to study rigorously the entropy-regularized penalization scheme \eqref{AmericanBSDE}, with the optimal control \eqref{optimalpi} substituted in, and to analyze its relationship with the value process $V$ of the original optimal stopping problem.

Substituting \eqref{optimalpi} into the generator of \eqref{AmericanBSDE} yields
\begin{align*}
	&  \int^n_0 \pi^*_s(u) \left[(P_{s} - V^{\lambda,n}_{s})  u  -  (P_s-V^{\lambda,n}_{s})u - \lambda\ln \left(\frac{\frac{1}{\lambda}(P_s-V^{\lambda,n}_{s})}{e^{\frac{n}{\lambda}(P_s-V^{\lambda,n}_{s})}-1}\right)\right] \, du  \\
	& \qquad = -\lambda \ln \left(\frac{\frac{1}{\lambda}(P_s-V^{\lambda,n}_{s})}{e^{\frac{n}{\lambda}(P_s-V^{\lambda,n}_{s})}-1}\right) =  \lambda \ln\!\left(
	\frac{e^{\frac{n}{\lambda}(P_s - V^{\lambda,n}_{s})} - 1}
	{\frac{n}{\lambda}(P_s - V^{\lambda,n}_{s})}
	\right)
	+ \lambda \ln(n).
\end{align*}
To simplify notation, we introduce the functions
\begin{equation}\label{def:psi:phi}
	\Psi(x)  = \frac{1}{x}\ln \left( \frac{e^{x}-1}{x} \right)  \quad \mathrm{and} \quad \Phi(x)  = \ln \left( \frac{e^{x}-1}{x} \right), \quad x \in \mathbb{R}.
\end{equation}
The entropy-regularized BSDE \eqref{AmericanBSDE} can then be rewritten as
\begin{align}
	V^{\lambda,n}_t 
	& = P_T - \int^T_t dM^{\lambda,n}_s + \int^T_t \left[(P_s-V^{\lambda,n}_{s})n\Psi\left(\frac{P_s-V^{\lambda,n}_{s}}{\lambda/n}\right)ds  + \lambda \ln(n)\right]ds \label{Vnn}
\end{align}
or, equivalently,
\begin{align}		
	V^{\lambda,n}_t 	& = P_T - \int^T_t dM^{\lambda,n}_s + \int^T_t n\left[\frac{\lambda}{n} \Phi\left(\frac{P_s-V^{\lambda,n}_{s}}{\lambda/n}\right) + \frac{\lambda}{n}\ln(n)\right]ds. \label{Vln}
\end{align}
\begin{lemma} \label{lemma: Vln_well_posed}
	For each $n\ge 1$, the entropy-regularized penalization scheme \eqref{Vln} is well-posed, that is there exists a unique solution $(V^{\lambda,n}, M^{\lambda,n}) \in \cS^2 \times \cH^2$ to \eqref{Vln}.
\end{lemma}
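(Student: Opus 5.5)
The plan is to view \eqref{Vln} as a BSDE driven by a general martingale (no Brownian assumption on $\FF$), with generator
\begin{align*}
f(s,y) := \lambda\,\Phi\Big(\frac{n(P_s-y)}{\lambda}\Big) + \lambda\ln(n),
\end{align*}
which depends only on $y$ and not on the martingale part. I will then apply the standard existence and uniqueness theory for BSDEs with Lipschitz drivers in a general filtration, with solution in $\cS^2\times\cH^2$. This theory requires two inputs: the terminal condition $P_T$ must be in $L^2$, and the driver must satisfy $\EE[(\int_0^T|f(s,0)|ds)^2]<\infty$ and be uniformly Lipschitz in $y$.

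\textbf{Step 1: properties of $\Phi$.} First I will show that $\Phi$ is smooth on $\mathbb{R}$, with $\Phi(0)=0$ by continuous extension. Its derivative is
\begin{align*}
\Phi'(x)=\frac{e^x}{e^x-1}-\frac{1}{x}.
\end{align*}
This is the mean of the tilted uniform law on $[0,1]$, so $\Phi'(x)\in(0,1)$. Hence $\Phi$ is increasing and $1$-Lipschitz. Moreover, $0\le \Phi(x)\le x^+$ up to a bounded correction; more precisely $|\Phi(x)|\le |x|$.

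\textbf{Step 2: the driver.} From Step 1,
\begin{align*}
|f(s,y)-f(s,y')| \le \lambda\cdot\frac{n}{\lambda}\,|y-y'| = n\,|y-y'|,
\end{align*}
so $f$ is Lipschitz with constant $n$, uniformly in $(s,\omega)$. Also
\begin{align*}
|f(s,0)|\le n|P_s|+\lambda|\ln n|,
\end{align*}
so $\EE\big[(\int_0^T|f(s,0)|ds)^2\big]\le C\big(\EE[\sup_t P_t^2]+1\big)<\infty$ by Assumption~\ref{A}. The map $f(\cdot,y)$ is optional since $P$ is càdlàg adapted. For $\lambda=0$ one interprets $\lambda\Phi(n x/\lambda)$ as the limit $n x^+$, which is again $n$-Lipschitz, recovering \eqref{pscheme}.

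\textbf{Step 3: fixed point.} Since the driver is Lipschitz with no $Z$-dependence, I will construct the solution by Picard iteration. Define
\begin{align*}
\Gamma(Y)_t=\EE\Big[P_T+\int_t^T f(s,Y_s)\,ds \,\Big|\,\cF_t\Big],
\end{align*}
taking a càdlàg version. Then $M$ is defined through the martingale $\EE[P_T+\int_0^T f(s,Y_s)ds\mid\cF_t]$, which lies in $\cH^2$ by Doob's inequality. To make $\Gamma$ a contraction on $\cS^2$ I will use the weighted norm $\EE[\sup_t e^{\beta t}|Y_t|^2]$ with $\beta$ large, or alternatively iterate on short time intervals of length $<1/(2n)$ and paste. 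Uniqueness follows from the same contraction estimate.

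The main technical point I expect is the contraction estimate in a general (possibly jump) filtration. Since $f$ does not depend on $M$, no Itô formula on $[M]$ is required. The estimate
\begin{align*}
\sup_t|\Gamma(Y)_t-\Gamma(Y')_t|\le \sup_t \EE\Big[\int_0^T n|Y_s-Y'_s|\,ds\,\Big|\,\cF_t\Big]
\end{align*}
combined with Doob's $L^2$ inequality gives
\begin{align*}
\|\Gamma(Y)-\Gamma(Y')\|_{\cS^2}\le 2nT\,\|Y-Y'\|_{\cS^2}.
\end{align*}
This is a contraction on intervals with $2nT'<1$, and backward pasting over finitely many such intervals finishes the proof.
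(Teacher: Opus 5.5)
Your proof is correct and follows the paper's route: you show that the driver $\lambda\Phi(n(P_s-y)/\lambda)+\lambda\ln n$ is $n$-Lipschitz in $y$ with square-integrable value at $y=0$, then invoke well-posedness of Lipschitz BSDEs in a general filtration. The differences are that you prove that last step yourself (Doob's $L^2$ inequality giving the $2nT$ contraction, then short intervals and pasting) where the paper cites Theorem~3.1 of \O ksendal and Zhang, and that you bound $f(s,0)$ via $|\Phi(x)|\le|x|$ rather than Lemma~\ref{lemma1.1}; one harmless slip is that ``$0\le\Phi(x)$ up to a bounded correction'' is false (on $(-\infty,0)$, $\Phi$ is negative and behaves like $-\ln|x|$), but you only ever use $|\Phi(x)|\le|x|$, which is correct.
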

\begin{proof}
	By Lemma \ref{lipphi} and Lemma \ref{lemma1.1}, the generator is Lipschitz continuous and, for any fixed $n$ and $\lambda$,
	$$\mathbb{E}\Big[\int^T_0 \left|\frac{\lambda}{n}\Phi\left(\frac{P_s}{\lambda/n}\right)\right|^2ds \Big]  <\infty.$$
	Existence and uniqueness then follow from Theorem 3.1 in \O ksendal and Zhang \cite{OZ}.	
\end{proof}

\subsection{Convergence to the American Option}
In this subsection, we study the asymptotic behaviour of the entropy‑regularized penalization scheme \eqref{Vnn}. For any fixed $n \ge 1$, we show that $V^{\lambda,n}$ converges to the classical penalized value $V^n$ in \eqref{pscheme} as $\lambda \downarrow 0$. Moreover, Theorem~\ref{t3.1} implies that if the truncation parameter satisfies $\lambda \ln(n) \to 0$, then $V^{\lambda,n}$ converges to the American option value $V$ as $\lambda \to 0$. This result forms the basis of our numerical approach.

Define $d\Gamma^{\lambda,n}_s :=  n\Psi\big(n(P_s - V^{\lambda,n}_s)\lambda^{-1}\big) ds$. By Lemma~\ref{lem:psi:cdf}, $\Gamma^{\lambda,n}$ is a bounded, non-decreasing hazard process as the integrand is non-negative and bounded by~$n$. Applying It\^o's formula to $e^{-\Gamma^{\lambda,n}_t}(V^{\lambda,n}_t + \lambda\ln(n)t)$ and setting $\widehat P_t := P_t + \lambda \ln(n)t$ yields 
\begin{align*} 
	V^{\lambda,n}_t +\lambda \ln(n) t \nonumber 
	& = \mathbb{E}\Big[\widehat P_Te^{-(\Gamma^{\lambda,n}_T - \Gamma^{\lambda,n}_t)} + \int_{]t,T]} \widehat P_u e^{-(\Gamma^{\lambda,n}_u - \Gamma^{\lambda,n}_t)} d\Gamma_u^{\lambda,n}\,\Big|\cF_t\Big].
\end{align*}
Using the link between randomized and classical optimal stopping \eqref{rstopping}, we deduce
\begin{gather}
	V^{\lambda,n}_t +\lambda \ln(n) t \leq \esssup_{\tau \in \mathcal{T}_{t,T}}\mathbb{E}\big[P_{\tau\wedge T} + \lambda \ln(n)(T\wedge \tau)\,\big|\cF_t\big] \leq V_t +\lambda \ln(n) T. \label{AmericanUpper}
\end{gather}

Hence, if the penalization parameter $n$ is chosen in such a way that $\lambda \ln (n) \rightarrow 0$ as $\lambda \rightarrow 0$, the approximation $V^{\lambda,n}$ converges to the American value process $V$ as $\lambda\rightarrow 0$. The presence of the correction term $\lambda\ln(n)(T-t)$ in \eqref{Vnn} prevents a direct comparison with the classical penalization scheme. To overcome this difficulty, we introduce the auxiliary BSDE
\begin{align*}
	\wt V^{\lambda,n}_t &  =  P_T - \int^T_t dM^{\lambda,n}_s + \int^T_t n\Big[\frac{\lambda}{n} \Phi\Big(\frac{P_s-\wt V^{\lambda,n}_{s}}{\lambda/n}\Big)\Big]ds. 
\end{align*}
This equation is well-posed, since the driver is Lipschitz continuous with constant $n$ by Lemma~\ref{lipphi}, and $P \in \mathcal{S}^2$. Setting $d\wt \Gamma_s^{\lambda,n} = n\Psi\big(n(P_s-\wt V^{\lambda,n}_{s})\lambda^{-1}\big)ds$ and applying It\^o’s formula to $e^{-\wt \Gamma^{\lambda,n}_t}\widetilde V^{\lambda,n}_t$, we obtain
\begin{align*}
	\wt V^{\lambda,n}_t & = \mathbb{E}\Big[P_Te^{-(\wt \Gamma^{\lambda,n}_T - \wt\Gamma^{\lambda,n}_t) } + \int^T_t P_u e^{-(\wt \Gamma^{\lambda,n}_u - \wt\Gamma^{\lambda,n}_t)}  d \wt \Gamma_u^{\lambda,n} \Big | \cF_{t}\Big], \quad t \in [0,T].
\end{align*}
By the comparison theorem for BSDEs (see Theorem~3.4 in \cite{OZ}), for all $t\in [0,T]$
\begin{equation}\label{first:simple:comparison:bsde:I}
	\wt V^{\lambda,n}_t \leq V^{n}_t\leq V_t, \quad a.s.
\end{equation}
\noindent recalling that $V^n$ denotes the classical penalization scheme introduced in \eqref{pscheme}.

\begin{remark} 
	The process $\widetilde V^{\lambda,n}$ admits a natural variational interpretation. 
	Instead of using an entropy term of the form $\lambda\,\pi_s(u)\ln \pi_s(u)$, the entropy term is given by 
	the Kullback--Leibler divergence of $\pi_s$ with respect to the uniform distribution on $[0,n]$, 
	namely $\lambda\,\pi_s(u)\ln\!\big(\frac{\pi_s(u)}{1/n}\big)$. 
\end{remark} 

\begin{theorem}\label{t3.1}
	For any positive integer $n$, any $\lambda \in (0,1]$ and $t\in [0,T]$, we have
	\begin{align*}
		|V^n_t - V^{\lambda,n}_t| 
		& \leq C_{t,T}\big(\lambda - \lambda\ln(\lambda) + \lambda \ln(n)\big), \quad a.s.
	\end{align*}
	where $C_{t,T} = 2(T-t)\big(1- \ln(1-e^{-1}) + \mathbb{E}\big[\sup_{s\in [t,T]}\ln( V_s\vee 1) \big|\cF_t\big]\big)$.
\end{theorem}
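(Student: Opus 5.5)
We bound $V^n - V^{\lambda,n}$ from both sides. The driver of \eqref{Vln} is $f^{\lambda,n}(y) := \lambda\Phi\big(n(P_s-y)/\lambda\big) + \lambda\ln(n)$, while the driver of \eqref{pscheme} is $f^n(y) := n(P_s-y)^+$. The key pointwise facts about $\Phi$ (presumably contained in the appendix lemmas, Lemma~\ref{lipphi} and Lemma~\ref{lem:psi:cdf}) are the following. First, $\Phi(x)\ge x^+ - \ln(1+x^+)$ type lower bounds and $\Phi(x) \le x^+$. The latter holds because $(e^x-1)/x\le e^{x^+}$, since $(e^x-1)/x$ is the mean of $e^{ux}$ over $u\in[0,1]$. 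Second, $x^+ - \Phi(x) \le 1 - \ln(1-e^{-1}) + \ln(x^+\vee 1)$, or a similar bound. This is where the constant $1-\ln(1-e^{-1})$ comes from: for $x\ge1$, $\Phi(x) = x + \ln(1-e^{-x}) - \ln x \ge x + \ln(1-e^{-1}) - \ln x$, and for $x\le 1$ one has $|x^+-\Phi(x)|\le 1-\ln(1-e^{-1})$ up to checking, using the monotonicity of $\Phi$ and $\Phi(0)=0$, $\Phi(x)\ge -\ln(|x|)$ etc. Multiplying by $\lambda$ with $x = n(P_s-y)/\lambda$ gives
\[
\big| f^n(y) - \lambda\Phi\big(n(P_s-y)/\lambda\big)\big| \le \lambda\big(1-\ln(1-e^{-1})\big) + \lambda\ln\big(n(P_s-y)^+/\lambda \vee 1\big).
\]

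Next, we compare linearly. Set $\delta_t = V^n_t - V^{\lambda,n}_t$. Then $\delta$ satisfies a BSDE with zero terminal value and driver $f^n(V^n_s) - f^{\lambda,n}(V^{\lambda,n}_s) = [f^n(V^n_s)-f^n(V^{\lambda,n}_s)] + [f^n(V^{\lambda,n}_s) - f^{\lambda,n}(V^{\lambda,n}_s)]$. The first bracket equals $-\beta_s\delta_s$ with $\beta_s\in[0,n]$ (monotone Lipschitz linearization). Applying Itô to $e^{-\int_t^\cdot\beta}\delta$ and taking conditional expectations gives
\[
|\delta_t| \le \mathbb{E}\Big[\int_t^T |f^n(V^{\lambda,n}_s) - f^{\lambda,n}(V^{\lambda,n}_s)|\,ds \,\Big|\,\mathcal{F}_t\Big],
\]
since the discount factor lies in $(0,1]$. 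By the pointwise estimate, the integrand is at most $\lambda(1-\ln(1-e^{-1})) + \lambda\ln n + \lambda\ln\big(n (P_s-V^{\lambda,n}_s)^+/\lambda\vee 1\big)$.

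The main obstacle is controlling the logarithmic term. We will use $\ln(ab\vee1)\le \ln(a\vee 1)+\ln(b\vee1)$ with $a=n/\lambda$, so that $\ln(n/\lambda)=\ln n - \ln\lambda$ for $\lambda\le1$. We then need $(P_s - V^{\lambda,n}_s)^+ \le V_s$. This follows from $P_s\le V_s$, the positivity of $P$, and $V^{\lambda,n}_s \ge -\lambda\ln(n)(T-s)$ or a similar lower bound; more simply, we bound the difference by $P_s + (V^{\lambda,n}_s)^-$. For the lower bound on $V^{\lambda,n}$, the representation preceding \eqref{AmericanUpper}, together with positivity of $P$, gives $V^{\lambda,n}_s + \lambda\ln(n)s \ge 0$. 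Hence $V^{\lambda,n}_s \ge -\lambda\ln(n)s$, which is absorbed into the $\lambda\ln n$ terms. Doubling the constants accommodates these cruder splittings, consistent with the factor $2$ in $C_{t,T}$. Collecting terms gives
\[
|\delta_t| \le (T-t)\lambda\Big(c + 2\ln n - \ln\lambda + \mathbb{E}\big[\textstyle\sup_{s\in[t,T]}\ln(V_s\vee1)\,\big|\,\mathcal{F}_t\big]\Big),
\]
which is bounded by $C_{t,T}(\lambda-\lambda\ln\lambda+\lambda\ln n)$. The only delicate point is verifying the elementary inequality on $\Phi$ over the range $x\le 1$, including negative $x$, where $f^n=0$ and $\lambda\Phi(x)\in(-\infty,0]$. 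For very negative $x$, $|\Phi(x)|\approx\ln|x|$. So that case also needs the $\ln(|x|\vee 1)$ term, with $|x|$ controlled via $V^{\lambda,n}\le V+\lambda\ln(n)T$ from \eqref{AmericanUpper}, again producing the $\ln(V\vee1)$ contribution.
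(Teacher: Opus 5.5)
Your architecture is sound and differs from the paper's: one linearization of the BSDE for $\delta=V^n-V^{\lambda,n}$, then the pointwise estimate of Lemma~\ref{lemma1.1} with $c=\varepsilon=\lambda/n$. As written, however, it does not give the stated constant $C_{t,T}$.

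The problem is where you linearize. Because you linearize $f^n$, the driver discrepancy is evaluated at $V^{\lambda,n}_s$, so the logarithmic term is $\lambda\ln(|P_s-V^{\lambda,n}_s|\vee 1)$ (with $|\cdot|$ rather than $(\cdot)^+$, as you notice at the end). On $\{V^{\lambda,n}_s>P_s\}$ the only control you invoke is \eqref{AmericanUpper}, i.e.\ $V^{\lambda,n}_s-P_s\le V_s+\lambda(T-s)\ln n$. This produces $\lambda\ln((V_s+\lambda(T-s)\ln n)\vee1)$, which exceeds $\lambda\ln(V_s\vee 1)$ by up to $\lambda\ln(1+\lambda(T-s)\ln n)$. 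That excess sits inside a logarithm and grows with the horizon, so it is not ``absorbed into the $\lambda\ln n$ terms'', and doubling constants does not cover it. Take $P\le1$ (so the conditional expectation in $C_{t,T}$ vanishes), $\lambda=1$ and $n=2$. The slack between your main terms and $C_{t,T}(\lambda-\lambda\ln\lambda+\lambda\ln n)$ is then a fixed multiple of $T-t$. But $\int_t^T\ln(1+(T-s)\ln 2)\,ds$ is not bounded by any fixed multiple of $T-t$ as $T-t$ grows.

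Your lower bound $V^{\lambda,n}_s\ge-\lambda s\ln n$ creates the same defect on $\{V^{\lambda,n}_s<P_s\}$. There the repair is immediate, since in fact $V^{\lambda,n}\ge0$: in the representation preceding \eqref{AmericanUpper}, $\widehat P_u\ge\lambda t\ln n$ for $u\ge t$ and the weights sum to one.

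The simplest fix is to linearize the other way:
$f^n(V^n_s)-f^{\lambda,n}(V^{\lambda,n}_s)=[f^n(V^n_s)-f^{\lambda,n}(V^n_s)]+[f^{\lambda,n}(V^n_s)-f^{\lambda,n}(V^{\lambda,n}_s)]$,
where the last bracket is $-\beta_s\delta_s$ with $\beta_s\in[0,n]$ by Lemma~\ref{lipphi}. The discrepancy is then frozen at $V^n_s$, and $0\le V^n\le V$, $0\le P\le V$ give $|P_s-V^n_s|\le V_s$ directly. Lemma~\ref{lemma1.1} together with $|a-b|\le a+b$ then gives $|\delta_t|\le (T-t)\lambda(1-\ln(1-e^{-1})-\ln\lambda+2\ln n)+\lambda\,\mathbb{E}[\int_t^T\ln(V_s\vee1)\,ds\,|\,\cF_t]$, which is below $C_{t,T}(\lambda-\lambda\ln\lambda+\lambda\ln n)$.

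This is in substance the paper's proof, which achieves the same freezing at $V^n$ through the auxiliary $\wt V^{\lambda,n}$ (the driver without $\lambda\ln n$). Comparison gives $\wt V^{\lambda,n}\le V^n\le V$ and $\wt V^{\lambda,n}\le V^{\lambda,n}$. Monotonicity of $\Phi$ then bounds $V^n-\wt V^{\lambda,n}$ by the driver gap at $V^n$, and $0\le V^{\lambda,n}-\wt V^{\lambda,n}\le\lambda(T-t)\ln n$ is handled separately.

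Alternatively, you can keep your linearization if you replace \eqref{AmericanUpper} by the sharper bound $V^{\lambda,n}\le V+\lambda$. The driver of \eqref{Vln} equals $\lambda\Phi_n((P_s-y)/\lambda)$, whose root $P_s-\lambda\Phi_n^{-1}(0)$ lies in $[P_s,P_s+\lambda]$ by Lemma~\ref{lem:phin_root}. So the driver is nonpositive along $V+\lambda$, which is therefore a supersolution. Then $\ln((V_s+\lambda)\vee1)\le\ln(V_s\vee1)+\lambda$, and this extra $\lambda$ is absorbed by the slack in $C_{t,T}$.
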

\begin{proof}
	
	\noindent \emph{Step 1.} By taking the difference between $V^n$ and $\wt V^{\lambda,n}$, we obtain
	\begin{equation*}
		V^{n}_t - \wt V^{\lambda, n}_t = \int_{]t,T]}d(M^{\lambda, n} - M^{n})_{s} + \int_{[t,T[} n \left[(P_{s} - V_{s}^{n})^{+} - \frac{\lambda}{n}\Phi\left(\frac{P_{s} - \wt V_{s}^{\lambda, n}}{\lambda / n}\right)\right] ds.
	\end{equation*}
	Taking $c = \epsilon = \lambda/n$ in Lemma \ref{lemma1.1} and using \eqref{first:simple:comparison:bsde:I}, we get
	\begin{align*}
		V^{n}_t  &- \wt V^{\lambda, n}_t  
		\leq \mathbb{E}\left[ \int_{[t,T[} n \left[(P_{s} - V_{s}^{n})^{+} - \frac{\lambda}{n}\Phi\left(\frac{P_{s} - V_{s}^{n}}{\lambda / n}\right)\right] ds \right] \\
		& \leq \mathbb{E}\left[ \int_{[t,T[}  \left[\lambda - \lambda\ln(1-e^{-1})+ \lambda\ln(|P_{s} - V^n_{s}|\vee 1) - \lambda\ln(\lambda) + \lambda\ln(n) \right] \Bigg|\cF_t \right]\\
		& \leq (T-t)\Big(1- \ln(1-e^{-1}) + \mathbb{E}\Big[\sup_{s \in [t,T]} \ln( V_s\vee 1) \Big|\,\cF_t\Big]\Big) \big(\lambda - \lambda\ln(\lambda) + \lambda \ln(n) \big). 
	\end{align*}
	
	\emph{Step 2.} On the other hand, from the comparison theorem for BSDEs, see again Theorem 3.4 in \cite{OZ}, we know that $\wt V^{\lambda,n} \leq V^{\lambda,n}$. Hence, since $x \mapsto \Phi(c^{-1}(P-x))$ is non-increasing, we obtain
	\begin{align*}
		V^{\lambda,n}_t - \wt V^{\lambda, n}_t  
		& = \mathbb{E}\left[\int^T_t \lambda\left[ \Phi\left(\frac{P_s- V^{\lambda,n}_{s}}{\lambda/n}\right) - \Phi\left(\frac{P_s- \wt V^{\lambda,n}_{s}}{\lambda/n}\right) + \ln(n) \right]ds  \,\Bigg|\,\cF_t \right]\\
		& \leq \lambda\ln(n) (T-t).
	\end{align*}
	Combining the previous bounds yields
	\begin{align*}
		& |V^n_t - V^{\lambda,n}_t|  \leq |V^n_t - \wt V^{\lambda,n}_t| + |\wt V^{\lambda,n}_t - V^{\lambda,n}_t|\\
		& \leq (T-t)\Big(1- \ln(1-e^{-1}) + \mathbb{E}\Big[\sup_{s \in [t,T]} \ln( V_s\vee 1) \Big|\cF_t\Big]\Big) \big(\lambda - \lambda\ln(\lambda) + 2\lambda \ln(n)\big).
	\end{align*} 
\end{proof}

Finally, classical arguments (see Steps~2--3 of Theorem~4.1 in \cite{HO2015}) ensure that $V^n\to V$ as $n\to\infty$. Theorem~\ref{t3.1} implies that
if $P$ is bounded and $n=1/\lambda$, then $$|V^\frac{1}{\lambda}_t - V^{\lambda,\frac{1}{\lambda}}_t| \leq C(\lambda - \lambda \ln \lambda).$$

To obtain an explicit convergence rate towards $V$, estimates of the classical penalization error $V-V^n$ are required. Such results typically require additional structure on the payoff process $P$ and the filtration $\mathbb{F}$. In the Brownian setting, one may impose the following assumption (see El Karoui \emph{et al.} \ \cite{EKPPQ1997} and Gobet and Wang\ \cite{GWX2023}): 

\begin{hypothesis}\label{hypP2}
	The filtration $\FF$ is a Brownian filtration, and the payoff process $P$ admits a generalized semimartingale decomposition as  
	\begin{equation*}
		P_{t} = P_{0} + \int_{0}^{t}U_{s}ds + \int_{0}^{t}V_{s}dW_{s} + H_{t},
	\end{equation*}
	
	\noindent where $U, V \in \cS^{2}$, and $H$ is continuous, non-decreasing, with $H_{T} \in L^{2}$ and $H_{0} = 0$.
\end{hypothesis}

\begin{corollary}\label{scheme1rate}
	Assume that $P$ is bounded and that Assumption \ref{hypP2} holds. If $$\overline{\kappa}_{\infty} := \esssup_{(t,\omega) \in [0,T]\times \Omega} U_{t}^{-}(\omega) <\infty,$$ then we have
	\begin{gather*}
		\sup_{0\leq t \leq T}|V_t - V^{\lambda, \frac{1}{\lambda}}_t| \leq C\overline{\kappa}_\infty (\lambda - \lambda \ln \lambda), \quad a.s.
	\end{gather*}
\end{corollary}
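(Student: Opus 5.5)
The plan is to split the error through the classical penalized process with $n=1/\lambda$:
\[
|V_t - V^{\lambda,\frac1\lambda}_t| \le |V_t - V^{\frac1\lambda}_t| + |V^{\frac1\lambda}_t - V^{\lambda,\frac1\lambda}_t|,
\]
and bound each piece uniformly in $t$.

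\textbf{The entropic term.} This comes from Theorem~\ref{t3.1} with $n=1/\lambda$. Then $\lambda\ln(n) = -\lambda\ln\lambda$, so the bracket becomes $\lambda - 2\lambda\ln\lambda$, which is at most $2(\lambda-\lambda\ln\lambda)$ for $\lambda\in(0,1]$. Since $P$ is bounded, say $0\le P\le K$, we have $0\le V\le K$ from \eqref{Vstop}. Hence $\mathbb{E}[\sup_s \ln(V_s\vee1)\mid\cF_t]\le \ln(K\vee1)$, and $C_{t,T}\le 2T(1-\ln(1-e^{-1})+\ln(K\vee 1))$ is a deterministic constant.

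Theorem~\ref{t3.1} holds a.s. for each fixed $t$. To get the supremum over $t$, I will apply it at rational times and pass to the limit using the càdlàg property of $V^n$ and $V^{\lambda,n}$. Both are càdlàg, being solutions of BSDEs driven by càdlàg martingales.

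\textbf{The penalization term.} I need $0\le V_t - V^n_t\le C\,\overline{\kappa}_\infty/n$, uniformly in $t$ a.s. The lower bound is \eqref{first:simple:comparison:bsde:I}. For the upper bound, I would use the Gobet--Wang argument under Assumption~\ref{hypP2}, which I expect to be the main obstacle. I would take it as known from \cite{GWX2023} (or prove it as follows).

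Let $Y^n := V^n$ and compare with $P$. Set $D^n := P - V^n$. Itô's formula, using the decomposition of $P$ in Assumption~\ref{hypP2}, gives
\[
dD^n_t = \big(U_t - n (D^n_t)^+\big)dt + dH_t + d(\text{martingale}).
\]
Applying Tanaka/Itô to $(D^n)^+$, or more robustly to the explicit representation
\[
V^n_t = \mathbb{E}\Big[P_Te^{-n\int_t^T\mathbf 1_{\{D^n>0\}}} + \int_t^T nP_s\mathbf 1_{\{D^n_s>0\}}e^{-n\int_t^s \mathbf 1_{\{D^n>0\}}}ds\,\Big|\,\cF_t\Big],
\]
one shows $(P-V^n)^+\le \overline\kappa_\infty/n$. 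The heuristic is that on the set where $V^n<P$, the drift $U\ge -\overline\kappa_\infty$ pushes $P$ down at rate at most $\overline\kappa_\infty$, while $V^n$ is pulled up at rate $n(P-V^n)$. The increasing part $H$ only helps, since it raises the value.

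Then $V^n + \overline\kappa_\infty/n$ dominates the obstacle, up to the error of the exponential penalty. The standard route goes through the RBSDE \eqref{VRBSDE} and the identity
\[
V_t - V^n_t = \mathbb{E}\Big[\int_t^T e^{-n\int_t^s \mathbf 1_{\{D^n>0\}}} \big(dA_s - n(D^n_s)^+ds\big)\cdots\Big].
\]
Combined with $dA_s\le U_s^-\mathbf 1_{\{V_s=P_s\}}ds$, which holds under Assumption~\ref{hypP2} in the Brownian case (El Karoui et al.), this yields $V_t-V^n_t\le \overline\kappa_\infty/n$.

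\textbf{Conclusion.} With $n=1/\lambda$, the penalization term is at most $\overline\kappa_\infty\lambda\le \overline\kappa_\infty(\lambda-\lambda\ln\lambda)$. The entropic term is at most $C(\lambda-\lambda\ln\lambda)$, which I would absorb into the form $C\overline\kappa_\infty(\lambda-\lambda\ln\lambda)$ by enlarging $C$, assuming $\overline\kappa_\infty$ is bounded away from zero. Otherwise the constant should be read as $C(1+\overline\kappa_\infty)$. Adding the two bounds gives the claim. The delicate step is the rate $1/n$ for the classical penalization error, uniformly in time. This is the step where the bound $dA\le U^-dt$ on the reflection is essential, and I would rely on \cite{EKPPQ1997,GWX2023} for it.
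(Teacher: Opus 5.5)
Your proposal follows the paper's proof. You split the error through the classical penalized process $V^{1/\lambda}$. You bound $|V^{1/\lambda}-V^{\lambda,1/\lambda}|$ by Theorem~\ref{t3.1} with $n=1/\lambda$, so that $\lambda\ln n=-\lambda\ln\lambda$ and $C_{t,T}$ is deterministic because $P$ is bounded. You take $0\le V_t-V^n_t\le \overline{\kappa}_\infty/n$ from Theorem~3.5 of \cite{GWX2023}, which is exactly what the paper cites, so your heuristic sketch of that estimate is not needed.

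Your extra care about the supremum over $t$ is correct and fills a detail the paper skips. Your remark that the combination really gives $\overline{\kappa}_\infty\lambda + C(\lambda-\lambda\ln\lambda)\le C(1+\overline{\kappa}_\infty)(\lambda-\lambda\ln\lambda)$ is also correct. It points to a genuine imprecision in the stated form: take $P$ constant and $\lambda<1$, so $\overline{\kappa}_\infty=0$. The entropic error $V^{\lambda,1/\lambda}_t-V_t>0$ does not vanish, so the bound $C\overline{\kappa}_\infty(\lambda-\lambda\ln\lambda)=0$ cannot hold.
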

\begin{proof}
	By Theorem 3.5 (and Remarks 3.6-3.7) in \cite{GWX2023}, 
	$
	0\leq V_t - V^n_t \leq \overline{\kappa}_\infty n^{-1}.
	$
	Choosing $n=1/\lambda$ and combining with Theorem \ref{t3.1} yields the result.
\end{proof}

\subsection{Policy Improvement Algorithm}
In this subsection, we present a PIA for computing $V^{\lambda, n}$ and analyze its convergence. Policy improvement methods are classical tools in stochastic control, providing a constructive iterative procedure that alternates between policy optimization and value evaluation, often leading to fast and numerically stable convergence.

Fix $\lambda\in(0,1]$ and let $\pi=(\pi_s)_{s\in[0,T]}\in\Pi_n$ be a conditional density. Define
\begin{align*}
	G(s, x, \pi_{s}) & := \int^n_0 \Big\{(P_{s} - x)u \pi_{s}(u) - \lambda \pi_{s}(u)  \ln(\pi_{s}(u) )\Big\}du,
\end{align*}
and denote by
\begin{equation} \label{def: pia_pi_star}
	\pi^{*}_{s}(x,u) : = \textrm{argmax}_{\pi} G(s, x, \pi_{s}(u)) = \frac{\frac{1}{\lambda}(P_{s}-x)}{e^{\frac{n}{\lambda}(P_{s}-x)}-1} e^{\frac{1}{\lambda}(P_{s}-x)u},\quad  u \in [0,n],
\end{equation}
\noindent the optimal policy associated with the state $x$.

Motivated by Corollary~\ref{scheme1rate}, we fix $n:=1/\lambda$ and set
$
\mathscr{V}^\lambda : = V^{\lambda, \frac{1}{\lambda}}$. Let $\mathscr{V}^{\lambda,0}$ be an initial guess, e.g. 
$$
\mathscr{V}^{\lambda,0}_t := \mathbb{E}[P_T |\mathcal{F}_t], \quad t \in [0,T].
$$ 

Given $\mathscr{V}^{\lambda,m}$, the $(m+1)$-th iteration consists of:

\medskip
\noindent\emph{Policy update.}
\begin{equation}\label{eq: pin2}
	\pi^{m+1}_s(u)
	:= \pi_s^*(\mathscr{V}^{\lambda,m}_s,u)
	= \frac{\frac{1}{\lambda}(P_s-\mathscr{V}^{\lambda,m}_s)}
	{e^{\frac{n}{\lambda}(P_s-\mathscr{V}^{\lambda,m}_s)}-1}
	e^{\frac{1}{\lambda}(P_s-\mathscr{V}^{\lambda,m}_s)u},
	\qquad u\in[0,n].
\end{equation}

\noindent\emph{Policy evaluation.}
\begin{equation}\label{eq: pin3}
	\mathscr{V}^{\lambda,m+1}_t
	= P_T - (\mathscr{N}^{\lambda,m+1}_T-\mathscr{N}^{\lambda,m+1}_t)
	+ \int_t^T G(s,\mathscr{V}^{\lambda,m+1}_s,\pi^{m+1}_s)\,ds,
\end{equation}
where
\begin{equation}\label{eq: Gpi_def}
	G(s,\mathscr{V}^{\lambda,m+1}_s,\pi^{m+1}_s)
	= \lambda\Phi\!\left(\frac{P_s-\mathscr{V}^{\lambda,m}_s}{\lambda/n}\right)
	+ \lambda\ln n
	+ (\mathscr{V}^{\lambda,m}_s-\mathscr{V}^{\lambda,m+1}_s)\,
	\mu_{\pi^{m+1}_s},
\end{equation}
and
\begin{equation}\label{mu}
	\mu_{\pi^{m+1}_s}
	= \int_0^n u\,\pi^{m+1}_s(u)\,du
	= \frac{n}{1-e^{-\alpha_s^m n}}-\frac{1}{\alpha_s^m},
	\qquad
	\alpha_s^m := \frac{P_s-\mathscr{V}^{\lambda,m}_s}{\lambda}.
\end{equation}

Each iteration requires solving
the linear BSDE \eqref{eq: pin3}, and computing $\pi^{m+1}\in\Pi_n$ defined by \eqref{eq: pin2}. Moreover, since
\[
G(s,\mathscr{V}^{\lambda,m}_s,\pi^{m+1}_s)
= G(s,\mathscr{V}^{\lambda,m}_s,\pi_s^*(\mathscr{V}^{\lambda,m}_s))
\ge G(s,\mathscr{V}^{\lambda,m}_s,\pi^m_s),
\]
the comparison theorem for BSDEs (Theorem~3.4 in \cite{OZ}) implies that for any integer $m$ and any $t\in [0,T]$
\[
\mathscr{V}^{\lambda,m+1}_t\ge \mathscr{V}^{\lambda, m}_t,
\quad a.s.
\]
so that the sequence $(\mathscr{V}^{\lambda,m}_t)_{m\geq 1}$ is monotonically non-decreasing.

\begin{remark}
	If the term $\lambda\ln n$ is removed from the driver \eqref{eq: Gpi_def}, the same construction yields a PIA for $\widetilde V^{\lambda,1/\lambda}$, with $\widetilde V^{\lambda,1/\lambda}_t\le V_t$, $t\in [0,T]$.
\end{remark} 

\begin{theorem}  \label{thm: policy_convergence}
	For any fixed $\lambda \geq 0$ and $t \in [0,T)$, it holds
	\begin{gather*}
		0\leq \mathscr{V}_{t}^{\lambda} - \mathscr{V}_{t}^{\lambda,m} \leq \frac{(nT)^m}{m!} \mathbb{E}[\sup_{s\leq T} (\mathscr{V}^{\lambda, 1}_s - \mathscr{V}^{\lambda, 0}_s)\,|\, \cF_t], \quad a.s.
	\end{gather*}
	where $n = 1/\lambda$ is the truncation parameter.
\end{theorem}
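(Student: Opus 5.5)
The plan is to compare successive iterates of the policy improvement algorithm through linear BSDEs, use convexity of the driver to obtain a one-step contraction in integrated form, and then iterate. Write $n=1/\lambda$ and
\[
f_s(x):=\lambda\Phi\Big(\frac{P_s-x}{\lambda/n}\Big)+\lambda\ln n .
\]
By \eqref{mu} and a direct differentiation, $\partial_x f_s(x)=-\mu_s(x)$, where $\mu_s(x)=\int_0^n u\,\pi^*_s(x,u)\,du\in[0,n]$. Since $\Phi$ is convex (Lemma~\ref{lipphi}), $f_s$ is convex and non-increasing in $x$, and $\mu_s$ is non-increasing in $x$. The driver \eqref{eq: Gpi_def} is exactly the tangent line of $f_s$ at $\mathscr V^{\lambda,m}_s$:
\[
G(s,x,\pi^{m+1}_s)=f_s(\mathscr V^{\lambda,m}_s)-\mu^{m+1}_s\big(x-\mathscr V^{\lambda,m}_s\big),\qquad \mu^{m+1}_s:=\mu_s(\mathscr V^{\lambda,m}_s).
\]
Hence $G(s,x,\pi^{m+1}_s)\le f_s(x)$ for every $x$, with equality at $x=\mathscr V^{\lambda,m}_s$. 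The comparison theorem (Theorem~3.4 in \cite{OZ}) then gives $\mathscr V^{\lambda,m}\le\mathscr V^{\lambda,m+1}\le\mathscr V^{\lambda}$, which is the lower bound $0\le \mathscr V^\lambda-\mathscr V^{\lambda,m}$.

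\emph{Step 1 (one-step estimate for the increments).} Set $\delta^m:=\mathscr V^{\lambda,m+1}-\mathscr V^{\lambda,m}\ge0$. Subtracting the policy-evaluation BSDEs \eqref{eq: pin3} at ranks $m+1$ and $m$, the driver of $\delta^m$ is
\[
-\mu^{m+1}_s\delta^m_s+\mu^m_s\delta^{m-1}_s+\big[f_s(\mathscr V^{\lambda,m}_s)-f_s(\mathscr V^{\lambda,m-1}_s)-\mu^{m}_s(\mathscr V^{\lambda,m}_s-\mathscr V^{\lambda,m-1}_s)\big].
\]
I will check this against \eqref{eq: Gpi_def}. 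Because $\mathscr V^{\lambda,m}\ge \mathscr V^{\lambda,m-1}$ and $\mu$ is non-increasing, the mean value theorem gives $f_s(\mathscr V^{\lambda,m}_s)-f_s(\mathscr V^{\lambda,m-1}_s)=-\tilde\mu_s\,\delta^{m-1}_s$ with $\mu^{m+1}_s\le\tilde\mu_s\le\mu^m_s$. So the bracket equals $-\tilde\mu_s\delta^{m-1}_s-\mu^m_s\delta^{m-1}_s$, and the whole driver is bounded above by $-\mu^{m+1}_s\delta^m_s+\mu^m_s\delta^{m-1}_s-\tilde\mu_s\delta^{m-1}_s$. If the bookkeeping turns out to leave a term $(\mu^m_s-\tilde\mu_s)\delta^{m-1}_s$ instead, it is still bounded by $n\,\delta^{m-1}_s$. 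In either case I obtain
\[
\delta^m_t\le \mathbb E\Big[\int_t^T n\,\delta^{m-1}_s\,ds\,\Big|\,\cF_t\Big],
\]
after solving the linear BSDE with the discount $e^{-\int\mu^{m+1}}\le 1$ and using $\delta^{m-1}\ge0$.

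\emph{Step 2 (iteration).} By induction on $m$, using the tower property and Fubini on the simplex $t<s_1<\dots<s_m<T$,
\[
\delta^m_t\le \frac{(n(T-t))^m}{m!}\,\mathbb E\Big[\sup_{s\le T}\delta^0_s\,\Big|\,\cF_t\Big],
\]
where $\delta^0=\mathscr V^{\lambda,1}-\mathscr V^{\lambda,0}$ is exactly the quantity appearing in the statement.

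\emph{Step 3 (from increments to the error), the main obstacle.} I need $\mathscr V^{\lambda,m}\to\mathscr V^\lambda$; this follows from the geometric-type bounds of Step 2, monotone convergence and stability of Lipschitz BSDEs. Then $\mathscr V^\lambda-\mathscr V^{\lambda,m}=\sum_{k\ge m}\delta^k$. Summing the Step 2 bounds naively gives $\frac{(nT)^m}{m!}e^{nT}$ rather than the stated $\frac{(nT)^m}{m!}$, so the difficulty is absorbing this tail.

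To handle it, I would run the same convexity argument directly on $D^m:=\mathscr V^\lambda-\mathscr V^{\lambda,m}$. Its driver is $-\mu^{m+1}_sD^{m+1}_s+(\mu^{m+1}_s-\hat\mu_s)D^m_s$ with $0\le\mu^{m+1}_s-\hat\mu_s$, which yields $D^{m+1}_t\le\mathbb E[\int_t^T(\mu^{m+1}_s-\hat\mu_s)D^m_s\,ds\mid\cF_t]$. The goal is then to show that the tangent-line error $(\mu^{m+1}-\hat\mu)D^m$ is dominated, pathwise, by $n$ times the already-realised increment, exploiting that it vanishes to second order in $D^m$. This would let the factor $(nT)^m/m!$ be carried with $\delta^0$ in place of $D^0$.

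If this sharper pathwise domination fails, I would state the result with the harmless extra factor $e^{nT}$, or equivalently with $\sup_s D^0_s$, since the factorial decay, which is the substance of the theorem, is unaffected.
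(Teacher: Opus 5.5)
\textbf{Verdict: genuine gap.} Your Steps~1--2 are sound in substance and reproduce the paper's increment bound. However, the proposal does not prove the stated inequality. Step~3 only delivers it with an extra factor $e^{nT}$, or with $\sup_s D^0_s$ in place of $\sup_s(\mathscr{V}^{\lambda,1}_s-\mathscr{V}^{\lambda,0}_s)$, and the ``pathwise domination'' you would need is left open.

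The obstacle comes from rewriting $\delta^m=D^m-D^{m+1}$ and absorbing $-\mu^{m+1}D^{m+1}$ into a discount. The missing idea is to keep the driver of $D^{m+1}=\mathscr{V}^{\lambda}-\mathscr{V}^{\lambda,m+1}$ in its original form and bound it by the \emph{previous increment}:
\[
D^{m+1}_t=\mathbb{E}\Big[\int_t^T\big(f_s(\mathscr{V}^{\lambda}_s)-f_s(\mathscr{V}^{\lambda,m}_s)+\mu^{m+1}_s\,\delta^m_s\big)\,ds\,\Big|\,\cF_t\Big]\le n\,\mathbb{E}\Big[\int_t^T\delta^m_s\,ds\,\Big|\,\cF_t\Big].
\]
This holds because $f_s$ is non-increasing and $\mathscr{V}^\lambda\ge\mathscr{V}^{\lambda,m}$, so the first difference is non-positive, and $0\le\mu^{m+1}_s\le n$. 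Plugging in your Step~2 bound $\delta^m_s\le\frac{(n(T-s))^m}{m!}\mathbb{E}[\sup_u\delta^0_u\mid\cF_s]$ and using the tower property gives $D^{m+1}_t\le\frac{(n(T-t))^{m+1}}{(m+1)!}\mathbb{E}[\sup_u\delta^0_u\mid\cF_t]$ immediately.

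This is exactly the paper's argument: the error at rank $m+1$ is dominated by $n$ times the integrated increment at rank $m$. So no tail summation, no second-order (Newton-type) analysis, and no separate proof that $\mathscr{V}^{\lambda,m}\to\mathscr{V}^\lambda$ are needed; convergence follows from the bound.

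\textbf{Two smaller remarks.} First, in Step~1 your bracket subtracts $\mu^m_s\delta^{m-1}_s$ a second time. The driver of $\delta^m$ is $f_s(\mathscr{V}^{\lambda,m}_s)-f_s(\mathscr{V}^{\lambda,m-1}_s)-\mu^{m+1}_s\delta^m_s+\mu^m_s\delta^{m-1}_s$, so your fallback term $(\mu^m_s-\tilde\mu_s)\delta^{m-1}_s$ is the correct one. The paper does it more simply: it drops $f_s(\mathscr{V}^{\lambda,m}_s)-f_s(\mathscr{V}^{\lambda,m-1}_s)\le0$ and $-\mu^{m+1}_s\delta^m_s\le0$ to get a driver $\le n\,\delta^{m-1}_s$, with no mean-value or discounting step.

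Second, your tangent-line comparison yields $\mathscr{V}^{\lambda,m}\le\mathscr{V}^{\lambda,m+1}$ only for $m\ge1$. The initial guess $\mathscr{V}^{\lambda,0}=\mathbb{E}[P_T\mid\cF_\cdot]$ has zero driver, so $\delta^0_t=\mathbb{E}[\int_t^T e^{-\int_t^s\mu^1_u du}f_s(\mathscr{V}^{\lambda,0}_s)\,ds\mid\cF_t]$. This can be negative, since $f_s<0$ on $[P_s+\lambda,\infty)$. The paper also takes $\delta^0\ge0$ and $\mathscr{V}^{\lambda}\ge\mathscr{V}^{\lambda,0}$ for granted. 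The statement should therefore be read for $m\ge1$, or under an initialization with $f_s(\mathscr{V}^{\lambda,0}_s)\ge0$.
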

\begin{proof}
	Fix $t \in [0,T)$ and consider
	\begin{align} \label{eq: policy_convergence_eq1}
		\mathscr{V}_{t}^{\lambda, m+1} - \mathscr{V}_{t}^{\lambda,m} 
		& = -(\mathscr{N}_{T}^{\lambda,m+1} - \mathscr{N}_{T}^{\lambda,m}) + (\mathscr{N}_{t}^{\lambda, m+1} - \mathscr{N}_{t}^{\lambda,m}) \nonumber\\
		& \quad + \int^T_t \left \{ G(s, \mathscr{V}_{s}^{\lambda, m+1}, \pi_{s}^{m+1}) - G(s, \mathscr{V}_{s}^{\lambda, m}, \pi_{s}^{m})  \right\}ds.
	\end{align}
	Since $\mathscr{V}^{\lambda,m+1} \geq \mathscr{V}^{\lambda,m}$ and $ \Phi$ is non-decreasing,
	\begin{align} \label{eq: policy_convergence_eq2}
		& G(s, \mathscr{V}_{s}^{\lambda, m+1}, \pi_{s}^{m+1}) - G(s, \mathscr{V}_{s}^{\lambda, m}, \pi_{s}^{m}) \nonumber \\
		& = \lambda \Phi\left(\frac{P_{s} - \mathscr{V}_{s}^{\lambda, m}}{\lambda/n}\right) - \lambda \Phi\left( \frac{P_{s} - \mathscr{V}_{s}^{\lambda, m-1}}{\lambda/n}\right) \\
		& \qquad + (\mathscr{V}_{s}^{\lambda, m} - \mathscr{V}_{s}^{\lambda, m+1}) \mu_{\pi_{s}^{m+1}} - (\mathscr{V}_{s}^{\lambda, m-1} - \mathscr{V}_{s}^{\lambda, m}) \mu_{\pi_{s}^{m}} \nonumber \\
		& \leq n(\mathscr{V}_{s}^{\lambda, m} - \mathscr{V}_{s}^{\lambda, m-1}).
	\end{align}
	
	By the Fubini-Tonelli theorem and iterating, we obtain
	\begin{align*}
		0\leq \mathscr{V}^{\lambda, m+1}_t - \mathscr{V}^{\lambda, m}_t & = \mathbb{E}[\int^T_t \left\{ G(s, \mathscr{V}_{s}^{\lambda, m+1}, \pi_{s}^{m+1}) -  G(s, \mathscr{V}_{s}^{\lambda, m}, \pi_{s}^{m})\right\} \, ds   \,|\, \cF_t]\\
		& \leq n\mathbb{E}[\int^T_t (\mathscr{V}^{\lambda, m}_{t_{m-1}} - \mathscr{V}^{\lambda, m-1}_{t_{m-1}})d{t_{m-1}}   \,|\, \cF_t]\\
		& \leq n^m\mathbb{E}[\int^T_{t}\int^T_{t_{m-1}}\dots \int^T_{t_1} (\mathscr{V}^{\lambda, 1}_{t_0} - \mathscr{V}^{\lambda, 0}_{t_0})\, dt_0\dots dt_{m-2}dt_{m-1}  \,|\, \cF_t]\\
		& \leq \frac{(nT)^m}{m!} \mathbb{E}\big[\sup_{s\leq T} (\mathscr{V}^{\lambda, 1}_{s} - \mathscr{V}^{\lambda, 0}_s)\,\big|\, \cF_t\big].
	\end{align*}
	
	To obtain the convergence to $ \mathscr{V}^{\lambda}$, we observe that 
	\begin{align*}
		G(s, \mathscr{V}_{s}^{\lambda}, \pi^*_{s}(\mathscr{V}^\lambda_s))&  - G(s, \mathscr{V}_{s}^{\lambda, m+1}, \pi_{s}^{m+1}) 
		\\
		& = \lambda \Phi\left(\frac{P_{s} - \mathscr{V}_{s}^{\lambda}}{\lambda/n}\right) - \lambda \Phi\left( \frac{P_{s} - \mathscr{V}_{s}^{\lambda, m}}{\lambda/n}\right)   - (\mathscr{V}_{s}^{\lambda, m} - \mathscr{V}_{s}^{\lambda, m+1}) \mu_{\pi_{s}^{m}} \nonumber \\
		& \leq n(\mathscr{V}_{s}^{\lambda, m+1} - \mathscr{V}_{s}^{\lambda, m}).
	\end{align*}
	Hence, from similar computations and the observation that $G(s, \mathscr{V}^{\lambda, m+1}_{s}, \pi_{s}^{m+1}) \leq G(s, \mathscr{V}_{s}^{\lambda, m+1}, \pi^*_s(\mathscr{V}_s^{\lambda, m+1}))$, we conclude that
	\begin{align*}
		0\leq \mathscr{V}^{\lambda}_t - \mathscr{V}^{\lambda, m+1}_t & = \mathbb{E}\Big[\int^T_t \left\{ G(s, \mathscr{V}_{s}^{\lambda}, \pi^*_{s}(\mathscr{V}^\lambda_s)) -  G(s, \mathscr{V}_{s}^{\lambda, m+1}, \pi_{s}^{m+1}) \right\} \, ds   \,\Big|\, \cF_t\Big]\\
		& \leq \frac{(nT)^{m+1}}{(m+1)!} \mathbb{E}\big[\sup_{s\leq T} (\mathscr{V}^{\lambda, 1}_{s} - \mathscr{V}^{\lambda, 0}_s)\,\big|\, \cF_t\big].
	\end{align*}
	Choosing $n = 1/\lambda$ concludes the proof.
\end{proof}

\section{Limit of the Entropy-Regularized Penalization Scheme}\label{ERRBSDE}

This section analyzes the asymptotic behavior of the entropy-regularized
penalization scheme \eqref{AmericanBSDE} as the truncation parameter
$n \to \infty$ for fixed $\lambda \in (0,1]$. Our aims are threefold:
(i) to study the convergence of the scheme and quantify the approximation
error induced by entropy regularization; (ii) to identify the limit as
the value component of a reflected BSDE with a logarithmically singular
generator; and (iii) to give a financial interpretation of this limiting
formulation, linking entropy regularization to endogenous default risk
and early-exercise behavior. Although not directly aimed at numerical
pricing, these results provide a connection between our singular
RBSDEs and risk-sensitive optimal stopping problems.

For $n \ge 1$ we introduce the continuous function
\begin{equation}\label{def:Phi:n}
	\Phi_n(x)
	:= \ln\!\left( \frac{e^{nx}-1}{x} \right),
	\qquad x \in \mathbb{R}\setminus\{0\},
\end{equation}
and, by continuity, we extend this definition at the origin by setting $\Phi_n(0) := \ln(n)$, and let $\Phi_n^{-1}$ denote its inverse (see Lemma \ref{lem:prop:phi:n}). Furthermore, we set
\begin{equation}\label{def: phi_lambda_n}
	\Phi_{\lambda, n}(s,x)
	:= \lambda \ln\!\left( \frac{e^{\,n(P_s - x)/\lambda}-1}{(P_s - x)/\lambda} \right)
	= \lambda\, \Phi_n\!\left( \frac{P_s - x}{\lambda} \right)
\end{equation}

\noindent and we observe that
\begin{equation}\label{limit:driver}
	\begin{aligned}
		\lim_{n\rightarrow\infty}\Phi_{\lambda,n}(s,x) = \Phi_{\lambda,\infty}(s,x)
		:= \begin{cases} 
			\lambda \ln\!\left( \frac{\lambda}{x - P_s} \right)\,
			& \quad  x > P_s, \\
			\infty  & \quad  x \le P_s.
		\end{cases}
	\end{aligned}
\end{equation}

The entropy-regularized penalization scheme $V^{\lambda,n}$ in \eqref{Vln} takes the form
\begin{equation}\label{vn}
	V^{\lambda,n}_t
	= P_T
	- \int_t^T dM_s^{\lambda,n}
	+ \int_t^T \Phi_{\lambda,n}(s, V^{\lambda,n}_s)\, ds ,
\end{equation}

\noindent which is also well-posed since $x \mapsto \Phi_{\lambda,n}(s,x)$ is Lipschitz continuous and $P \in \mathcal{S}^2$, see e.g. \cite{OZ}. Furthermore, for $x \in \mathbb{R}\setminus\{0\}$ and $n \ge 1$,
\begin{equation}\label{deriv:phin}
	\frac{d}{dn} [ \Phi_n(x) ]
	= \frac{d}{dn} \Big[ \ln\!\left( \frac{e^{nx}-1}{x} \right) \Big]
	= \frac{x e^{nx}}{e^{nx}-1}
	= \frac{x}{1 - e^{-nx}}
	\ge 0,
\end{equation}
and $\frac{d}{dn} [ \Phi_n(0) ]
= \frac{1}{n}
\ge 0$. Hence, for any $x \in \mathbb{R}$ and any $n \ge 1$, we have $\Phi_{\lambda, n}(s, x) \leq \Phi_{\lambda, n+1}(s, x)$. By the comparison theorem (Theorem~3.4 in~\cite{OZ}), this monotonicity implies that for all integer $n$ and all $t\in [0,T]$
\begin{equation}\label{increasing:sequence:vlambdan}
	V^{\lambda,n}_t \;\le\; V^{\lambda,n+1}_t,
	\qquad \;\text{a.s.}
\end{equation}
Therefore, for all $t\in [0,T]$, the limit
\begin{equation}\label{existence:limit:Vlambda}
	V^{\lambda}_t := \lim_{n\to\infty} V^{\lambda,n}_t
\end{equation}
exists almost surely.


We now construct the limit process $V^\lambda$. The argument follows a
monotone stability method for BSDEs, in the spirit of Peng~\cite{P1999},
adapted to the entropy-regularized driver. Using the monotonicity of the family
$(\Phi_{\lambda,n})_{n\ge1}$ and suitable uniform estimates, we pass to the
limit in the penalization scheme \eqref{Vln}, identifying the monotone limit
$V^\lambda$ as the value component of a reflected BSDE with a logarithmically
singular generator. Under Assumption~\ref{hyp: Fcont}, Theorem~\ref{t4.2}
shows that there exist $M^\lambda \in \mathcal{H}^2$ and
$A^\lambda \in \mathcal{K}^2$ such that
$(V^\lambda, M^\lambda, A^\lambda)$ is the unique solution in
$\mathcal{S}^2 \times \mathcal{H}^2 \times \mathcal{K}^2$ of
\begin{equation}\label{entropyrbsde}
	\begin{aligned}
		& V^\lambda_t = P_T - \int^T_t dM^\lambda_s + \int^T_t  \Phi_{\lambda, \infty}(s,V^\lambda_s) \, ds + A^\lambda_T-A^\lambda_t,\nonumber \\
		& V^\lambda_t \geq P_t \quad \mbox{ for all } t\in [0,T], \quad \mathrm{and} \quad \int^T_0 (V^\lambda_{s} - P_{s}) dA^\lambda_s = 0, 
	\end{aligned}
\end{equation}
We emphasize that the generator $\Phi_{\lambda,\infty}(s,\cdot)$ is \emph{not}
globally Lipschitz on $[P_s,\infty)$ due to its logarithmic singularity as
$x \downarrow P_s$, making \eqref{entropyrbsde} a non-standard reflected
BSDE. Existence is therefore not immediate. However, the generator is
monotone, or one-sided Lipschitz, and uniqueness follows from arguments
similar to those in Theorem 2.1 of Lepeltier \emph{et al.}~\cite{LMX2005}. 
\begin{lemma}\label{uniqueness}
	If $(V^{\lambda}, M^{\lambda}, A^{\lambda}) \in \cS^2\times \cH^2\times \cK^2$ is a solution to the reflected BSDE in \eqref{entropyrbsde}, then it is unique.
\end{lemma}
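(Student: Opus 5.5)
Given two solutions $(V^i,M^i,A^i)$, $i=1,2$, in $\cS^2\times\cH^2\times\cK^2$ of \eqref{entropyrbsde}, I will apply It\^o's formula (for general semimartingales, since $\FF$ is not assumed continuous here) to $(\delta V_t)^2$, where $\delta V=V^1-V^2$, $\delta M=M^1-M^2$, $\delta A=A^1-A^2$. Since $\delta V_T=0$, this gives
\begin{align*}
(\delta V_t)^2 + \sum_{t<s\le T}(\Delta \delta V_s)^2 \le{}& 2\int_t^T \delta V_{s}\big(\Phi_{\lambda,\infty}(s,V^1_s)-\Phi_{\lambda,\infty}(s,V^2_s)\big)ds \\
&+ 2\int_{]t,T]} \delta V_{s-}\,d\delta A_s - 2\int_{]t,T]}\delta V_{s-}\,d\delta M_s ,
\end{align*}
and more precisely the jump sum together with the continuous part of $[\delta M]$ appears on the left. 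The inequality will actually be an identity up to the bracket terms; I only need the sign information.

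First, the generator term. A solution must have $\int_0^T|\Phi_{\lambda,\infty}(s,V^i_s)|ds<\infty$, so $V^i_s>P_s$ for a.e.\ $s$, a.s. On $\{x>P_s\}$ the map $x\mapsto\lambda\ln(\lambda/(x-P_s))$ is decreasing. Hence $(x-y)(\Phi_{\lambda,\infty}(s,x)-\Phi_{\lambda,\infty}(s,y))\le 0$, and the $ds$-integral is nonpositive. This is the one-sided Lipschitz (monotone) property with constant $0$, which is what replaces the missing Lipschitz bound.

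Second, the reflection term, for which I argue as in Lepeltier et al. Write $\delta V_{s-}\,d\delta A_s = (V^1_{s-}-P_{s-})dA^1_s - (V^1_{s-}-P_{s-})dA^2_s - (V^2_{s-}-P_{s-})dA^1_s + (V^2_{s-}-P_{s-})dA^2_s$. The first and last pieces vanish by the Skorokhod conditions, and the middle ones are $\le0$ since $V^i\ge P$ and $A^i$ is increasing. Here I must be careful that the minimality condition in \eqref{entropyrbsde} is stated with $V_s-P_s$ rather than left limits. For continuous $A^\lambda$ (e.g.\ under continuity of the filtration) these coincide, and in general I will use that the predictable jumps of $A$ occur only where $V_{s-}=P_{s-}$, as in the classical RBSDE \eqref{VRBSDE}. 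This matching of the Skorokhod condition with the form produced by It\^o's formula is the main technical point I expect to need care.

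Third, I take expectations. The stochastic integral $\int\delta V_{s-}d\delta M_s$ is a true martingale, by BDG together with $\delta V\in\cS^2$ and $\delta M\in\cH^2$. This yields $\EE[(\delta V_t)^2]+\EE[[\delta M]_T-[\delta M]_t]\le 0$. Hence $\delta V_t=0$ for every $t$, and by right-continuity $V^1$ and $V^2$ are indistinguishable; moreover $[\delta M]_T=0$, so $M^1=M^2$.

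Finally, subtracting the two equations gives $A^1_T-A^1_t=A^2_T-A^2_t$ for all $t$. Since both processes start at $0$, $A^1=A^2$, which proves uniqueness.
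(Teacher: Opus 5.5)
Your argument is the one the paper intends. The paper gives no details and only invokes Theorem 2.1 of Lepeltier \emph{et al.}: It\^o's formula for $(\delta V)^2$, monotonicity of $x\mapsto\Phi_{\lambda,\infty}(s,x)$ on $(P_s,\infty)$, and the Skorokhod conditions to sign the cross terms. Your proof is complete when $A^1,A^2$ are continuous. That is the setting of Lepeltier \emph{et al.}, and it is also what the paper's existence proof produces, since there $A^\lambda$ comes from the Doob--Meyer decomposition of a continuous supermartingale.

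The step that fails is your patch for jumps of $A$, and the gap cannot be closed. At a jump time $\tau$ of $A$, the stated condition $\int_0^T(V_s-P_s)\,dA_s=0$ only forces $V_\tau=P_\tau$. Meanwhile $V_{\tau-}=V_\tau+\Delta A_\tau$ may lie strictly above $P_{\tau-}$, and nothing yields $V_{\tau-}=P_{\tau-}$. Continuity of the martingales does not make $A$ continuous either, contrary to your parenthetical. In fact uniqueness fails in that class, even under Assumption~\ref{hyp: Fcont}. Take the trivial filtration and $\lambda=T=1$. On $[\tfrac12,1]$ set $V_t:=1+\int_t^1\ln\frac{1}{(s-1/2)(1-s)}\,ds$ and $P_t:=V_t-(t-\tfrac12)(1-t)$, and on $[0,\tfrac12)$ set $P_t:=P_{1/2}$; then $P$ is positive, continuous and bounded. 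One solution of \eqref{entropyrbsde} with the literal Skorokhod condition is $A=\I_{[1/2,1]}$ with $V\equiv P_{1/2}+1$ on $[0,\tfrac12)$. Another is $A\equiv0$ with $V-P_{1/2}=E$ on $[0,\tfrac12)$, where $E_t=\int_t^{1/2}\ln(1/E_s)\,ds$ has a solution in $(0,1)$. These differ on $[0,\tfrac12)$. So the lemma must be read for continuous $A$, or with the left-limit condition $\int_0^T(V_{s-}-P_{s-})\,dA_s=0$. In the latter case your splitting of $\delta V_{s-}\,d\delta A_s$ works verbatim. Your Step~3 then still needs $\EE[[\delta V]_T-[\delta V]_t]\ge\EE[[\delta M]_T-[\delta M]_t]$, which you skip when you replace the jump sum by $[\delta M]$. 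This inequality holds because $\sum\Delta\delta M\,\Delta\delta A=\int\Delta\delta A\,d\delta M$ has zero mean, as $\delta A$ is predictable.
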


\begin{remark}\label{remark4.1}
	Let us point our that the RBSDE in \eqref{entropyrbsde} lies outside the scope of the existing frameworks developed by Zheng \cite{Z2024} and Zheng \emph{et al.} \cite{ZZM2024} for singular reflected BSDEs. Unlike the settings considered in \cite{Z2024, ZZM2024}, this formulation lacks a quadratic term accompanying the singular driver $\ln(1/y)$ for $y \ge 0$. The presence of such a term is a crucial assumption in the analysis of the aforementioned works, as their existence results rely on a transformation technique where the quadratic term naturally arises via It\^o's formula and the domination method (see Bahlali \cite{B2019}, Bahlali \emph{et al.} \cite{BEO2017}, and Bahlali and Tangpi \cite{BT2021}). 
	
\end{remark}

As already mentioned, our analysis relies on a monotone stability argument for BSDEs, in the spirit of Peng~\cite{P1999}, adapted to the entropy-regularized structure of the driver. More precisely, we decompose the generator $\Phi_{\lambda,n}$ into its positive and negative parts, denoted by $\Phi^+_{\lambda,n}$ and $\Phi^-_{\lambda,n}$. The positive component plays a role analogous to the penalization term in classical schemes for reflected BSDEs, while the negative component is shown to be uniformly Lipschitz, with a constant independent of $n$.
To handle the logarithmic singularity at the lower barrier $P$ in the limit $n \to \infty$, we additionally introduce a suitable $\varepsilon$-truncation of the driver. 

To help the reader visualize the driver and the $\varepsilon$-truncation of the driver, we refer to Figure \ref{fig:phi1n} and Figure \ref{fig:phitrun} below
\vskip5pt
\begin{figure}[thp]
	\centering
	
	\begin{minipage}[t]{0.48\textwidth}
		\centering
		\begin{tikzpicture}[scale=0.82]
			\begin{axis}[
				legend style={
					at={(0.66, 0.99)},
					anchor=north west,
					legend columns=1,
					column sep=1ex,
					draw=black,
					inner sep=1ex,
					cells={anchor=west}
				},
				xmax=5, ymax=8, samples=65,
				xticklabel=\empty,
				]
				\addplot[black,thick] {ln( (exp(1*(1-x)) -1)/(1-x) )};
				\addplot[purple,thick] {ln( (exp(2*(1-x)) -1)/(1-x) )};
				\addplot[green,thick] {ln( (exp(5*(1-x)) -1)/(1-x) )};
				\addplot[red,thick] {ln( (exp(15*(1-x)) -1)/(1-x) )};
				\addplot[blue,thick] {-ln( x-1 )};
				\legend{$n=1$,$n=2$,$n=5$,$n=15$, $n = \infty$}
				\addplot[black,dotted] {0};
				\addplot[color=blue, thick] coordinates {(1.1, 2.302) (1.01, 4)};
				\addplot[color=blue, thick] coordinates {(1.01, 4) (1, 7)};
				\addplot[color=black, dotted] coordinates {(1, -1.8) (1, 15)};
				\node[] at (axis cs: 1, -2.2) {{\footnotesize $P_s$}};
				\addplot[color=black, dotted] coordinates {(2, -1.8) (2, 15)};
				\node[] at (axis cs: 2.5, -2.2) {{\footnotesize $P_s + \lambda$}};
				\node[] at (axis cs: -1.9, 7) {{ $\Phi_{1,n}(s,x)$}};
			\end{axis}
		\end{tikzpicture}
		\caption{Sketches of $\Phi_{1,n}$ for various $n$ and of $\Phi_{1,\infty}(s,x)
			= -\ln(x-P_s)\,\mathbf{1}_{\{ x > P_s\}} + \infty\,\mathbf{1}_{\{x \leq P_s\}}$.}
		\label{fig:phi1n}
	\end{minipage}
	\hfill
	\begin{minipage}[t]{0.48\textwidth}
		\centering
		\begin{tikzpicture}[scale=0.82]
			\begin{axis}[
				legend style={
					at={(0.63, 0.99)},
					anchor=north west,
					legend columns=1,
					column sep=1ex,
					draw=black,
					inner sep=1ex,
					cells={anchor=west}
				},
				xmax=5, ymax=8, samples=65,
				xticklabel=\empty,
				]
				\addplot[purple,thick] {ln( (exp(100*(1-max(x,1.5))) -1)/(1-max(x,1.5)) )};
				\addplot[green,thick]  {ln( (exp(100*(1-max(x,1.2))) -1)/(1-max(x,1.2)) )};
				\addplot[black,thick]  {ln( (exp(100*(1-max(x,1.01))) -1)/(1-max(x,1.01)) )};
				\legend{$\epsilon=0.5$,$\epsilon=0.2$,$\epsilon=0.01$}
				\addplot[black,dotted] {0};
				\addplot[color=black, dotted] coordinates {(1, -1.8) (1, 15)};
				\node[] at (axis cs: 1, -2.2) {{\footnotesize $P_s$}};
				\addplot[color=black, dotted] coordinates {(2.3, -1.8) (2.3, 15)};
				\node[] at (axis cs: 2.5, -2.2) {{\footnotesize $P_s + \lambda$}};
				\node[] at (axis cs: -2, 7) {{ $\Phi_{\lambda,n}(s,x\vee(P_s + \epsilon))$}};
			\end{axis}
		\end{tikzpicture}
		\caption{Illustrative sketch of the truncated generator $\Phi_{\lambda,n}(s,x\vee (P_s+\epsilon))$ for several values of $\epsilon$. The functions are Lipschitz continuous and increase as $\epsilon \to 0$.}
		\label{fig:phitrun}
	\end{minipage}
	
\end{figure}

\subsection{Auxiliary Lemmas and Estimates}
This subsection collects technical lemmas and quantitative estimates used in
subsections \ref{subsec:4.2} and \ref{sss3.2.3}. They serve as \emph{a priori} bounds for the
entropy-regularized penalization scheme and will be applied repeatedly to
control the generator and its derivatives uniformly in the penalization
parameters. Specfically, we establish auxiliary properties of the generator
$\Phi_{\lambda,n}(s,x)$, together with integrability and growth estimates
for its time-integrated form. These results are key to proving the
asymptotic convergence to the American option value and to identifying
the limit $V^\lambda$ as the solution of a singular reflected BSDE.

For fixed $\lambda\in(0,1]$, the map $x\mapsto\Phi_{\lambda,n}(s,x)$ is
decreasing and has a unique root at $P_s-\lambda\Phi^{-1}_n(0)$.
\begin{lemma} \label{lem:phin_root}
	For all $n \geq 1$, one has $\Phi_{n}^{-1}(0) \in (-1, 0]$ and $\Phi^{-1}_n(0) \downarrow -1$ as $n \rightarrow \infty$.
\end{lemma}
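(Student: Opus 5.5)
We need to locate the root $x_n$ of $\Phi_n(x)=\ln((e^{nx}-1)/x)=0$, i.e. $e^{nx}-1=x$, and show that $x_n\in(-1,0]$ and $x_n\downarrow -1$. The plan is to study $g_n(x):=e^{nx}-1-x$ together with the monotonicity of $\Phi_n$. By \eqref{deriv:phin}, $\Phi_n$ is nondecreasing in $n$ for each fixed $x$. We also use that $\Phi_n$ is strictly increasing in $x$, which is equivalent to the stated fact that $x\mapsto\Phi_{\lambda,n}(s,x)$ is decreasing. This follows from writing $(e^{nx}-1)/x=\int_0^n e^{ux}\,du$, which is strictly increasing and positive, so $\Phi_n$ is a continuous, strictly increasing bijection onto $\mathbb{R}$. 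Hence the root $x_n=\Phi_n^{-1}(0)$ is unique.

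\emph{Localization.} At $x=0$ we have $\Phi_n(0)=\ln n\ge 0$, so $x_n\le 0$, with equality exactly when $n=1$. Indeed, for $n=1$ the root is $0$: the equation $e^x-1=x$ has only the solution $x=0$. At $x=-1$ we get $\Phi_n(-1)=\ln(1-e^{-n})<0$, so $x_n>-1$. Together these give $x_n\in(-1,0]$.

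\emph{Monotonicity in $n$.} Since $\Phi_{n+1}(x)\ge\Phi_n(x)$ for all $x$, we have $\Phi_{n+1}(x_n)\ge 0=\Phi_{n+1}(x_{n+1})$. Strict increase of $\Phi_{n+1}$ in $x$ then gives $x_{n+1}\le x_n$. Alternatively, one can use directly that $\int_0^{n+1}e^{ux}du>\int_0^n e^{ux}du$, so the inequality is even strict.

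\emph{Limit.} The sequence $(x_n)$ is nonincreasing and bounded below by $-1$, so it has a limit $\ell\in[-1,0]$. Since $\ell\le x_2<0$, we have $e^{nx_n}\le e^{nx_2}\to0$. Passing to the limit in $e^{nx_n}-1=x_n$ yields $\ell=-1$.

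The only delicate point is the case distinction at $x=0$, where the formula for $\Phi_n$ is replaced by its continuous extension. This causes no difficulty once we use the integral representation $\int_0^n e^{ux}du$, which is valid for all $x$ and equals $n$ at $0$. Everything else is elementary, so I expect no real obstacle.
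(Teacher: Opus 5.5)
Your proof is correct, but it takes a different route from the paper's. The paper works with the convex function $f(x)=e^{nx}-x-1$. It locates the minimizer $x_*=-\frac{1}{n}\ln n$ and checks $f(x_*)<0<f(-1)=e^{-n}$. The intermediate value theorem then puts the nonzero root in $(-1,x_*)$. Monotonicity in $n$ comes from implicitly differentiating $e^{nx_n}-x_n-1=0$, treating $n$ as a continuous parameter. The limit comes from the explicit bound $|x_n+1|=e^{nx_n}\le e^{nx_*}=1/n$.

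You instead use the representation $(e^{nx}-1)/x=\int_0^n e^{ux}\,du$, which makes $\Phi_n$ strictly increasing in $x$ and in $n$ at the same time. Uniqueness of the root, the localization from the signs of $\Phi_n(0)=\ln n$ and $\Phi_n(-1)=\ln(1-e^{-n})$, and monotonicity of $x_n$ then follow by pure comparison. You never need to single out which of the two zeros of $f$ is relevant, nor to justify differentiability of $n\mapsto x_n$.

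The paper's route buys the sharper localization $x_n<-\frac{1}{n}\ln n$ and an explicit $1/n$ rate. Your argument also gives a rate at no extra cost: $x_n+1=e^{nx_n}\le e^{nx_2}$ with $x_2<0$ shows exponential convergence.

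One small wording slip: the inequality $e^{nx_n}\le e^{nx_2}$ follows from $x_n\le x_2$ for $n\ge 2$ (your monotonicity step), not from $\ell\le x_2$.
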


\begin{proof}
	The case $n=1$ yields $\Phi^{-1}_1(0)=0$.  
	For \(n\ge2\), the root of $\Phi_n$ coincides with the non-zero solution of $f(x) = e^{nx} -x - 1$. Since $f'(x) = ne^{nx} - 1$ and $f''(x) = n^{2}e^{nx}$, the unique minimizer is $x_* = -\frac{1}{n}\ln n \in (-1, 0)$ and $f(x_*) = \frac{1}{n} + \frac{1}{n}\ln n - 1 < 0$. Moreover, \(f(-1)=e^{-n}>0\), and hence the intermediate value theorem 
	implies the existence of a unique root $\Phi_{n}^{-1}(0) \in (-1,x_*)\subset(-1,0)$.
	Differentiating the identity $e^{n\Phi^{-1}_n(0)}-\Phi^{-1}_n(0)-1=0$ with 
	respect to $n$ yields
	$$
	\frac{d}{dn}[\Phi^{-1}_n(0)]
	=\frac{\Phi^{-1}_n(0)e^{n\Phi^{-1}_n(0)}}{1-ne^{n\Phi^{-1}_n(0)}}<0.
	$$
	\noindent Since $|\Phi^{-1}_n(0)+1|=e^{n\Phi^{-1}_n(0)}\le e^{n x_{*}} = \frac{1}{n}$, we conclude that $\lim_{n\rightarrow \infty}\Phi^{-1}_n(0) = -1$.   
\end{proof}

As a direct consequence of Lemma \ref{lem:phin_root}, the root of 
$x\mapsto\Phi_{\lambda,n}(s,x)$ increases to $P_s+\lambda$ as 
$n\to\infty$, and satisfies
\begin{equation}\label{ineq:ps:root:phin}
	P_s \le P_s-\lambda\Phi^{-1}_n(0)\le P_s+\lambda.
\end{equation}

\begin{lemma}\label{lip}
	Fix \(n\ge1\).  
	The derivative $x\mapsto\partial_x\Phi_{\lambda,n}(s,x)$ is negative and 
	increasing.  
	Furthermore, for any $x_*>0$ and all $x\ge P_s+x_*$,
	$$
	\partial_x\Phi_{\lambda,n}(s,x)
	\;\ge\; \left(-\frac{\lambda}{x_*}\right)\vee\left(-\frac n2\right).
	$$
	\begin{proof}
		From \eqref{def: phi_lambda_n} and Lemma \ref{lipphi} 
		\begin{align*}
			\partial_x \Phi_{\lambda, n}(s,x) & = -n \Phi'\left(\frac{P_{s}-x}{\lambda / n } \right) < 0,\\
			\partial^2_{xx} \Phi_{\lambda, n}(s,x) &  = \frac{n^{2}}{\lambda} \Phi''\left(\frac{P_{s}-x}{\lambda/n} \right) > 0.
		\end{align*}
		Hence the minimum over $[P_s+x_*,\infty)$ is attained at $x=P_s+x_*$.  
		Setting $y=nx_*/\lambda$, we have
		\begin{align}\label{first:ineq:partial:x:phi:lambda:n}
			\partial_x \Phi_{\lambda, n}(s, P_{s}+x_{*}) & = -n \Phi'\left(-nx_{*} / \lambda \right) = \frac{\lambda}{x_*}\frac{{{y}e^{-y}} - (1-e^{-y})}{1-e^{-y}} \geq -\frac{\lambda}{x_*},
		\end{align}
		where we use the fact that for $y\geq 0$
		\begin{equation*}
			\frac{{{y}e^{-y}} - (1-e^{-y})}{1-e^{-y}} = \frac{{{y}e^{-y}}}{1-e^{-y}}  - 1 \in [-1,0).
		\end{equation*}
		Using positivity and monotonicity of \(\Phi'\) gives
		$$
		\partial_x \Phi_{\lambda,n}(s,P_s+x_*) = -n \Phi'(-nx_\star/\lambda)
		\ge -n\Phi'(0)=-\frac n2.
		$$
	\end{proof}
\end{lemma}
\begin{corollary}\label{cor3.1}
	For all $x \in [P_s - \lambda\Phi^{-1}_n(0), \infty)$, 
	$$
	|\partial_x\Phi_{\lambda,n}(s,x)|  
	\leq \max\{1,1/|\Phi^{-1}_2(0)|\}.
	$$
\end{corollary}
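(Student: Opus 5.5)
The plan is to reduce the bound to the left endpoint of the interval and then use Lemma~\ref{lip} together with the monotonicity of the root in Lemma~\ref{lem:phin_root}. By Lemma~\ref{lip}, $x\mapsto\partial_x\Phi_{\lambda,n}(s,x)$ is negative and increasing. Hence on $[P_s-\lambda\Phi^{-1}_n(0),\infty)$ the quantity $|\partial_x\Phi_{\lambda,n}(s,x)|$ is maximal at the left endpoint. It therefore suffices to bound $|\partial_x\Phi_{\lambda,n}(s,P_s+x_*)|$ with $x_*:=-\lambda\Phi^{-1}_n(0)$.

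\emph{Case $n=1$.} Here $\Phi_1^{-1}(0)=0$, so the interval is $[P_s,\infty)$ and Lemma~\ref{lip} does not apply directly, since it needs $x_*>0$. I would instead reuse the last step of its proof. For $x\ge P_s$ the argument $(P_s-x)/(\lambda/n)$ is $\le 0$, so positivity and monotonicity of $\Phi'$ give
\[
0< n\Phi'\Big(\tfrac{P_s-x}{\lambda/n}\Big)\le n\Phi'(0)=\tfrac n2 .
\]
Thus $|\partial_x\Phi_{\lambda,1}(s,x)|\le \tfrac12\le 1$.

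\emph{Case $n\ge2$.} By Lemma~\ref{lem:phin_root}, $\Phi^{-1}_n(0)\in(-1,0)$, so $x_*>0$ and Lemma~\ref{lip} applies. For all $x\ge P_s+x_*$ it gives
\[
|\partial_x\Phi_{\lambda,n}(s,x)|\le \frac{\lambda}{x_*}=\frac{1}{|\Phi^{-1}_n(0)|}.
\]
Lemma~\ref{lem:phin_root} also shows that $n\mapsto\Phi^{-1}_n(0)$ is decreasing; its derivative in $n$ is negative. Hence $|\Phi^{-1}_n(0)|\ge|\Phi^{-1}_2(0)|$ for all $n\ge2$, which yields the uniform bound $1/|\Phi^{-1}_2(0)|$.

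Combining the two cases gives the stated bound $\max\{1,1/|\Phi^{-1}_2(0)|\}$. The only delicate point is the degenerate case $n=1$, where the root coincides with $P_s$. It is handled by the bound $n/2$ rather than $\lambda/x_*$. The monotonicity of the root in $n$ holds for real $n\ge 2$ by the derivative formula in Lemma~\ref{lem:phin_root}, so restricting to integers causes no issue.
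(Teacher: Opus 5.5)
Your proposal is correct and follows essentially the same route as the paper. You reduce to the left endpoint using the monotonicity of $\partial_x\Phi_{\lambda,n}$ and the bound $-\lambda/x_*$ from Lemma~\ref{lip}, treat $n=1$ separately via $\Phi'(0)=\tfrac12$, and use that $|\Phi_n^{-1}(0)|$ is nondecreasing in $n\ge 2$ (Lemma~\ref{lem:phin_root}) to get the uniform constant $1/|\Phi_2^{-1}(0)|$.
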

\begin{proof}
	By Lemma \ref{lem:phin_root}, we have $x_n := n\Phi_n^{-1}(0) \in (-n, 0]$.
	The result then follows from \eqref{first:ineq:partial:x:phi:lambda:n} since, for $n=1$, one has $\partial_x \Phi_{\lambda, 1}(s, P_{s} - \lambda x_{1}) = - \Phi'(0) = -\frac{1}{2}$ and $|\Phi^{-1}_n(0)| \leq |\Phi^{-1}_{n+1}(0)|$ for $n\geq 2$.
\end{proof}

We now provide some a priori estimates on \eqref{vn}. To proceed, we set 
\begin{align}
	K^{\lambda,n}_t & := \int^t_0 \Phi_{\lambda,n}(s, V^{\lambda,n}_s)\, ds, \label{Kln}\\
	K^{\lambda,n,\pm}_t & := \int^t_0 \Phi^{\pm}_{\lambda,n}(s, V^{\lambda,n}_s)\, ds.\label{Kpm}
\end{align}

\begin{lemma}\label{VMKestimate}
	Assume that Assumption \ref{A} holds. Let $(V^{\lambda,n}, M^{\lambda,n})\in \cS^2\times \cH^2$ denote the unique solution to \eqref{vn}. Then there exists a constant $C<\infty$ such that, for any $n\geq1$ and any $0\leq \lambda \leq 1$
	\begin{align*}
		\EE[\sup_{0\leq t\leq T} |V^{\lambda,n}_t|^2] + \EE\left[[M^{\lambda,n}]_T\right]  + \EE\big[ |K^{\lambda,n,+}_T|^2\big] + \EE\big[ |K^{\lambda,n,-}_T|^2\big]\leq C.
	\end{align*}
\end{lemma}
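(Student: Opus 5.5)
The bound on $V^{\lambda,n}$ comes from sandwiching it between two processes whose $\cS^2$ norms do not depend on $n$. The other three bounds then follow from It\^o's formula applied to $|V^{\lambda,n}|^2$.

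\emph{Bounds on $V^{\lambda,n}$.} For the lower bound, note that $\Phi_{\lambda,n}\ge\Phi_{\lambda,1}$ by the monotonicity in $n$ established before \eqref{increasing:sequence:vlambdan}. The comparison theorem then gives $V^{\lambda,n}\ge V^{\lambda,1}$. Since $\Phi_{\lambda,1}$ is Lipschitz with constant $1/2$, $V^{\lambda,1}\in\cS^2$, with a norm controlled uniformly in $\lambda\in(0,1]$ because $|\lambda\Phi_1(P_s/\lambda)|\le C(1+|P_s|)$. For the upper bound, use \eqref{AmericanUpper}: $V^{\lambda,n}_t\le V_t+\lambda\ln(n)(T-t)$. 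This depends on $n$, so instead I would compare with the process $\bar V$ solving $\bar V_t=P_T-\int_t^T d\bar M_s+\int_t^T \Phi_{\lambda,\infty}(s,\bar V_s)ds+\bar A_T-\bar A_t$ with reflection at $P+\lambda$. That equation is itself unknown, so I will use a simpler supersolution. Let $\bar V_t:=\esssup_\tau\EE[P_\tau\mid\cF_t]+\lambda+c(T-t)$ with $c$ chosen below. On $\{x\ge P_s+\lambda\}$ the root bound \eqref{ineq:ps:root:phin} gives $\Phi_{\lambda,n}(s,x)\le 0$. Also $V+\lambda\ge P+\lambda$, and $\bar V=V+\lambda+c(T-t)$ is a supermartingale-type supersolution whose drift is nonnegative. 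Hence $\Phi_{\lambda,n}(s,\bar V_s)\le0\le$ drift of $\bar V$, and the comparison theorem yields $V^{\lambda,n}\le V+\lambda+cT$; we may take $c=0$, with the terminal value $P_T\le P_T+\lambda$. Since $V\in\cS^2$ under Assumption \ref{A}, we get $\EE[\sup_t|V^{\lambda,n}_t|^2]\le C$ uniformly in $n$ and $\lambda$.

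\emph{Bound on $K^{\lambda,n,-}$.} On $\{\Phi_{\lambda,n}<0\}$ we have $V^{\lambda,n}_s\ge P_s-\lambda\Phi_n^{-1}(0)$. Corollary \ref{cor3.1} shows that $\Phi_{\lambda,n}(s,\cdot)$ is Lipschitz there with a constant $L$ independent of $n$, and it vanishes at the root. Hence $\Phi^-_{\lambda,n}(s,V^{\lambda,n}_s)\le L|V^{\lambda,n}_s-P_s+\lambda\Phi_n^{-1}(0)|\le L(|V^{\lambda,n}_s|+|P_s|+1)$, using \eqref{ineq:ps:root:phin}. Squaring and integrating then gives the $K^-$ bound from the $V$ bound.

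\emph{Bound on $K^{\lambda,n,+}$.} Write $K^{\lambda,n,+}_T-K^{\lambda,n,+}_0 = V^{\lambda,n}_0-P_T+M^{\lambda,n}_T-M^{\lambda,n}_0+K^{\lambda,n,-}_T$. Squaring and taking expectations controls $\EE|K^+_T|^2$ by the $V$ and $K^-$ bounds plus $\EE[M]_T$. For the martingale part, apply It\^o's formula to $|V^{\lambda,n}|^2$:
\[
\EE|V_0|^2+\EE[M]_T=\EE|P_T|^2+2\EE\int_0^T V_s\,dK^+_s-2\EE\int_0^TV_s\,dK^-_s .
\]
Bound the cross term by $2\EE[\sup|V|\,K^+_T]\le \epsilon^{-1}\EE\sup|V|^2+\epsilon\EE|K^+_T|^2$. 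Combining with the previous estimate $\EE|K^+_T|^2\le C(1+\EE[M]_T)$ and choosing $\epsilon$ small closes the system and yields the uniform bounds on $\EE[M^{\lambda,n}]_T$ and $\EE|K^{\lambda,n,+}_T|^2$. For jumps of $M$ the quadratic variation enters exactly as written, so no continuity of $\FF$ is needed.

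The main obstacle is the uniform upper bound on $V^{\lambda,n}$, because the natural estimate \eqref{AmericanUpper} blows up like $\lambda\ln n$. The fix is the sign information on $\Phi_{\lambda,n}$ above $P+\lambda$ from Lemma \ref{lem:phin_root}, used in a comparison with $V+\lambda$. If a direct comparison with this reflected-type supersolution is awkward, I would instead argue via the hazard representation restricted to the set where $V^{\lambda,n}<P+\lambda$.
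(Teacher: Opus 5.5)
Your proof is correct, but it gets the uniform $\cS^2$ bound by a different mechanism from the paper.

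\textbf{The paper's route.} It works purely with a priori estimates:
\begin{itemize}
\item It\^o's formula for $(V^{\lambda,n})^2$, with the support property $x\Phi^+_{\lambda,n}(s,x)\le(P_s+\lambda)\Phi^+_{\lambda,n}(s,x)$, discarding the $\Phi^-$ term;
\item a bound on $\EE[(K^{\lambda,n,+}_T-K^{\lambda,n,+}_t)^2]$ read off from the equation, using the $n$-uniform Lipschitz constant of Corollary~\ref{cor3.1};
\item Young plus Gr\"onwall for $\sup_t\EE[(V^{\lambda,n}_t)^2]$;
\item finally BDG for $\EE[\sup_t|V^{\lambda,n}_t|^2]$.
\end{itemize}

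\textbf{Your route.} You get the supremum bound pathwise from the sandwich $V^{\lambda,1}\le V^{\lambda,n}\le V+\lambda$. After that, the bounds on $K^-$, $[M]$ and $K^+$ follow from a short Young-inequality loop, with no Gr\"onwall or BDG.

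The upper comparison is sound. By Lemma~\ref{lem:phin_root}, the root of $\Phi_{\lambda,n}(s,\cdot)$ lies in $[P_s,P_s+\lambda)$, so $\Phi_{\lambda,n}(s,V_s+\lambda)\le 0$. Hence $V+\lambda$ is a supersolution of \eqref{vn}: its terminal value is $P_T+\lambda\ge P_T$, and its increasing part is $A-\int_0^\cdot\Phi_{\lambda,n}(s,V_s+\lambda)\,ds$. Because $A$ may jump, you should write out the comparison by linearization rather than cite a comparison theorem for BSDEs without an increasing term. Multiply the difference by $e^{\int_0^t\beta_s\,ds}$ with $|\beta|\le n$; this produces no bracket terms, and the stochastic integral is a true martingale.

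\textbf{What each route buys.} Yours gives a sharper, $n$-uniform pathwise estimate: $V^{\lambda,n}\le V+\lambda$, and hence $V^\lambda\le V+\lambda$. This removes the $\lambda\ln n$ loss in \eqref{AmericanUpper} and matches the recovery payoff $P+\lambda$ in Theorem~\ref{thm:representation}. The paper's route uses only the sign structure of $\Phi^{\pm}_{\lambda,n}$ and never invokes the Snell envelope.

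\textbf{Minor points.}
\begin{itemize}
\item $\Phi_{\lambda,1}$ is Lipschitz with constant $1$, not $1/2$: $\Phi'(0)=1/2$, but $\Phi'\to1$ at $+\infty$. This is harmless.
\item The parameter $c$ is superfluous.
\item The lower comparison with $V^{\lambda,1}$ could be replaced by $V^{\lambda,n}\ge 0$. This follows from the hazard representation preceding \eqref{AmericanUpper}, since $P\ge0$ and $\lambda\ln n\ge 0$.
\end{itemize}
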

\begin{proof}
	
	\noindent \emph{Step 1.} We write $\Phi_{\lambda,n} = \Phi_{\lambda,n}^+ - \Phi_{\lambda,n}^{-}$ and observe by \eqref{ineq:ps:root:phin} that 
	\begin{equation} \label{support:phin:positive:part}
		\left\{ x: \Phi_n\left(\lambda^{-1}(P_s-x)\right) \geq  0 \right\} = \left\{ x: P_s - \lambda\Phi_n^{-1}(0)  \geq x \right\}  \subseteq \left\{ x: P_s + \lambda \geq x \right\},
	\end{equation}
	
	\noindent and similarly, 
	\begin{equation}\label{support:phin:negative:part}
		\left\{ x: \Phi_n\left(\lambda^{-1}(P_s-x)\right) < 0 \right\} = \left\{ x: P_s - \lambda\Phi_n^{-1}(0)  <  x \right\}  \subseteq \left\{ x: P_s < x \right\}.
	\end{equation}
	
	From \eqref{support:phin:positive:part}, the positive part of the generator given by \eqref{def: phi_lambda_n} satisfies for all $x\in \mathbb{R}$
	\begin{align}\label{ineq:positive:part:Philambdan}
		x\Phi^{+}_{\lambda,n}(s, x) \leq (P_s+ \lambda)\Phi^{+}_{\lambda,n}(s, x) , \quad a.s.
	\end{align}
	By \eqref{ineq:positive:part:Philambdan} and Young's inequality, we obtain for any $\alpha>0$:
	\begin{equation}\label{first:estimate:vlambda:n}
		\begin{aligned}
			& (V^{\lambda,n}_t)^2 + \int^T_t d[M^{\lambda,n}]_s\\
			& = P^2_T - 2\int^T_t V^{\lambda,n}_sdM^{\lambda,n}_s  + 2\int^T_t V^{\lambda,n}_s  \Phi^+_{\lambda,n}(s, V^{\lambda,n}_s) ds - 2\int^T_t V^{\lambda,n}_s \Phi^-_{\lambda,n}(s, V^{\lambda,n}_s) ds. \\
			& \leq  P^2_T - 2\int^T_t V^{\lambda,n}_sdM^{\lambda,n}_s + 2\int^T_t (P_s+\lambda) \Phi^+_{\lambda,n}(s, V^{\lambda,n}_s)  ds \\
			& \leq  P^2_T - 2\int^T_t V^{\lambda,n}_sdM^{\lambda,n}_s + \frac{2}{\alpha}\sup_{t< s\leq T } (P_s +\lambda)^2 +  2\alpha\big(K^{\lambda,n,+}_T - K^{\lambda,n,+}_t\big)^2. 
		\end{aligned}
	\end{equation}
	
	\emph{Step 2.} Using \eqref{Kpm}, the dynamics \eqref{vn} writes
	\begin{align}
		V^{\lambda,n}_t 
		& = P_T - \int^T_t dM^{\lambda,n}_s + (K^{\lambda,n,+}_T - K^{\lambda,n,+}_t) - (K^{\lambda,n,-}_T - K^{\lambda,n,-}_t).
	\end{align}
	By Jensen’s inequality, Itô’s isometry, and the Lipschitz property of $\Phi^-_{\lambda,n}$ (with constant $c$ independent of $n$ and decreasing in $\lambda$, see Corollary \ref{cor3.1}), \eqref{lem:phin_root} and noting that $\Phi^-_{\lambda,n}(s,P_s-\lambda \Phi_n^{-1}(0)) = 0$, we obtain
	\begin{align*}
		& \EE\big[\big(K^{\lambda,n,+}_T - K^{\lambda,n,+}_t\big)^2\big]\\
		& \leq {4} \EE\Big[P_T^2 + (V^{\lambda,n}_t)^2 + [M^{\lambda,n}]_T - [M^{\lambda,n}]_t + \Big(\int^T_t \Phi_{\lambda,n}^-(s,V^{\lambda,n}_s) ds \Big)^2\Big] \\
		& \leq {4}  \EE\Big[P_T^2 + (V^{\lambda,n}_t)^2 + [M^{\lambda,n}]_T - [M^{\lambda,n}]_t + c\Big(\int^T_t |V^{\lambda,n}_s - P_s + \lambda \Phi^{-1}_{\lambda,n}(0)| ds \Big)^2\Big] \\
		& \leq {4(3cT\vee 1)} \EE\Big[P_T^2 + |V^{\lambda,n}_t|^2+ [M^{\lambda,n}]_T - [M^{\lambda,n}]_t + \int^T_t ((V^{\lambda,n}_s)^2 + P_s^2 + \lambda^2 ) ds \Big].
	\end{align*}
	Then by plugging the above estimates into \eqref{first:estimate:vlambda:n} and picking $2\alpha = 1/(3 ({4(3cT\vee 1)}))$ we obtain that 
	\begin{align}
		\frac{2}{3}\EE[(V^{\lambda,n}_t)^2] + \frac{2}{3}\EE\big[[M^{\lambda,n}]_T - [M^{\lambda,n}]_t\big] \leq C(\lambda)\Big( 1+  \int^T_t \EE[(V^{\lambda,n}_s)^2]ds \Big), \label{SupV}
	\end{align}
	where the positive constant $C(\lambda)$ depends on $\lambda$ in an increasing way. Hence, by applying Gr\"onwall's inequality to \eqref{SupV}, we obtain $\sup_{0\leq t\leq T}\EE[|V^{\lambda,n}_t|^2] \leq C(\lambda)e^{TC(\lambda)} \leq C(1)e^{TC(1)}$ which in turn yields
	\begin{align}
		\sup_{0\leq t\leq T}\EE[(V^{\lambda,n}_t)^2] + \EE\left[[M^{\lambda,n}]_T\right] + \EE\big[\big(K^{\lambda,n,+}_T \big)^2\big] \leq C.\label{Clambda}
	\end{align}
	
	\vskip5pt
	\emph{Step 3.} By \eqref{first:estimate:vlambda:n}, we obtain 
	\begin{align}
		\sup_{0\leq t\leq T}(V^{\lambda,n}_t)^2 + \int^T_0 d[M^{\lambda,n}]_s
		& \leq  P^2_T + 2\sup_{0<t \leq T} \Big|\int^T_t V^{\lambda,n}_sdM^{\lambda,n}_s \Big|
		+ \frac{2}{\alpha}\sup_{0\leq t \leq T } (P_t +\lambda)^2 \nonumber \\
		& \qquad +  2\alpha\Big(\int^T_0 \Phi^+_{\lambda,n}(s, V^{\lambda,n}_s)  ds\Big)^2.  \label{supv2}
	\end{align}
	By the Burkholder-Davis-Gundy and the Young inequalities, we have
	\begin{align}
		2\EE\Big[\,\sup_{0\leq t \leq T}\Big|\int^T_t V^{\lambda,n}_sdM^{\lambda,n}_s\Big|\Big] & \leq 2c_1\EE\Big[\Big(\int^T_0 |V^{\lambda,n}_s|^2d[M^{\lambda,n}]_s\Big)^\frac{1}{2}\Big]  \nonumber \\
		& \leq \frac{1}{2}\EE[\,\sup_{0\leq t\leq T}(V^{\lambda,n}_t)^2] + 2c_1^2\EE[[M^{\lambda,n}]_T]. \label{BDG}
	\end{align}
	By substituting the inequality \eqref{BDG} and \eqref{Clambda} into \eqref{supv2} and again by picking $2\alpha = 1/(3 ({4(3cT\vee 1)}))$, we obtain $\EE[\sup_{0\leq t\leq T}	(V^{\lambda,n}_t)^2] \leq C$.

	Lastly, since the map $x\mapsto\Phi^-_{\lambda,n}(s,x)$ is Lipschitz continuous with a Lipschitz constant independent of the penalty parameter $n$ (see Corollary \ref{cor3.1}), we deduce that there exists a positive constant $C$, independent of $n$, such that $\EE\big[ (K^{\lambda,n,-}_T)^2\big]\leq C(\lambda)$.
\end{proof}

\begin{corollary} \label{cor: VlS2}
	For any $\lambda \in [0,1]$,  $V^{\lambda} \in \cS^{2}$.
\end{corollary}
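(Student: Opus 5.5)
The plan is to sandwich $V^\lambda$ between two processes that are already known to lie in $\cS^2$, and then read off the square-integrable bound for its pathwise supremum. By \eqref{increasing:sequence:vlambdan}, the sequence $(V^{\lambda,n}_t)_{n\ge1}$ is non-decreasing for each $t$, almost surely. Taking $n=1$ as the lower element, we obtain for every $n\ge 1$ and $t\in[0,T]$
\[
V^{\lambda,1}_t \le V^{\lambda,n}_t \le V^{\lambda}_t .
\]
For the upper side we use the first inequality in \eqref{AmericanUpper}, which bounds $V^{\lambda,n}_t$ by the value of the optimal stopping problem with payoff $P_\cdot+\lambda\ln(n)\,\cdot$. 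That bound degenerates as $n\to\infty$, so it is useless here.

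Instead, we control the supremum directly through Lemma~\ref{VMKestimate}. Its constant $C$ is independent of $n$, which gives
\[
\EE\big[\sup_{0\le t\le T}|V^{\lambda,n}_t|^2\big]\le C .
\]
On a countable dense set $D\subset[0,T]$ containing $T$ we fix a null set outside of which the monotone convergence $V^{\lambda,n}_t\uparrow V^\lambda_t$ holds for all $t\in D$. On that event,
\[
\sup_{t\in D}|V^\lambda_t| \le \sup_{t\in D}|V^{\lambda,1}_t|\vee \sup_n\sup_{t\in[0,T]}|V^{\lambda,n}_t| = \lim_n \sup_{t}|V^{\lambda,1}_t|\vee|V^{\lambda,n}_t| ,
\]
where the last expression is a non-decreasing limit in $n$. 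Because the sequence is monotone, $|V^{\lambda,n}_t|\le |V^{\lambda,1}_t|\vee|V^\lambda_t|$ for every $n$, so the monotone convergence theorem applies. Together with the uniform bound from Lemma~\ref{VMKestimate} and $V^{\lambda,1}\in\cS^2$, it gives
\[
\EE\big[\sup_{t\in D}|V^\lambda_t|^2\big]\le C + \EE\big[\sup_t|V^{\lambda,1}_t|^2\big]<\infty .
\]
The case $\lambda=0$ can be read the same way, or is trivial by convention.

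The main obstacle is that $V^\lambda$ is so far only defined pointwise in $t$ as an a.s. limit. Being in $\cS^2$ requires an optional version whose supremum over all of $[0,T]$ is square integrable. We handle this by defining the process pathwise:
\[
V^\lambda_t:=\sup_n V^{\lambda,n}_t,
\]
using càdlàg versions of each $V^{\lambda,n}$. This is a countable supremum of optional processes, hence optional, and it agrees a.s. with the pointwise limit in \eqref{existence:limit:Vlambda} for each fixed $t$. With this version the supremum over $[0,T]$ is handled directly, because $\sup_t\sup_n = \sup_n\sup_t$. Hence
\[
\sup_{t\le T}|V^\lambda_t|\le \sup_t|V^{\lambda,1}_t|+\sup_n\sup_t|V^{\lambda,n}_t|,
\]
and Fatou's lemma (or monotone convergence) applied to $\sup_t|V^{\lambda,n}_t|^2$, with the uniform bound from Lemma~\ref{VMKestimate}, yields $V^\lambda\in\cS^2$. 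Path regularity of $V^\lambda$ is not needed for this corollary and is deferred to the construction of the reflected BSDE.
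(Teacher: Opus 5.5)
Your argument is correct and is essentially the paper's proof, which combines the $n$-uniform bound of Lemma~\ref{VMKestimate} with the monotone pointwise limit \eqref{existence:limit:Vlambda} and Fatou's lemma. Your choice of the version $V^\lambda_t:=\sup_n V^{\lambda,n}_t$ built from c\`adl\`ag versions, so that the convergence holds for all $t$ simultaneously and the limit is optional, supplies a detail the paper leaves implicit. One small point: in your last display, Fatou alone does not control $\sup_n\sup_t|V^{\lambda,n}_t|$, so either invoke your earlier monotone device with $|V^{\lambda,1}_t|\vee|V^{\lambda,n}_t|$, or note directly that $\sup_t|V^\lambda_t|\le\liminf_n\sup_t|V^{\lambda,n}_t|$ and apply Fatou to that.
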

\begin{proof}
	The result follows from Lemma \ref{VMKestimate}, \eqref{existence:limit:Vlambda} and Fatou's lemma.
\end{proof}

In addition to the terms in \eqref{Kln} and \eqref{Kpm}, for a fixed $\epsilon \in (0, -\lambda\Phi^{-1}_2(0))$, we introduce an $\epsilon$-truncation of the driver. By performing this truncation step, we control the derivative of the driver $\Phi_{\lambda,n}$ using $\epsilon$ and prevent it from exploding as $n\rightarrow \infty$. This strategy enables us to first pass the limit as $n\rightarrow \infty$ first and then to use monotone arguments to pass the limit as $\epsilon\rightarrow 0$. Specifically, we consider
\begin{align}
	& K^{\lambda,n,\epsilon}_t  := \int^t_0 \Phi_{\lambda,n}(s, V^{\lambda,n}_s\vee (P_s+ \epsilon)) \, ds, \label{KLE}\\
	&  K^{\lambda,n,\epsilon,\pm}_t := \int^t_0 \Phi^\pm_{\lambda,n}(s, V^{\lambda,n}_s\vee (P_s+ \epsilon)) \, ds, \\
	& A^{\lambda,n,\epsilon}_t  := \int^t_0 [\Phi_{\lambda,n}(s, V^{\lambda,n}_s)- \Phi_{\lambda,n}(s, V^{\lambda,n}_s\vee (P_s+ \epsilon))] \, ds. \label{ALE}
\end{align}
We also define the two limiting processes
\begin{align}
	& K^{\lambda,\epsilon}_t  := \lim_{n\rightarrow \infty} K^{\lambda,n,\epsilon}_t  = \int^t_0 \Phi_{\lambda, \infty}(s, V^{\lambda}_s\vee (P_s+ \epsilon)) \, ds,\label{eq3.24} \\ 
	& K^{\lambda}_t :=\lim_{\epsilon\downarrow 0} K^{\lambda,\epsilon}_t = \int^t_0 \lambda\Phi_{\lambda, \infty}(s, V^{\lambda}_s\vee P_s) \, ds.\label{eq3.25}
\end{align}

To justify the limit in \eqref{eq3.24}, recall from Corollary \ref{cor3.1} that the driver $x \mapsto \Phi_{\lambda,n}(s, x\vee (P_s+ \epsilon))$ is Lipschitz continuous with Lipschitz constant $\lambda/\epsilon$. This yields the bound on the integrand in \eqref{KLE}
$$|\Phi_{\lambda,n} (s, V_s^{\lambda,n}\vee (P_s+ \epsilon))| \leq \frac{\lambda}{\epsilon}|V^{\lambda,n}_s - P_s+\lambda \Phi^{-1}_n(0)| \leq \frac{\lambda}{\epsilon}\left({|V^{\lambda}_s|}+|P_s|+\lambda \right)
$$ 
\noindent which is integrable since $V^{\lambda}$ and $P$ are elements of $\cS^{2}$. By Lemma \ref{lip} we have
\begin{align}
	& \big|\Phi_{\lambda,n}(s, V^{\lambda,n}_s\vee (P_s+ \epsilon)) -\Phi_{\lambda, \infty}(s, V^{\lambda}_s\vee (P_s+ \epsilon))\big| \nonumber \\
	& \leq \big|\Phi_{\lambda,n}(s, V^{\lambda,n}_s\vee (P_s+ \epsilon)) - \Phi_{\lambda,n}(s, V^{\lambda}_s\vee (P_s+ \epsilon))| \nonumber \\
	& \quad + |\Phi_{\lambda,n} (s, V^{\lambda}_s\vee (P_s+ \epsilon)) -\Phi_{\lambda, \infty}(s, V^{\lambda}_s\vee (P_s+ \epsilon))\big| \nonumber \\
	& \leq \frac{1}{\epsilon}|V^{\lambda,n}_s - V^{\lambda}_s| +  \big|\Phi_{\lambda,n} (s, V^{\lambda}_s\vee (P_s+ \epsilon)) -\Phi_{\lambda, \infty}(s, V^{\lambda}_s\vee (P_s+ \epsilon))\big|. \label{klambdan}
\end{align}

By \eqref{def: phi_lambda_n}, \eqref{deriv:phin} and \eqref{increasing:sequence:vlambdan}, the two terms above on the right-hand side converge monotonically to zero as $n\rightarrow \infty$. The limit in \eqref{eq3.25} stems from the fact that $\epsilon \mapsto \Phi_{\lambda, \infty}(s, x \vee (P_{s} + \epsilon))$ is monotonically decreasing combined with the monotone convergence theorem.

We now give some uniform estimates in the $\cS^2$ norm for the processes $K^{\lambda,n,\epsilon}$, $K^{\lambda,n,\epsilon,\pm}$, $A^{\lambda,n,\epsilon}$, $K^{\lambda, \epsilon}$ and $K^\lambda$.

\begin{lemma} \label{lem: Ale_Klne_s2_bound}
	There exists a constant $C<\infty$ such that for any $0 < \lambda \leq 1$, any $n\geq 2$ and any $\epsilon \in (0,- \lambda\Phi^{-1}_2(0))$,
	\begin{gather*}
		\|A^{\lambda, n ,\epsilon}\|_{\cS^2} +  \|K^{\lambda,n,\epsilon,+}\|_{\cS^2}\leq C.
	\end{gather*}
\end{lemma}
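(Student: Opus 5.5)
The plan is to dominate both processes pathwise by $K^{\lambda,n,+}_T$ and then invoke the uniform bound $\EE[|K^{\lambda,n,+}_T|^2]\le C$ from Lemma \ref{VMKestimate}. Two facts drive everything. First, $x\mapsto\Phi_{\lambda,n}(s,x)$ is decreasing (Lemma \ref{lip}). Second, the truncation level $P_s+\epsilon$ lies to the left of the root of $\Phi_{\lambda,n}(s,\cdot)$.

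\emph{The truncation level lies below the root.} By Lemma \ref{lem:phin_root}, $n\mapsto\Phi_n^{-1}(0)$ is decreasing. Hence, for $n\ge2$,
\[
-\lambda\Phi^{-1}_n(0)\ \ge\ -\lambda\Phi^{-1}_2(0)\ >\ \epsilon .
\]
So $P_s+\epsilon< P_s-\lambda\Phi_n^{-1}(0)$, which is the root of $\Phi_{\lambda,n}(s,\cdot)$. By \eqref{support:phin:positive:part} this gives $\Phi_{\lambda,n}(s,P_s+\epsilon)\ge 0$, that is, $\Phi^-_{\lambda,n}(s,P_s+\epsilon)=0$. The restriction $n\ge 2$ and the range $\epsilon\in(0,-\lambda\Phi^{-1}_2(0))$ in the statement are used exactly here.

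\emph{Bound on $A^{\lambda,n,\epsilon}$.} Consider the integrand of \eqref{ALE}.
\begin{itemize}
\item On $\{V^{\lambda,n}_s\ge P_s+\epsilon\}$ it vanishes.
\item On $\{V^{\lambda,n}_s< P_s+\epsilon\}$ it equals $\Phi_{\lambda,n}(s,V^{\lambda,n}_s)-\Phi_{\lambda,n}(s,P_s+\epsilon)$. This is nonnegative by monotonicity. Since $\Phi_{\lambda,n}(s,P_s+\epsilon)\ge0$, it is also bounded above by $\Phi_{\lambda,n}(s,V^{\lambda,n}_s)$. The latter is then nonnegative, so it equals $\Phi^+_{\lambda,n}(s,V^{\lambda,n}_s)$.
\end{itemize}
Hence $A^{\lambda,n,\epsilon}$ is nondecreasing, starts at $0$, and satisfies
\[
\sup_{t\le T}|A^{\lambda,n,\epsilon}_t| = A^{\lambda,n,\epsilon}_T\le K^{\lambda,n,+}_T .
\]

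\emph{Bound on $K^{\lambda,n,\epsilon,+}$.} Since $V^{\lambda,n}_s\vee(P_s+\epsilon)\ge V^{\lambda,n}_s$ and $\Phi_{\lambda,n}(s,\cdot)$ is decreasing, $\Phi^+_{\lambda,n}(s,\cdot)$ is nonincreasing. Therefore
\[
0\le \Phi^+_{\lambda,n}\big(s,V^{\lambda,n}_s\vee(P_s+\epsilon)\big)\le \Phi^+_{\lambda,n}(s,V^{\lambda,n}_s),
\]
and so $\sup_{t\le T}|K^{\lambda,n,\epsilon,+}_t|=K^{\lambda,n,\epsilon,+}_T\le K^{\lambda,n,+}_T$.

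Squaring these two bounds, taking expectations and applying Lemma \ref{VMKestimate} gives the claim. The constant does not depend on $\epsilon$ or $n$, and it is uniform in $\lambda\in(0,1]$ because the constant in Lemma \ref{VMKestimate} is nondecreasing in $\lambda$.

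The only step needing care is the root comparison: one must check that the monotonicity of $\Phi_n^{-1}(0)$ in $n$ yields $\Phi_{\lambda,n}(s,P_s+\epsilon)\ge 0$ for every $n\ge 2$. The rest is pointwise monotonicity.
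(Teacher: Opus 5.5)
Your proposal is correct and follows essentially the paper's route. The paper uses the same root comparison $\epsilon<-\lambda\Phi^{-1}_2(0)\le-\lambda\Phi^{-1}_n(0)$ to get the exact identity $A^{\lambda,n,\epsilon}+K^{\lambda,n,\epsilon,+}=K^{\lambda,n,+}$, and combines it with the nonnegativity of both terms. You instead dominate each term separately by $K^{\lambda,n,+}_T$; both arguments then conclude by Lemma \ref{VMKestimate}.
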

\begin{proof}
	We first note that $A^{\lambda, n ,\epsilon} + K^{\lambda,n,\epsilon,+} = K^{\lambda,n,+}$. Indeed, from \eqref{support:phin:positive:part} and \eqref{support:phin:negative:part} the support of $\Phi_{\lambda,n}^+(s, \cdot)$ is 
	\begin{equation*}
		\left\{ x: \Phi_n\left(\lambda^{-1}(P_s-x)\right) \geq  0 \right\} = \left\{ x: P_s - \lambda\Phi_n^{-1}(0)  \geq x \right\},
	\end{equation*}
	and the support of $\Phi_{\lambda,n}^{-}(s, \cdot)$ is
	\begin{equation*}
		\left\{ x: \Phi_n\left(\lambda^{-1}(P_s-x)\right) < 0 \right\} = \left\{ x: P_s - \lambda\Phi_n^{-1}(0)  <  x \right\}.
	\end{equation*}
	But since $\epsilon < - \lambda \Phi_{2}^{-1}(0) \leq -\lambda\Phi_{n}^{-1}(0)$, we have
	\begin{align*}
		A^{\lambda, n, \epsilon}_{t} & + K^{\lambda, n, \epsilon, +}_{t} \! =\! \int^t_0 \left[ \Phi_{\lambda,n}(s, V^{\lambda,n}_s)\!-\! \Phi_{\lambda,n}(s, V^{\lambda,n}_s\vee (P_s+ \epsilon))  \right] \I_{\{V_{s}^{\lambda, n} \leq P_{s} - \Phi_{n}^{-1}(0)\}} ds  \\
		&\hspace{1em} + \int^t_0 \left[ \Phi_{\lambda,n}(s, V^{\lambda,n}_s)- \Phi_{\lambda,n}(s, V^{\lambda,n}_s\vee (P_s+ \epsilon))  \right] \I_{\{V_{s}^{\lambda, n} > P_{s} - \Phi_{n}^{-1}(0)\}} ds  \\
		&\hspace{1em} + \int_{0}^{t}\Phi^{+}_{\lambda,n}(s, V^{\lambda,n}_s\vee (P_s+ \epsilon)) ds = K^{\lambda,n,+}_{t}.
	\end{align*}
	On the other hand, $A^{\lambda, n ,\epsilon}$, $K^{\lambda,n,\epsilon,+} \geq 0$, and the result follows from Lemma \ref{VMKestimate}.
\end{proof}

\begin{lemma} \label{lem: KleS2}
	There exists a constant $C<\infty$ such that, for any $0 < \lambda \leq 1$ and any $\epsilon \in (0,-\lambda\Phi^{-1}_2(0))$
	$$
	\|K^{\lambda}\|_{\cS^2} + \|K^{\lambda,\epsilon}\|_{\cS^2} \leq C.
	$$
\end{lemma}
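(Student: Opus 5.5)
The plan is to split $K^{\lambda,\epsilon}$ and $K^\lambda$ into positive and negative parts, obtain uniform $L^2$ bounds on the total variations at time $T$, and take limits with Fatou's lemma in $n$ and monotone convergence in $\epsilon$. Since $K^{\lambda,\epsilon}$ and $K^\lambda$ are differences of two non-decreasing processes, it suffices to bound $\mathbb{E}[(\int_0^T|\Phi_{\lambda,\infty}(s,V^\lambda_s\vee(P_s+\epsilon))|ds)^2]$ uniformly in $\epsilon$, and the analogous quantity for $K^\lambda$. This gives $\sup_t|K_t|\le \int_0^T|\cdot|\,ds$.

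\emph{Positive part.} By \eqref{eq3.24}, $\Phi^+_{\lambda,n}(s,V^{\lambda,n}_s\vee(P_s+\epsilon))\to \Phi^+_{\lambda,\infty}(s,V^\lambda_s\vee(P_s+\epsilon))$ pointwise, using continuity of $x\mapsto x^+$. Fatou's lemma, applied first in $ds$ and then in $d\mathbb{P}$, gives
$$\mathbb{E}\Big[\Big(\int_0^T \Phi^+_{\lambda,\infty}(s,V^\lambda_s\vee(P_s+\epsilon))ds\Big)^2\Big]\le \liminf_n \mathbb{E}\big[(K^{\lambda,n,\epsilon,+}_T)^2\big]\le C.$$
The last inequality is Lemma \ref{lem: Ale_Klne_s2_bound}, and $C$ is independent of $\epsilon,\lambda$. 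Since $\epsilon\mapsto\Phi_{\lambda,\infty}(s,x\vee(P_s+\epsilon))$ is non-increasing, its positive part increases as $\epsilon\downarrow0$. The monotone convergence theorem then transfers the same bound to $\int_0^T\Phi^+_{\lambda,\infty}(s,V^\lambda_s\vee P_s)ds$. In particular, this integral is a.s. finite, so the singularity of $\Phi_{\lambda,\infty}$ at $V^\lambda_s=P_s$ is integrable in time; consequently $\{V^\lambda_s= P_s\}$ must be $ds$-null a.s.

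\emph{Negative part.} Here I will not pass through $n$ and will instead bound directly. By \eqref{limit:driver}, $\Phi^-_{\lambda,\infty}(s,x)=\lambda\ln((x-P_s)/\lambda)^+$, which vanishes for $x\le P_s+\lambda$. For $x>P_s+\lambda$, the inequality $\ln y\le y-1$ gives $\Phi^-_{\lambda,\infty}(s,x)\le x-P_s-\lambda\le |x|+|P_s|$. Since $\epsilon<-\lambda\Phi_2^{-1}(0)<\lambda$, we have $V^\lambda_s\vee(P_s+\epsilon)>P_s+\lambda$ only when $V^\lambda_s>P_s+\lambda$, and in that case the truncation does nothing. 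Hence both negative integrands are bounded by $|V^\lambda_s|+|P_s|$. Jensen's inequality then gives the bound $2T^2(\|V^\lambda\|^2_{\cS^2}+\|P\|^2_{\cS^2})$, which is uniform by Corollary \ref{cor: VlS2}, Lemma \ref{VMKestimate} and Assumption \ref{A}.

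Combining the two parts gives the claimed bound with a constant independent of $\lambda\in(0,1]$ and $\epsilon$. The main obstacle is the positive part, specifically the interchange of the $n$- and $\epsilon$-limits with the squared integral. Fatou's lemma handles the $n$-limit because we only need an upper bound, and monotonicity in $\epsilon$ handles the second limit; both rely on the uniform estimate of Lemma \ref{lem: Ale_Klne_s2_bound}. A further point is that the constant in Lemma \ref{VMKestimate} must be uniform in $\lambda$, and this is guaranteed there through $C(\lambda)\le C(1)$.
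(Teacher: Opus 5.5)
Your proof is correct. For the positive part it is the paper's argument: Fatou in $n$ against an $n$-uniform bound on $K^{\lambda,n,\epsilon,+}$. You cite Lemma~\ref{lem: Ale_Klne_s2_bound}, while the paper uses $K^{\lambda,n,\epsilon,+}\le K^{\lambda,n,+}$ together with Lemma~\ref{VMKestimate}, which is what that lemma rests on anyway.

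The routes differ in two places.

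\emph{Negative part.} The paper stays at level $n$. It uses $K^{\lambda,n,\epsilon,-}\le K^{\lambda,n,-}$ and passes to the limit by Fatou, relying on the bound on $K^{\lambda,n,-}_T$ from Lemma~\ref{VMKestimate}. You instead work directly with the limiting driver. You use $\lambda\ln y\le\lambda(y-1)$ and observe that the truncation is inactive on $\{V^\lambda_s>P_s+\lambda\}$, the only set where $\Phi^-_{\lambda,\infty}$ is nonzero. As a result you need only $\|V^\lambda\|_{\cS^2}$ and $\|P\|_{\cS^2}$.

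\emph{Passage to $K^\lambda$.} The paper applies Fatou in $\epsilon$ to $\sup_t|K^{\lambda,\epsilon}_t|^2$. You use monotone convergence on the positive part and note that the negative part does not depend on $\epsilon$.

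\emph{Trade-off.} The paper's version is more uniform: one argument covers both signs, and it never needs the explicit form of $\Phi_{\lambda,\infty}$. Yours bypasses the $n$-uniform estimate on $K^{\lambda,n,-}$, whose justification in Lemma~\ref{VMKestimate} is only sketched. It also shows explicitly that $K^{\lambda,\epsilon}$ and $K^{\lambda}$ share the same negative part. As a by-product, finiteness of the monotone limit yields that $\{V^\lambda_s=P_s\}$ is $ds$-null a.s.
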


\begin{proof}
	Fix $\epsilon < -\lambda\Phi^{-1}_2(0)$ and $n\geq 2$. We first note that
	\begin{align*}
		K^{\lambda,n,\epsilon,\pm}_t & = \int^t_0 \Phi^{\pm}_{\lambda,n}(s,V^{\lambda,n}_s\vee (P_s+\epsilon)) ds \\
		& \leq \int^t_0 \Phi^{\pm}_{\lambda,n}(s,V^{\lambda,n}_s) ds = K^{\lambda,n,\pm}_t.
	\end{align*}
	
	Hence, by Lemma \ref{VMKestimate} and Fatou’s lemma,
	$$
	\EE[|K_T^{\lambda,\epsilon,\pm}|^2] \leq \liminf_{n\rightarrow \infty}\EE[|K_T^{\lambda,n,\epsilon,\pm}|^2] \leq \liminf_{n\rightarrow \infty}
	\EE[|K^{\lambda,n,+}_T|^2] \leq C. 
	$$
	
	Moreover, $\sup_{0\leq t\leq T}|K^{\lambda,\epsilon}_t|^2 \leq  2|K^{\lambda,\epsilon,+}_T|^2 +  2|K^{\lambda,\epsilon,-}_T|^2$ and it follows that $K^{\lambda, \epsilon} \in \cS^2$ with $\|K^{\lambda,\epsilon}\|_{\cS^2}$ bounded above by a constant $C$ that is independent of $\epsilon$ and $\lambda$. 
	
	To obtain an upper bound for $\|K^{\lambda}\|_{\cS^2}$, observe that 
	$$
	|K^{\lambda}_t|^2 = \liminf_{\epsilon\downarrow 0 }|K^{\lambda,\epsilon}_t|^2 \leq \liminf_{\epsilon\downarrow 0 }\sup_{0\leq t\leq T} |K^{\lambda,\epsilon}_t|^2.
	$$
	By Fatou’s lemma and the bound above, we have $\|K^\lambda\|_{\cS^2} \leq \liminf_{\epsilon\downarrow 0} \|K^{\lambda,\epsilon}\|_{\cS^2} \leq C$.
\end{proof}

\subsection{Asymptotic Convergence to the American Option} \label{subsec:4.2} 
In this subsection, we establish a convergence rate for $V^\lambda$ towards the American option value $V$ as $\lambda\downarrow0$. Importantly, we point out that here $V^\lambda$ is only considered as the monotone limit of $V^{\lambda,n}$, whose existence follows from the standard monotone convergence arguments presented at the beginning of this section.

The reflected BSDE characterization of $V^\lambda$ is deferred to subsection \ref{sss3.2.3}, where,
under stronger assumptions, we show that $V^\lambda$ arises as the value component of a reflected BSDE with a singular driver and admits an optimal stopping representation.
Recall that, since the payoff process $P$ is càdlàg and belongs to $\mathcal{S}^2$, the Doob-Meyer decomposition applied to the supermartingale $V$ guarantees the existence of a martingale $M$ and a continuous, non-decreasing process $A$ such that $V = M - A$. Moreover, the process $A$ satisfies the Skorokhod reflection condition $\int^T_0 (P_s - V_s)dA_s = 0$ which characterises the minimality of the reflection and ensures that $V$ coincides with the value of the American option.

We begin by establishing an estimate for the difference between $V^{\lambda,n}$ and $V$.

\begin{lemma} \label{lem: estVlnV}
	For any $0\leq \lambda\leq 1$ and any $n\geq 1$,
	\begin{align}
		e^{t}(V^{\lambda,n}_t - V_t)^2  & \leq 2\lambda e^{T}\mathbb{E}[K^{\lambda,n,+}_T-K^{\lambda,n,+}_t|\cF_t] \nonumber\\
		& \quad + 2e^{T}\mathbb{E}\Big[\int^T_t (V^{\lambda,n}_s - P_s)^{-}dA_s| \cF_t\Big] \label{yny}\\
		& \quad + e^{T}\mathbb{E}\Big[\int^T_t [\Phi^-_{\lambda,n}(s,V_s)]^2 ds | \cF_t\Big]. \nonumber 
	\end{align}
\end{lemma}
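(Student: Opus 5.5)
Apply Itô's formula to $e^{t}(V^{\lambda,n}_t-V_t)^2$ between $t$ and $T$, and then take conditional expectations. Put $\delta_s := V^{\lambda,n}_s - V_s$. Subtracting \eqref{VRBSDE} from \eqref{vn} gives $\delta_T=0$ and
\[
d\delta_s = -\Phi_{\lambda,n}(s,V^{\lambda,n}_s)\,ds + dA_s + d(M^{\lambda,n}-M)_s .
\]
Itô's formula for the semimartingale $e^{s}\delta_s^2$ then yields
\begin{align*}
e^{t}\delta_t^2 + \int_t^T e^{s}\delta_s^2\,ds + \int_t^T e^{s}\,d[M^{\lambda,n}-M]_s
&= 2\int_t^T e^{s}\delta_{s}\,\Phi_{\lambda,n}(s,V^{\lambda,n}_s)\,ds - 2\int_t^T e^{s}\delta_{s-}\,dA_s \\
&\quad - 2\int_t^T e^{s}\delta_{s-}\,d(M^{\lambda,n}-M)_s .
\end{align*}
The quadratic-variation term is nonnegative and can be dropped. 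The stochastic integral is a true martingale, since $V,V^{\lambda,n}\in\cS^2$ and $M,M^{\lambda,n}\in\cH^2$. So its conditional expectation vanishes, and it remains to bound the two finite-variation terms. At the end we replace $e^{s}$ by $e^{T}$.

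\emph{The $dA$ term.} Here I use the Skorokhod condition $\int_0^T (V_{s-}-P_{s-})\,dA_s=0$. On the support of $dA$ one has $V_{s-}=P_{s-}$, so $-\delta_{s-} = P_{s-}-V^{\lambda,n}_{s-} \le (V^{\lambda,n}_{s-}-P_{s-})^-$. This gives
\[
-2\int_t^T e^{s}\delta_{s-}\,dA_s \le 2e^{T}\int_t^T (V^{\lambda,n}_s-P_s)^-\,dA_s .
\]
The statement writes $s$ in place of $s-$. I would justify this either because $A$ is continuous (as stated in the paper's recollection before the lemma), or by reading the bound with left limits.

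\emph{The driver term.} This is the main obstacle. The idea is to split $\Phi_{\lambda,n}=\Phi^+_{\lambda,n}-\Phi^-_{\lambda,n}$ and write
\[
\delta_s\Phi_{\lambda,n}(s,V^{\lambda,n}_s) = \delta_s\big(\Phi_{\lambda,n}(s,V^{\lambda,n}_s)-\Phi_{\lambda,n}(s,V_s)\big) + \delta_s\Phi_{\lambda,n}(s,V_s).
\]
Since $x\mapsto\Phi_{\lambda,n}(s,x)$ is decreasing (Lemma~\ref{lip}), the first product is $\le 0$ and is discarded. For the second product, write $\Phi_{\lambda,n}(s,V_s)=\Phi^+_{\lambda,n}(s,V_s)-\Phi^-_{\lambda,n}(s,V_s)$.
\begin{itemize}
\item The negative part is handled by Young's inequality: $-2\delta_s\Phi^-_{\lambda,n}(s,V_s)\le \delta_s^2 + [\Phi^-_{\lambda,n}(s,V_s)]^2$. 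The $\delta_s^2$ is absorbed by the left-hand term $\int e^{s}\delta_s^2\,ds$; this is exactly why the weight $e^{t}$ appears.
\item The positive part $\Phi^+_{\lambda,n}(s,V_s)$ is nonzero only when $V_s\le P_s+\lambda$, by \eqref{support:phin:positive:part}. There I need $2\delta_s\Phi^+_{\lambda,n}(s,V_s)\lesssim 2\lambda\,\Phi^+_{\lambda,n}(s,V^{\lambda,n}_s)$, and that requires care.
\end{itemize}
For the positive part, I would rather redo the split in terms of $V^{\lambda,n}$, using $V\ge P$. On $\{V^{\lambda,n}_s\le P_s+\lambda\}$ we have
\[
\delta_s\,\Phi^+_{\lambda,n}(s,V^{\lambda,n}_s) \le (P_s+\lambda-V_s)\,\Phi^+_{\lambda,n}(s,V^{\lambda,n}_s) \le \lambda\,\Phi^+_{\lambda,n}(s,V^{\lambda,n}_s),
\]
which produces $2\lambda e^{T}\,dK^{\lambda,n,+}$. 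For the negative part, $-\delta_s\Phi^-_{\lambda,n}(s,V^{\lambda,n}_s)$, I use monotonicity of $\Phi^-_{\lambda,n}$. If $\delta_s\ge 0$, then $\Phi^-_{\lambda,n}(s,V^{\lambda,n}_s)\ge\Phi^-_{\lambda,n}(s,V_s)$, so $-\delta_s\Phi^-_{\lambda,n}(s,V^{\lambda,n}_s)\le -\delta_s\Phi^-_{\lambda,n}(s,V_s)\le |\delta_s|\,\Phi^-_{\lambda,n}(s,V_s)$. If $\delta_s<0$, then $\Phi^-_{\lambda,n}(s,V^{\lambda,n}_s)\le\Phi^-_{\lambda,n}(s,V_s)$, which again gives $|\delta_s|\,\Phi^-_{\lambda,n}(s,V_s)$. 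In both cases Young's inequality then gives $\delta_s^2+[\Phi^-_{\lambda,n}(s,V_s)]^2$, and the $\delta_s^2$ is absorbed as before.

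Collecting the three contributions and taking $\mathbb{E}[\,\cdot\,|\cF_t]$ yields \eqref{yny}.
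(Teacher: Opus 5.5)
Your proposal is correct and follows the paper's proof. Both arguments apply Itô's formula to $e^{s}(V^{\lambda,n}_s-V_s)^2$ and use the Skorokhod condition for the $dA$ term. Both bound $(V^{\lambda,n}-V)\,dK^{\lambda,n,+}\le\lambda\,dK^{\lambda,n,+}$, using that $K^{\lambda,n,+}$ only charges $\{V^{\lambda,n}\le P+\lambda\}$ and that $V\ge P$. Both then use monotonicity of $\Phi^-_{\lambda,n}$ and Young's inequality, with the weight $e^{s}$ absorbing the $(V^{\lambda,n}_s-V_s)^2$ term. Your extra remarks on the true-martingale property of the stochastic integral and on left limits when $A$ jumps make explicit what the paper leaves implicit.
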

\begin{proof}
	Applying Itô’s formula and using the reflected BSDE representation of $V$ in \eqref{VRBSDE}, we obtain
	\begin{align*}
		& e^{\beta t }(V^{\lambda,n}_t - V_t)^2 + \int^T_te^{\beta s} d[\bar M^{\lambda,n}]_s\\
		& = - 2\int^T_te^{\beta s} (V^{\lambda, n}_{s} - V_{s}) d\bar M^{\lambda,n}_s+ 2\int^T_t e^{\beta s }(V^{\lambda,n}_s - V_s)  d(K^{\lambda,n,+}_s - A_s)\\
		& \quad  -2 \int^T_t e^{\beta s}(V^{\lambda,n}_s - V_s)\Phi^-_{\lambda,n}(s,V^{\lambda,n}_s) ds - \beta\int^T_t e^{\beta s}(V^{\lambda,n}_s - V_s)^2 ds,
	\end{align*}
	where $\bar M^{\lambda,n} = M^{\lambda,n} - M$. We first estimate the reflection term. Observe that
	\begin{align*}
		& \int^T_t e^{\beta s} (V^{\lambda,n}_s - V_s)  d(K^{\lambda,n,+}_s - A_s) \\
		& = \int^T_t e^{\beta s} (V^{\lambda,n}_s - P_s - \lambda)dK^{\lambda,n,+}_s  + \int^T_t e^{\beta s} (P_s + \lambda- V_s)dK^{\lambda,n,+}_s \\
		& \quad -\int^T_t e^{\beta s} (V^{\lambda,n}_s - P_s)dA_s  + \int^T_t e^{\beta s}(V_s- P_s)dA_s\\
		& \leq \lambda e^{\beta T} (K^{\lambda,n,+}_T-K^{\lambda,n,+}_t)  + e^{\beta T} \int^T_t(V^{\lambda,n}_s - P_s)^{-}dA_s,
	\end{align*}
	
	\noindent where the inequality follows from the facts that $\int^t_0 (V_s-P_s)dA_s = 0$, $V \geq P$ and that the non-decreasing process $K^{\lambda, n}$ increases only when $V^{\lambda,n} - P- \lambda \leq 0$. 
	
	Next, we estimate the generator. We write 
	\begin{align*}
		& \int^T_t e^{\beta s} (V^{\lambda,n}_s - V_{s})[-\Phi^-_{\lambda,n}(s,V^{\lambda,n}_s)] ds\\
		& = \int^T_t e^{\beta s}(V^{\lambda,n}_s - V_s)[-\Phi^-_{\lambda,n}(s,V^{\lambda,n}_s) + \Phi^-_{\lambda,n}(s,V_s)]ds \\
		& \quad + \int^T_t e^{\beta s} (V^{\lambda,n}_s - V_s)[-\Phi^-_{\lambda,n}(s, V_s)]ds \\
		& \leq  \int^T_t e^{\beta s} (V^{\lambda,n}_s - V_s)^+\Phi^-_{\lambda,n}(s,V_s)ds, 
	\end{align*}
	where the inequality follows from the fact that $x\mapsto -\Phi_{\lambda,n}^{-}(s,x)$ is a decreasing function. Combining the above estimates and applying Young’s inequality yields
	\begin{align*}
		&  e^{\beta t}(V_{t}^{\lambda,n} - V_{t})^2  \leq 2\lambda e^{\beta T}\mathbb{E}[K^{\lambda,n,+}_T-K^{\lambda,n,+}_t|\cF_t] - \EE \Big[\beta\int^T_t e^{\beta s}(V^{\lambda,n}_s - V_s)^2 ds | \cF_{t} \Big]\\
		& \quad + 2e^{\beta T}\mathbb{E}\Big[\int^T_t (V^{\lambda,n}_s - P_s)^{-}dA_s| \cF_t\Big] + 2\mathbb{E}\Big[\int^T_t e^{\beta s}(V^{\lambda,n}_s - V_s)^+\Phi^-_{\lambda,n}(s,V_s) ds | \cF_t\Big] \\
		& \leq 2\lambda e^{\beta T}\mathbb{E}[K^{\lambda,n,+}_T-K^{\lambda,n,+}_t|\cF_t] - \EE\Big[\beta\int^T_t e^{\beta s}(V^{\lambda,n}_s - V_s)^2 ds | \cF_{t} \Big]\\
		& \quad + 2e^{\beta T}\mathbb{E}\Big[\int^T_t (V^{\lambda,n}_s - P_s)^{-}dA_s| \cF_t\Big]\\
		& \quad + 2\mathbb{E}\Big[\int^T_t e^{\beta s}\frac{1}{2}(V^{\lambda,n}_s - V_s)^2 + e^{\beta s}\frac{1}{2}[\Phi^-_{\lambda,n}(s,V_s)]^2 ds | \cF_t \Big].
	\end{align*}
	Choosing $\beta = 1$ yields the desired estimate \eqref{yny}.
\end{proof}


\begin{theorem}\label{t4.1}
	There exists a constant $C<\infty$ such that for any $0\leq t\leq T$ and any $0<\lambda \leq 1$
	\begin{align*}
		(V^{\lambda}_t - V_t)^2
		& \leq C\Big(\lambda \liminf_{n\rightarrow \infty} \mathbb{E}[K^{\lambda,n,+}_T \big| \, \cF_t] + [\lambda \ln(\lambda)]^2 +  \lambda^2\mathbb{E}\big[\sup_{t<s\leq T} V_s^2 \,\big|\, \cF_t\big]\Big). 
	\end{align*}
	If the payoff process $P$ is bounded, then
	\begin{align*}
		(V^{\lambda}_0 - V_0)^2 & \leq C\Big(\lambda + [\lambda \ln(\lambda)]^2 +  \lambda^2 \Big). 
	\end{align*}
\end{theorem}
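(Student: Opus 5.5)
Our approach is to start from Lemma \ref{lem: estVlnV} and let $n\to\infty$, bounding each of the three terms on the right-hand side of \eqref{yny}. Since $V^{\lambda,n}_t\uparrow V^\lambda_t$ a.s. by \eqref{increasing:sequence:vlambdan}, the left-hand side converges to $e^t(V^\lambda_t-V_t)^2$. The first term is kept as $2\lambda e^T\liminf_n\mathbb{E}[K^{\lambda,n,+}_T\mid\cF_t]$, using $K^{\lambda,n,+}_T-K^{\lambda,n,+}_t\le K^{\lambda,n,+}_T$. This accounts for the first contribution in the statement.

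\emph{Reflection term.} We show that $\int_t^T(V^{\lambda,n}_s-P_s)^-dA_s$ is nonnegative, bounded by $\int_t^T(V^{\lambda,1}_s-P_s)^-dA_s$ (monotone in $n$), and decreases to $\int_t^T(V^\lambda_s-P_s)^-dA_s$. The key claim is that $V^\lambda\ge P$. To see it, note that $V^\lambda\ge V^{\lambda,n}$ for every $n$, and $V^{\lambda,n}\ge\widetilde V^{\lambda,n}$-type lower bounds hold. More directly, we use $V^{\lambda,n}_t\ge V^{0,n}$-like comparisons: $\Phi_{\lambda,n}(s,x)\ge n(P_s-x)^+ +\lambda\ln$-corrections. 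A cleaner route is $\lambda\Phi_n(y/\lambda)\ge n y^+ -\lambda\ln(\cdot)$, which gives $V^{\lambda,n}\ge$ a classical penalized solution minus $O(\lambda\ln n/n)$-type errors. Alternatively, we use that the optimal stopping representation \eqref{AmericanUpper} also admits a lower bound by taking the hazard process to $\infty$ on $\{P>V^{\lambda,n}\}$. With $V^\lambda\ge P$ in hand, dominated convergence (domination by the $n=1$ term, which is integrable since $V^{\lambda,1},P\in\cS^2$ and $A\in\cK^2$) makes the second term vanish in the limit. Establishing $V^\lambda\ge P$ cleanly is the step I expect to be the main obstacle. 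The first thing I would try is the comparison $V^{\lambda,n}\ge \widetilde V^{\lambda,n}$ combined with Theorem \ref{t3.1}, which gives $\widetilde V^{\lambda,n}\ge V^n-C(\lambda-\lambda\ln\lambda)-\lambda\ln(n)\cdot$const. That bound blows up with $n$, however, so a pathwise argument will be needed instead: on $\{V^{\lambda,n}<P\}$ the driver is at least of order $n(P-V^{\lambda,n})$, and Peng-style reasoning then forces $(V^{\lambda,n}-P)^-\to0$.

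\emph{Negative-part term.} Since $V_s\ge P_s$, set $y=(P_s-V_s)/\lambda\le0$. For $y\le0$ we bound $\lambda\Phi_n(y)^-$ uniformly in $n$: $\Phi_n(y)=\ln\frac{1-e^{ny}}{-y}\ge \ln\frac{1-e^{2y}}{-y}$ for $n\ge2$, and this is $\ge -\ln(|y|\vee1)+\ln(1-e^{-1})$-type bounds, as in Lemma \ref{lemma1.1}. Hence
\[
\Phi^-_{\lambda,n}(s,V_s)\le \lambda\big|\ln(1-e^{-1})\big|+\lambda\ln\big((V_s-P_s)/\lambda\vee1\big)\le C\big(\lambda+\lambda|\ln\lambda|+\lambda\,(V_s\vee1)\big),
\]
using $\ln x\le x$ and $P\ge0$. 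Squaring and integrating gives $C([\lambda\ln\lambda]^2+\lambda^2+\lambda^2\mathbb{E}[\sup_{t<s\le T}V_s^2\mid\cF_t])$. Collecting the three terms yields the first inequality.

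\emph{Bounded payoff.} At $t=0$, Lemma \ref{VMKestimate} gives $\mathbb{E}[K^{\lambda,n,+}_T]\le C$ uniformly in $n$ and $\lambda$. Moreover, $V$ is bounded whenever $P$ is, so the conditional supremum term is $\le C\lambda^2$. Substituting these bounds gives $(V^\lambda_0-V_0)^2\le C(\lambda+[\lambda\ln\lambda]^2+\lambda^2)$.
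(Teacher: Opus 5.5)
Your skeleton matches the paper's proof: let $n\to\infty$ in Lemma~\ref{lem: estVlnV}, keep the $K^{\lambda,n,+}$ term, kill the reflection term by dominated convergence once $V^\lambda\ge P$ is known, control the $\Phi^-$ term using $V\ge P$, and use Lemma~\ref{VMKestimate} when $P$ is bounded.

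\textbf{The gap.} You never establish $V^\lambda\ge P$, and the reflection term cannot be removed without it. None of your candidate arguments works as written.
\begin{itemize}
\item The route through Theorem~\ref{t3.1} blows up, as you note yourself.
\item The ``cleaner route'' is mis-stated. Lemma~\ref{lemma1.1} with $c=\varepsilon=\lambda/n$ gives $\Phi_{\lambda,n}(s,x)\ge n(P_s-x)^+-\lambda\big(1-\ln(1-e^{-1})-\ln\lambda\big)-\lambda[\ln|P_s-x|]^+$. This error is uniform in $n$, not $O(\lambda\ln n/n)$. So comparing with a classical penalized solution only gives $V^\lambda\ge P$ up to an error of order $\lambda|\ln\lambda|$.
\item ``Peng-style reasoning'' is not an argument.
\end{itemize}
In the paper this step is supplied by Lemma~\ref{lem: VlgP}: comparison with a linear BSDE whose rate $\beta_{m,\delta}$ tends to $m/2$, an explicit representation at stopping times, and the section theorem.

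A short self-contained fix is available. Note that $\Phi(z)=\ln\mathbb{E}[e^{zU}]$ with $U$ uniform on $[0,1]$. Jensen then gives $\Phi(z)\ge z/2$, and hence $\Phi_{\lambda,n}(s,x)=\lambda\Phi\big(n(P_s-x)/\lambda\big)+\lambda\ln n\ge \frac n2(P_s-x)$ for all $x$. By comparison, $V^{\lambda,n}\ge Z^n$, where $Z^n$ solves the linear BSDE with driver $\frac n2(P_s-z)$. For every stopping time $\tau$,
\[
Z^n_\tau=\mathbb{E}\Big[e^{-\frac n2(T-\tau)}P_T+\int_\tau^T \tfrac n2\, e^{-\frac n2(s-\tau)}P_s\,ds\,\Big|\,\cF_\tau\Big]\longrightarrow P_\tau \quad\text{a.s.},
\]
by right-continuity of $P$ and domination by $\sup_s|P_s|$. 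Hence $V^\lambda_\tau\ge P_\tau$ for every stopping time $\tau$, and the section theorem gives $V^\lambda\ge P$ up to indistinguishability. This uses no continuity. It therefore also avoids invoking Lemma~\ref{lem: VlgP}, which the paper places after imposing Assumption~\ref{hyp: Fcont}, inside subsection~\ref{subsec:4.2}.

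\textbf{A smaller point.} Your uniform-in-$n$ bound on $\Phi^-_{\lambda,n}(s,V_s)$ carries the constant $\lambda|\ln(1-e^{-1})|$ and the $\vee 1$. After squaring, this leaves an additive $C\lambda^2$ that does not appear in the first displayed inequality; it is harmless only for the bounded-payoff estimate. The paper avoids it by passing to the limit instead. Since $\Phi_{\lambda,n}$ increases in $n$, $\Phi^-_{\lambda,n}(s,V_s)$ decreases to $\lambda\ln\big((V_s-P_s)/\lambda\big)\I_{\{V_s\ge P_s+\lambda\}}$, and your bound serves as the dominating function. On that set, $0\le\lambda\ln\big((V_s-P_s)/\lambda\big)\le\lambda V_s-\lambda\ln\lambda$, which yields only $(\lambda\ln\lambda)^2$ and $\lambda^2V_s^2$ terms.
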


\begin{proof}
	Taking the lower limit in the estimate of Lemma \ref{lem: estVlnV} yields
	\begin{align*}
		e^{t}(V_{t}^{\lambda} - V_{t})^2 
		&\leq 2\lambda e^{T} \liminf_{n\rightarrow\infty}\mathbb{E}[K^{\lambda,n,+}_T|\cF_t] +2e^{T} \liminf_{n \rightarrow \infty} \mathbb{E}[\int^T_t (V^{\lambda,n}_s - P_s)^{-}dK_s| \cF_t] \\
		&\hspace{1em} + \liminf_{n \rightarrow \infty}\mathbb{E}[\int^T_t e^{s}[\Phi^-_{\lambda,n}(s,V_s)]^2 ds | \cF_t].
	\end{align*}
	
	By Lemma \ref{lem: VlgP} and dominated convergence,
	\begin{align*}
		e^{t}(V_{t}^{\lambda} - V_{t})^2 
		& \leq 2\lambda e^{T} \liminf_{n\rightarrow\infty} \mathbb{E}[K^{\lambda,n,+}_T|\cF_t]  \\
		& \quad + \mathbb{E}[\int^T_t e^{s} \left[\lambda\ln\left(\lambda^{-1}(V_s - P_s)\right)\right	]^2\I_{\{V_s \geq P_s + \lambda\}} ds | \cF_t]. 
	\end{align*}
	
	To proceed, we note that for $V_s -P_s\geq \lambda$
	\begin{align*}
		\lambda\ln\left(\frac{V_s - P_s}{\lambda}\right) 
		&\leq \lambda \ln(V_s) - \lambda \ln (\lambda) \\
		& \leq \lambda V_s- 2\lambda \ln (\lambda),
	\end{align*}
	
	\noindent where the last step uses $|\ln(x)| \leq \max\{-\ln(\lambda), x\}$ for $x\geq \lambda$. This gives
	\begin{align*}
		& e^{t}(V^{\lambda}_t - V_t)^2 \\
		& \leq 2\lambda e^{T}\liminf_{n\rightarrow \infty} \mathbb{E}[K^{\lambda,n,+}_T|\cF_t] + 4(e^{T} - e^{t})(\lambda \ln(\lambda))^2 +  2\lambda^2(e^{T} - e^{t})\mathbb{E}\big[\sup_{t<s\leq T} V_s^2 \,\big|\, \cF_t\big] \\
		& \leq C\Big(\lambda\liminf_{n\rightarrow \infty} \mathbb{E}[K^{\lambda,n,+}_T|\cF_t] + (\lambda \ln(\lambda))^2 +  \lambda^2\mathbb{E}\big[\sup_{t<s\leq T} V_s^2 \,\big|\, \cF_t\big]\Big). 
	\end{align*}

	For $t=0$, we obtain
	\begin{align*}
		(V^{\lambda}_0 - V_0)^2 
		& \leq C\Big( \lambda \liminf_n \mathbb{E}[K^{\lambda,n,+}_T] + (\lambda \ln(\lambda))^2 +  \lambda^2\mathbb{E}\big[\sup_{0\leq s\leq T} V_s^2 \big] \Big).
	\end{align*}
	
	If the payoff $P$ is bounded, then $\sup_{0\leq s\leq T} V_s^2$ is bounded as well, which combined with Lemma \ref{VMKestimate} yields the simplified estimate
	\begin{align*}
		(V^{\lambda}_0 - V_0)^2 
		&\leq C (\lambda + (\lambda \ln(\lambda))^2 +  \lambda^2 ).
	\end{align*}
	This completes the proof.
\end{proof}

\subsection{Reflected BSDE with a Singular Driver}\label{sss3.2.3}
In this subsection, we move beyond the non-asymptotic estimates of
subsection~\ref{subsec:4.2} and investigate the RBSDE satisfied by the limit process
$V^\lambda$. Under additional assumptions, we show that $V^\lambda$, defined
as the monotone limit of $V^{\lambda,n}$, can be characterized as the value
component of a reflected BSDE with a singular driver. In particular, we prove
the existence of processes $M^\lambda \in \mathcal{H}^2$ and
$A^\lambda \in \mathcal{K}^2$ such that
$(V^\lambda, M^\lambda, A^\lambda)$ solves~\eqref{entropyrbsde}. We then
derive an optimal stopping representation for $V^\lambda$, establishing a
direct link between the singular RBSDE and the corresponding stopping
problem. For technical convenience, we impose the following assumption.

\begin{hypothesis} \label{hyp: Fcont}
	All $\FF$ martingales are continuous, and the payoff process $P$ is continuous.
\end{hypothesis}

\begin{remark}
	The above assumption, together with Lemma \ref{lem: VlgP}, ensures that the limit process $V^\lambda$ possesses continuous sample paths. This continuity is essential: it allows us to invoke the Doob-Meyer decomposition to identify the martingale component $M^\lambda$ and the increasing process $A^\lambda$ in the reflected BSDE. 
\end{remark}


\begin{lemma}\label{lem: VlgP} 
	For any $\lambda \in (0,1]$, the entropy-regularized penalization scheme $\{V^{\lambda, n}\}_{n\geq 2}$ is a Cauchy sequence in $\cS^{2}$, and its limit satisfies: for all $t\in [0,T]$, $V^\lambda_t := \lim_{n} V^{\lambda, n}_t  \geq P_t$ a.s.
\end{lemma}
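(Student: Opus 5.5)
The plan has two parts: first show $V^\lambda\ge P$ by comparing with a classical penalization scheme, then upgrade the monotone pointwise convergence to convergence in $\cS^2$.

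\emph{Lower barrier.} I would bound the driver from below by a classical penalization driver. By Lemma~\ref{lipphi}, $\Phi_n$ is convex with $\Phi_n(0)=\ln n\ge 0$ and $\Phi_n'(0)=n/2$. Convexity then gives $\Phi_n(y)\ge \ln n+\tfrac n2 y\ge \tfrac n2 y$ for $y\ge 0$, hence
$$\Phi^+_{\lambda,n}(s,x)\ge \tfrac n2 (P_s-x)^+ .$$
For the negative part, Corollary~\ref{cor3.1} says $\Phi^-_{\lambda,n}(s,\cdot)$ is Lipschitz with a constant $c$ independent of $n$. It vanishes at the root $P_s-\lambda\Phi_n^{-1}(0)\ge P_s$ by \eqref{ineq:ps:root:phin}, so $\Phi^-_{\lambda,n}(s,x)\le c(x-P_s)^+$. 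Together,
$$\Phi_{\lambda,n}(s,x)\ \ge\ g_n(s,x):=\tfrac n2(P_s-x)^+-c(x-P_s)^+ .$$
Let $Y^n$ solve the Lipschitz BSDE with terminal value $P_T$ and driver $g_n$. The comparison theorem (Theorem~3.4 in \cite{OZ}) gives $V^{\lambda,n}\ge Y^n$. Now $g_n$ is exactly a classical penalization of the reflected BSDE with Lipschitz driver $-c(x-P_s)^+$ and lower obstacle $P$. By the classical penalization results (El Karoui \emph{et al.}~\cite{EKPPQ1997}, continuous obstacle under Assumption~\ref{hyp: Fcont}), $Y^n\uparrow Y$ with $Y_t\ge P_t$ for all $t$, and the convergence holds in $\cS^2$. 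Letting $n\to\infty$ yields $V^\lambda_t\ge Y_t\ge P_t$ a.s. As a by-product, since $\sup_n\EE[(K^{\lambda,n,+}_T)^2]\le C$ by Lemma~\ref{VMKestimate}, we also get $\EE[(\int_0^T (P_s-V^{\lambda,n}_s)^+ds)^2]\le 4C/n^2$.

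\emph{Cauchy property.} Since $V^{\lambda,n}\uparrow V^\lambda$ pointwise and $\sup_n\EE[\sup_t|V^{\lambda,n}_t|^2]<\infty$ by Lemma~\ref{VMKestimate}, the difficulty is uniformity in $t$. I would follow Peng's monotone limit theorem~\cite{P1999}. Write $V^{\lambda,n}=V^{\lambda,n}_0-K^{\lambda,n}+M^{\lambda,n}$, where $K^{\lambda,n,\pm}$ and $M^{\lambda,n}$ are bounded in $\cS^2$ and $\cH^2$ uniformly in $n$ (Lemma~\ref{VMKestimate}). Peng's theorem then gives that $V^\lambda$ is càdlàg, that $M^{\lambda,n}$ converges weakly in $\cH^2$ (strongly in $L^p$, $p<2$), and that $V^{\lambda,n}\to V^\lambda$ in $L^2(dt\otimes d\PP)$. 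The possible jumps of $V^\lambda$ come only from the predictable increasing part. Since $\Phi^+_{\lambda,n}$ is supported on $\{x\le P_s+\lambda\}$ and $\Phi^-$ is uniformly Lipschitz, a jump of $V^\lambda$ would force $V^\lambda_{t-}>V^\lambda_t\ge P_t$. Such a jump is predictable, yet it would have to be driven by $K^{\lambda,n,+}$, which charges only where $V^{\lambda,n}\le P+\lambda$. I would rule it out as in El Karoui \emph{et al.}: all martingales are continuous and $P$ is continuous, so the reflection at the continuous barrier produces no jumps. With $V^\lambda$ continuous, Dini's theorem applied pathwise to the increasing continuous sequence $V^{\lambda,n}$ gives $\sup_t|V^\lambda_t-V^{\lambda,n}_t|\to 0$ a.s. Dominated convergence with dominating variable $\sup_t|V^{\lambda,1}_t|^2+\sup_t|V^\lambda_t|^2\in L^1$ (Corollary~\ref{cor: VlS2}) then yields convergence in $\cS^2$, hence the Cauchy property.

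\emph{Main obstacle.} I expect the continuity of $V^\lambda$ to be the delicate step. If the jump argument does not close directly, my fallback is a direct estimate. Apply Itô's formula to $(V^{\lambda,m}-V^{\lambda,n})^2$ for $m>n$ and use $V^{\lambda,m}\ge V^{\lambda,n}$ together with the fact that $\Phi^+$ is decreasing in $x$. This bounds the cross term by $(V^{\lambda,m}-V^{\lambda,n})(\Phi^+_{\lambda,m}-\Phi^+_{\lambda,n})(s,V^{\lambda,n}_s)$. On $\{V^{\lambda,n}_s>P_s\}$ this is exponentially small by \eqref{deriv:phin}. On $\{V^{\lambda,n}_s\le P_s\}$ one has $V^{\lambda,m}-V^{\lambda,n}\le (P-V^{\lambda,n})^+ + (V^{\lambda,m}-P)$, which I would control using the $O(1/n)$ bound on $\int(P-V^{\lambda,n})^+ds$ above, combined with BDG as in Lemma~\ref{VMKestimate}.
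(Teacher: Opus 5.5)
Your lower-barrier step is correct, and it takes a different route from the paper's. The convexity bound $\Phi_{\lambda,n}(s,x)\ge \frac n2(P_s-x)^+-c(x-P_s)^+$ holds. Comparing with the classical penalization of the RBSDE with driver $-c(y-P_s)^+$ replaces the paper's linear BSDE $Y^{m,\delta}$, its $\delta\to0$ and $m\to\infty$ limits, and the section theorem. The only caveat is that \cite{EKPPQ1997} is written for Brownian filtrations; its argument does carry over under Assumption~\ref{hyp: Fcont}.

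The gap is in the $\cS^2$-Cauchy part, i.e.\ the continuity of $V^\lambda$. Your jump-exclusion heuristic does not close. Since $V^{\lambda,n}\le V^\lambda$, knowing $V^\lambda_{t-}>V^\lambda_t\ge P_t$ does not keep $V^{\lambda,n}$ out of $\{V^{\lambda,n}\le P+\lambda\}$ near $t$, which is where $K^{\lambda,n,+}$ charges. So nothing you wrote prevents the mass of $K^{\lambda,n,+}$ from concentrating just before a predictable time. Ruling this out needs a Skorokhod condition $\int_0^T(V^\lambda_{s-}-P_{s-})\,dA_s=0$ for the increasing process produced by Peng's theorem, and that is not available at this stage. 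The paper obtains the reflection structure only in Theorem~\ref{t4.2}, using the continuity proved in this lemma, so ``reflection at a continuous barrier produces no jumps'' is circular here. It is also not how \cite{EKPPQ1997} proceeds: there, continuity of the limit is an output of the Cauchy estimate, not an input.

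Your fallback fails as sketched. Replacing $\Phi^+_{\lambda,m}(s,V^{\lambda,m}_s)$ by $\Phi^+_{\lambda,m}(s,V^{\lambda,n}_s)$ discards the only structure that makes the cross term small. With $y=(P_s-V^{\lambda,n}_s)/\lambda\ge 0$ you get $(\Phi^+_{\lambda,m}-\Phi^+_{\lambda,n})(s,V^{\lambda,n}_s)=\lambda\ln\frac{e^{my}-1}{e^{ny}-1}\ge\lambda\ln(m/n)$, and this grows like $(m-n)(P_s-V^{\lambda,n}_s)$ for large $y$. On $\{P_s<V^{\lambda,n}_s\le P_s+\delta\}$ the difference is also not uniformly exponentially small: it tends to $\lambda\ln(m/n)$ as $V^{\lambda,n}_s\downarrow P_s$. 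Your bound therefore blows up as $m\to\infty$ with $n$ fixed. An $O(1/n)$ bound on $\int_0^T(P_s-V^{\lambda,n}_s)^+ds$ cannot absorb factors like $m-n$ or $\ln(m/n)$.

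The missing idea is the $\epsilon$-truncation. Split $\Phi_{\lambda,n}(s,V^{\lambda,n}_s)$ into $\Phi_{\lambda,n}(s,V^{\lambda,n}_s\vee(P_s+\epsilon))$ plus the density of $A^{\lambda,n,\epsilon}$. The first part has the $n$-uniform Lipschitz constant $\lambda/\epsilon$, and there differences of drivers at the same point are uniformly small. The second part is bounded in $\cS^2$ uniformly in $n$ and charges only $\{V^{\lambda,n}\le P+\epsilon\}$. This gives
\[
\int_0^T (V^{\lambda,n}_s-V^{\lambda,m}_s)\,d(A^{\lambda,n,\epsilon}_s-A^{\lambda,m,\epsilon}_s)\le \int_0^T (V^{\lambda,m}_s-P_s)^-\,dA^{\lambda,n,\epsilon}_s+\int_0^T(V^{\lambda,n}_s-P_s)^-\,dA^{\lambda,m,\epsilon}_s+\epsilon\,(A^{\lambda,n,\epsilon}_T+A^{\lambda,m,\epsilon}_T).
\]
Here $\sup_t(V^{\lambda,m}_t-P_t)^-\to0$ follows from Dini applied to continuous processes decreasing to the continuous limit $0$, which needs no continuity of $V^\lambda$. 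Your comparison $V^{\lambda,n}\ge Y^n$ already gives $\sup_t(V^{\lambda,n}_t-P_t)^-\le\sup_t(Y^n_t-P_t)^-\to0$ in $L^2$. What you are missing is an It\^o estimate that pairs $V^{\lambda,n}-V^{\lambda,m}$ with these measures, rather than with an untruncated difference of drivers evaluated at the same point.
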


\begin{proof}
	\noindent \emph{Step 1.}
	For any $m, n \geq 2$, It\^o's formula yields 
	\begin{align} \label{ineq: ito_s2_cauchy_1}
		&e^{\beta t }(V^{\lambda,n}_t - V_t^{\lambda,m})^2 \\
		& \leq - 2\int^T_te^{\beta s} (V^{\lambda, n}_{s} - V^{\lambda,m}_{s}) d\bar M^{\lambda,n,m}_s + 2\int^T_t e^{\beta s }(V^{\lambda,n}_s - V_s^{\lambda,m})  d(A^{\lambda,n,\epsilon}_s - A^{\lambda,m,\epsilon}_s) \nonumber \\
		& \quad  + 2\int^T_t e^{\beta s}(V^{\lambda,n}_s - V_s^{\lambda,m})d(K^{\lambda,n,\epsilon,+}_s - K^{\lambda,m,\epsilon,+}_s) \nonumber \\
		& \quad  - 2 \int^T_t e^{\beta s}(V^{\lambda,n}_s - V_s^{\lambda,m})d(K^{\lambda,n,\epsilon,-}_s - K^{\lambda,m,\epsilon,-}_s) - \beta\int^T_t e^{\beta s}(V^{\lambda,n}_s - V_s^{\lambda,m})^2 ds,\nonumber 
	\end{align}
	where $\bar M^{\lambda,n,m} = M^{\lambda,n} - M^{\lambda,m}$. Without loss of generality, we assume $n > m$. We now estimate each term of the above decomposition. By \eqref{ALE}, the process $A^{\lambda, n, \epsilon}$ increases only on the set $\left\{V^{\lambda, n} \leq P + \epsilon\right\}$. Hence,
	\begin{align*}
		& \int^T_t (V^{\lambda,n}_s - V_s^{\lambda,m})  d(A^{\lambda,n,\epsilon}_s - A^{\lambda,m,\epsilon}_s) \\
		& = \int^T_t (V^{\lambda,n}_s - P_s -\epsilon)dA^{\lambda,n,\epsilon}_s  + \int^T_t(P_s + \epsilon - V_s^{\lambda,m})dA^{\lambda,n,\epsilon}_s \nonumber \\
		& \quad  -\int^T_t (V^{\lambda,n}_s - P_s -\epsilon)dA_s^{\lambda,m,\epsilon}  + \int^T_t(V_s^{\lambda,m}- P_s -\epsilon)dA_s^{\lambda,m,\epsilon}\\
		& \leq -\int^T_t(V_s^{\lambda,m} - P_s - \epsilon )dA^{\lambda,n,\epsilon}_s -\int^T_t (V^{\lambda,n}_s - P_s - \epsilon) dA_s^{\lambda,m,\epsilon}\\
		& \leq \int^T_t(V_s^{\lambda,m} - P_s - \epsilon)^{-}  dA^{\lambda,n,\epsilon}_s +\int^T_t (V^{\lambda,n}_s - P_s - \epsilon)^{-} dA_s^{\lambda,m,\epsilon} \\
		&\leq \int^T_0(V_s^{\lambda,m} - P_s)^{-}  dA^{\lambda,n,\epsilon}_s + \int^T_0(V_s^{\lambda,n} - P_s)^{-}  dA^{\lambda,m,\epsilon}_s + \epsilon A^{\lambda,n,\epsilon}_T + \epsilon A^{\lambda,m,\epsilon}_T.
	\end{align*}
	Since the map $x \mapsto \Phi^+_{\lambda,n}(s, x\vee (P_s+\epsilon))$ is decreasing, 
	\begin{align*}
		& \int^T_t e^{\beta s}(V^{\lambda,n}_s - V_s^{\lambda,m})d(K^{\lambda,n,\epsilon,+}_s - K^{\lambda,m,\epsilon,+}_s)  \\
		& = \int^T_t e^{\beta s}(V^{\lambda,n}_s - V_s^{\lambda,m})(\Phi^+ _{\lambda,n}(s, V^{\lambda, n}_s\vee (P_s+\epsilon))- \Phi^+ _{\lambda,m}(s, V^{\lambda, m}_s\vee (P_s+\epsilon))) ds\\
		& \leq \int^T_t e^{\beta s}(V^{\lambda,n}_s - V_s^{\lambda,m}) (\Phi^+_{\lambda,n}(s, V^{\lambda, m}_s\vee (P_s+\epsilon)) - \Phi^+_{\lambda,m}(s, V^{\lambda, m}_s\vee (P_s+\epsilon)) ) ds
	\end{align*}
	Furthermore, from Young's inequality, we have the following upper bound,
	\begin{align*}
		& \leq \int^T_0 e^{\beta s}\frac{1}{2}(V^{\lambda,n}_s - V_s^{\lambda,m})^2ds\\
		& \quad + \int^T_0 e^{\beta s}\frac{1}{2}(\Phi^+_{\lambda,n}(s, V^{\lambda, m}_s\vee (P_s+\epsilon)) - \Phi^+_{\lambda,m}(s, V^{\lambda, m}_s\vee (P_s+\epsilon)))^2 ds.
	\end{align*}
	
	For the second term in the above, we have
	\begin{align*}
		&\int^T_0 e^{\beta s} (\Phi^+_{\lambda,n}(s, V^{\lambda, m}_s\vee (P_s+\epsilon)) - \Phi^+_{\lambda,m}(s, V^{\lambda, m}_s\vee (P_s+\epsilon)))^2 ds \\
		&\leq e^{\beta T}\int^T_0 \left( \left(\lambda \Phi\left(-\frac{\epsilon}{\lambda / n}\right) + \lambda \ln n\right) - \left(\lambda \Phi\left(-\frac{\epsilon}{\lambda / m}\right) + \lambda \ln m\right)\right)^2 ds \\
		&= e^{\beta T}T \left( \lambda \ln\left(\frac{e^{-\frac{\epsilon}{\lambda / n}}-1}{- \frac{\epsilon}{\lambda / n}} \frac{- \frac{\epsilon}{\lambda / m}}{e^{-\frac{\epsilon}{\lambda / m}}-1} \right) + \lambda \ln \left(\frac{n}{m}\right)\right)^2  \\
		&= e^{\beta T}T\lambda^{2} \left( \ln\left(\frac{e^{-\frac{\epsilon}{\lambda / n}}-1}{e^{-\frac{\epsilon}{\lambda / m}}-1} \right)\right)^2,
	\end{align*}
	
	\noindent where we use the fact that the map $x \mapsto \Phi_{\lambda, n}(s, x\vee (P_{s} + \epsilon))^{+} - \Phi_{\lambda, m}(s, x\vee (P_{s} + \epsilon))^{+}$ is decreasing. In particular, since $n>m$, for any $x \leq P_{s} - \lambda \Phi_{m}^{-1}(0)$, 
	\begin{align*}
		&\Phi_{\lambda, n}(s, x\vee (P_{s} + \epsilon)) - \Phi_{\lambda, m}(s, x\vee (P_{s} + \epsilon)) \\
		&= \lambda \Phi\left(\frac{P_{s} - x \vee(P_{s}+\epsilon)}{\lambda / n}\right) - \lambda \Phi\left(\frac{P_{s} - x \vee(P_{s}+\epsilon)}{\lambda / m}\right) + \lambda \ln\left(\frac{n}{m}\right)  \\
		&= {\lambda}\ln\left(\frac{e^{\frac{P_{s} - x \vee(P_{s}+\epsilon)}{\lambda / n}}-1}{e^{\frac{P_{s} - x \vee(P_{s}+\epsilon)}{\lambda / m}}-1}\right),
	\end{align*}
	which is a decreasing function of the variable $x$ by Lemma \ref{lem: cauchy_increasing}. 
	For $x \geq P_{s} - \lambda \Phi_{m}^{-1}(0)$, the difference reduces to $\Phi_{\lambda, n}(s, x\vee (P_{s} + \epsilon))$ which is also decreasing with respect to $x$ by Lemma \ref{lipphi}.
	
	To estimate the integral against $K^{\lambda,n,\epsilon,-} - K^{\lambda,m,\epsilon,-}$ in equation \eqref{ineq: ito_s2_cauchy_1}, we apply Young's inequality to obtain
	\begin{align*}
		&\int^T_t e^{\beta s}(V^{\lambda,n}_s - V_s^{\lambda,m})d(K^{\lambda,n,\epsilon,-}_s - K^{\lambda,m,\epsilon,-}_s) \\
		&\leq \int^T_t e^{\beta s} (V^{\lambda,n}_s - V^{\lambda,m}_{s})^{2}ds + \int_{t}^{T}e^{\beta s} (\Phi^-_{\lambda,n}(s, V^{\lambda, n}_s) -\Phi^-_{\lambda,m}(s, V^{\lambda, m}_s))^{2} ds.
	\end{align*}
	
	Regarding the local martingale term in \eqref{ineq: ito_s2_cauchy_1}, observe that the process 
	$$
	\int^t_0e^{-\beta(t-s)} (V^{\lambda, n}_{s} - V^{\lambda,m}_{s}) d \bar M^{\lambda,n,m}_s,
	$$ 
	\noindent is a uniformly integrable martingale. This follows from the BDG and Young inequalities, since
	\begin{align*}
		&\EE\left[\left(\int_{0}^{T} e^{-2 \beta(T-s)} (V^{\lambda, n}_{s} - V^{\lambda,m}_{s})^{2} d [\bar M^{\lambda,n,m}]_s\right)^{\frac{1}{2}}\right] \\
		&\leq C \EE\left[\sup_{0\leq s \leq T}(V^{\lambda, n}_{s} - V^{\lambda,m}_{s})^{2} + [\bar M^{\lambda,n,m}]_T\right] \\
		&\leq C \EE\left[\sup_{0\leq s \leq T} (V^{\lambda, n}_{s})^{2} + \sup_{0\leq s\leq T}(V^{\lambda,m}_{s})^{2} + [\bar M^{\lambda,n,m}]_T\right] < \infty.
	\end{align*}
	Taking expectations on both sides of \eqref{ineq: ito_s2_cauchy_1} and combining the above estimates, and choosing $\beta=2$, we obtain the upper bound
	\begin{align} \label{ineq: ito_s2_cauchy_2}
		&\EE \left[\sup_{0 \leq t \leq T}(V_{t}^{\lambda,n} - V^{\lambda,m}_{t})^2 \right] 
		\leq  e^{2T} \EE\left[\int^T_0(V_s^{\lambda,m} - P_s)^{-} dA^{\lambda,n,\epsilon}_s\right]\\
		& \quad + e^{2T} \left[\int^T_0 (V^{\lambda,n}_s - P_s)^{-} dA_s^{\lambda,m,\epsilon}\right]
		+ \frac{e^{2 T}T\lambda^{2}}{2} \left[ \ln\left(\frac{e^{-\frac{\epsilon}{\lambda / n}}-1}{e^{-\frac{\epsilon}{\lambda / m}}-1} \right)\right]^2  \nonumber \\
		& \quad + e^{2T}\EE\left[\int^T_0 (\Phi^-_{\lambda,n}(s, V^{\lambda, n}_s) - \Phi^-_{\lambda,m}(s, V^{\lambda, m}_s))^2 ds\right] \nonumber `\\  
		& \quad + e^{2T}\epsilon \EE[A^{\lambda,n,\epsilon}_T] + e^{2T}\epsilon \EE[A^{\lambda,m,\epsilon}_T].\nonumber 
	\end{align}
	\vskip5pt
	
	\emph{Step 2.} We consider the limit as $n$, $m \rightarrow \infty$ of each term in \eqref{ineq: ito_s2_cauchy_2}. For the first term, recall that 
	\begin{align*}
		V^{\lambda,m}_t = P_T - \int^T_t dM^{\lambda,m}_s + \int^T_t dK^{\lambda,m, \epsilon, +}_s - \int^T_t dK^{\lambda,m, \epsilon, -}_s + \int^T_t dA^{\lambda,m,\epsilon}_s.
	\end{align*}
	We consider the unique solution $(Y^{m, \delta},M^{m,\delta})$ to the following BSDE with Lipschitz driver
	\begin{align*}
		Y^{m, \delta}_{t} = P_{T} - \int_{t}^{T}dM_{s}^{m,\delta} - \int_{t}^{T}\Phi^{-}_{\lambda, m}(s, V^{\lambda, m}_{s})ds + \int_{t}^{T} \beta_{m, \delta}(P_s + \delta - Y^{m, \delta}_s) ds,
	\end{align*}     
	where $\beta_{m, \delta} := m\Phi'\left(-\frac{\delta}{\lambda /m}\right)$ and $\delta < \epsilon$. By the inequality $a\vee b = (b-a)^+ + a$ and the mean-value theorem 
	\begin{align*}
		\Phi_{\lambda,m}(s, Y^{m, \delta}_s)\!- \Phi_{\lambda,m}(s, Y^{m, \delta}_s\vee (P_s+ \epsilon))  
		&\geq \Phi_{\lambda,m}(s, Y^{m, \delta}_s)- \!\Phi_{\lambda,m}(s, Y^{m, \delta}_s\vee (P_s+ \delta)) \\
		& \geq m \Phi'\left(-\frac{\delta}{\lambda /m}\right)(P_s + \delta - Y^{m, \delta}_s).
	\end{align*}  
	Moreover, since $-\Phi^{-}_{\lambda, m}(s, V^{\lambda, m}_{s}) \leq \Phi_{\lambda, m}(s, V^{\lambda, m}_{s} \vee (P_{s} + \epsilon))$, by the comparison theorem we have
	$$
	V^{\lambda, m}_t \geq Y^{m, \delta}_t, \quad t \in [0,T].
	$$ 
	
	From It\^o's formula, we have for any $\tau \in \cT_{0,T}$
	\begin{align*}
		Y^{m, \delta}_{\tau} = \EE\Bigg[ e^{-\beta_{m, \delta}(T-\tau)}P_{T} - \int_{\tau}^{T}e^{-\beta_{m, \delta}(s-\tau)}\left(\Phi^{-}_{\lambda, m}(s, V^{\lambda, m}_{s} ) + \beta_{m, \delta}(P_{s} + \delta) \right) ds 
		\Bigg | \cF_{\tau}\Bigg].
	\end{align*}
	
	\noindent Moreover, since $\beta_{m, \delta} \rightarrow \frac{m}{2}$ as $\delta \rightarrow 0$ and 
	\begin{align*}
		&e^{-\beta_{m, \delta}(T-\tau)}P_{T} - \int_{\tau}^{T}e^{-\beta_{m, \delta}(s-\tau)}\Phi^{-}_{\lambda, m}(s, V^{\lambda, m}_{s} )ds + \beta_{m, \delta}\int_{\tau}^{T}e^{-\beta_{m, \delta}(s-\tau)}(P_{s} + \delta)ds \\
		& \leq P_{T} + m\int_{0}^{T}(P_{s} + \delta)ds
		\leq C\Big(\sup_{0 \leq s \leq T}|P_{s}| + \delta\Big),
	\end{align*}
	
	\noindent the dominated convergence theorem guarantees that
	\begin{align*}
		Y^{m}_{\tau} & := \lim_{\delta \rightarrow 0}Y^{m, \delta}_{\tau}\\
		& \hspace{0.3em} = \EE\Big[e^{-\frac{m}{2}(T-\tau)}P_{T} - \int_{\tau}^{T}e^{-\frac{m}{2}(s-\tau)}\Phi^{-}_{\lambda, m}(s,V^{\lambda, m}_{s} )ds + \frac{m}{2}\int_{\tau}^{T}e^{-\frac{m}{2}(s-\tau)}P_{s} ds \Big | \cF_{\tau}\Big].
	\end{align*}
	
	We now pass to the limit as $m \rightarrow \infty$ in the above limit. It follows from the Cauchy-Schwarz inequality, the Lipschitz regularity of $\Phi_{\lambda, m}^{-}$ and the fact that $V^{\lambda, m} \in \cS^{2}$ that
	\begin{equation*}
		\EE \Big[\int_{\tau}^{T}e^{-\frac{m}{2}(s-\tau)}\Phi^{-}_{\lambda, m}(s, V^{\lambda, m}_{s})ds \Big | \cF_{\tau} \Big] \overset{a.s. \& L^{2}(\mathbb{P})}{\longrightarrow} 0, \quad \mbox{ as } m\rightarrow \infty,
	\end{equation*}
	
	\noindent so that, 
	\begin{align*}
		&Y^{m}_{\tau}=\EE \Big[e^{-\frac{m}{2}(T-\tau)}P_{T} - \int_{\tau}^{T}e^{-\frac{m}{2}(s-\tau)}\Phi^{-}_{\lambda, m}(s, V^{\lambda, m}_{s})ds + \frac{m}{2}\int_{\tau}^{T}e^{-\frac{m}{2}(s-\tau)}P_{s}ds \Big | \cF_{\tau} \Big] \\
		&\overset{a.s. \& L^{2}(\mathbb{P})}{\longrightarrow}  P_{T}\I_{\{\tau = T\}} + P_{\tau} \I_{\{\tau < T\}}\geq P_{\tau}.
	\end{align*}
	
	Hence, by the section theorem (see Theorem 3.2 in \cite{N2006})
	$
	V^{\lambda}  \geq P
	$
	which in turn, using the fact that $V^{\lambda, m} \uparrow V^{\lambda}$, yields
	$
	(V^{\lambda, m}_{s} - P_{s})^{-} \downarrow 0,
	$  a.s.
	
	It follows from Dini's theorem that $\sup_{0\leq s \leq T}(V_{s}^{\lambda, m} - P_{s})^{-} \downarrow 0$. Finally, by the Cauchy-Schwarz inequality, Lemma \ref{lem: Ale_Klne_s2_bound} and the monotone convergence theorem 
	\begin{align*}
		\EE\Big[\int_{t}^{T}(V_{s}^{\lambda, m} - P_{s})^{-}dA_{s}^{\lambda, n, \epsilon}\Big] &\leq \EE\left[\sup_{0\leq s \leq T}(V_{s}^{\lambda, m} - P_{s})^{-} A_{T}^{\lambda, n, \epsilon}\right] \\
		&\leq \Big(\EE\Big[\Big(\sup_{0\leq s \leq T}(V_{s}^{\lambda, m} - P_{s})^{-}\Big)^{2}\Big]\Big)^{\frac{1}{2}} \Vert A^{\lambda,n, \epsilon}\Vert_{\cS^{2}}^{\frac{1}{2}} \rightarrow 0,
	\end{align*}
	as $n, m \rightarrow \infty$. Similarly, we deduce that  $\EE[\int_{t}^{T}(V_{s}^{\lambda, n} - P_{s})^{-}dA_{s}^{\lambda, m, \epsilon}] \rightarrow 0$ as $n$, $m \rightarrow \infty$. 
	
	The third term in \eqref{ineq: ito_s2_cauchy_2} vanishes since for any $\epsilon \geq 0$
	\begin{equation*}
		\lim_{n,m\rightarrow\infty}     \ln\left(\frac{e^{-\frac{\epsilon}{\lambda / n}}-1}{e^{-\frac{\epsilon}{\lambda / m}}-1} \right) = 0.
	\end{equation*}
	
	For the fourth term in \eqref{ineq: ito_s2_cauchy_2}, by \eqref{support:phin:negative:part} and Corollary \ref{cor3.1}, we note that $\Phi^-_{\lambda,n}(s,x)$ is Lipschitz continuous with coefficient independent of $n$ and decreasing in $\lambda$, 
	\begin{align*} 
		(\Phi^-_{\lambda,n}(s, V^{\lambda, n}_s) - \Phi^-_{\lambda,m}(s, V^{\lambda, m}_s))^{2} \leq 4 \max\{1, 1/|\Phi_{2}^{-1}(0)|^{2}\}  (V^{\lambda}_s)^{2}.
	\end{align*}
	Also, in view of \eqref{klambdan} and the Lipschitz continuity of $\Phi_{\lambda, n}^{-}$ and $\Phi_{\lambda, m}^{-}$ with Lipschitz coefficient $C:= \max\{1,1/|\Phi^{-1}_2(0)|\}$ by Corollary \ref{cor3.1}, we have
	\begin{align*} 
		&|\Phi^-_{\lambda,n}(s, V^{\lambda, n}_s) - \Phi^-_{\lambda,m}(s, V^{\lambda, m}_s)|  \\
		&\leq |\Phi^-_{\lambda,n}(s, V^{\lambda, n}_s) - \Phi^{-}_{\lambda, \infty}(s, V_{s}^{\lambda})| + |\Phi^{-}_{\lambda, \infty}(s, V_{s}^{\lambda}) - \Phi^-_{\lambda,m}(s, V^{\lambda, m}_s)| \\
		&\leq \big|\Phi_{\lambda,n}^{-}(s, V^{\lambda,n}_s) - \Phi_{\lambda,n}^{-}(s, V^{\lambda}_s)| + |\Phi_{\lambda,n}^{-} (s, V^{\lambda}_s) -\Phi_{\lambda, \infty}^{-}(s, V^{\lambda}_s)\big| \\
		&\hspace{1em}+ \big|\Phi_{\lambda,m}^{-}(s, V^{\lambda,m}_s) - \Phi_{\lambda,m}^{-}(s, V^{\lambda}_s)| + |\Phi_{\lambda,m}^{-} (s, V^{\lambda}_s) -\Phi_{\lambda, \infty}^{-}(s, V^{\lambda}_s)\big| \\
		&\leq C|V^{\lambda,n}_s - V^{\lambda}_s| +  \big|\Phi_{\lambda,n}^{-}(s, V^{\lambda}_s) -\Phi_{\lambda, \infty}^{-}(s, V^{\lambda}_s)\big|\\
		& \hspace{1em} + C|V^{\lambda,m}_s - V^{\lambda}_s| +  \big|\Phi_{\lambda,m}^{-}(s, V^{\lambda}_s) -\Phi_{\lambda, \infty}^{-}(s, V^{\lambda}_s)\big|,
	\end{align*}
	which converges to zero as $n,m \rightarrow \infty$. Hence, by the dominated convergence theorem 
	\begin{equation*}
		\EE\left[\int^T_0 (-\Phi^-_{\lambda,n}(s, V^{\lambda, n}_s)+ \Phi^-_{\lambda,m}(s, V^{\lambda, m}_s))^2 ds\right] \rightarrow 0, \quad \mbox{ as } m,n \rightarrow \infty.
	\end{equation*}
	
	Finally, by Lemma \ref{lem: Ale_Klne_s2_bound}, we obtain
	\begin{equation*}
		e^{2T}\epsilon \EE[A^{\lambda,n,\epsilon}_T] + e^{2T}\epsilon \EE[A^{\lambda,m,\epsilon}_T]\leq 2Ce^{2T}\epsilon.
	\end{equation*}
	Note that the left-hand side of \eqref{ineq: ito_s2_cauchy_2} does not depend on $\epsilon$. Hence, after sending $n,m \to \infty$, we may let $\epsilon \downarrow 0$ and deduce that $(V^{\lambda,n})_{n\ge 1}$ is Cauchy sequence in $\cS^{2}$.    
\end{proof}

By Lemma \ref{lem: VlgP}, the sequence $V^{\lambda,n}$ converges to $V^{\lambda}$ in $\cS^{2}$ as $n\to\infty$, and therefore the limit inherits the continuity of the sample paths. We thus obtain the following. 

\begin{corollary}
	The process $V^\lambda$ has continuous sample paths.
\end{corollary}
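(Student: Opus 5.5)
The plan is to show that $V^\lambda$ is the uniform-in-time limit, along a subsequence, of processes with continuous paths. Under Assumption~\ref{hyp: Fcont} each $V^{\lambda,n}$ is continuous. Indeed, by \eqref{vn}, $V^{\lambda,n}_t = V^{\lambda,n}_0 + (M^{\lambda,n}_t - M^{\lambda,n}_0) - K^{\lambda,n}_t$. Here $M^{\lambda,n}$ is an $\FF$-martingale, hence continuous by assumption. The process $K^{\lambda,n}$ from \eqref{Kln} is a Lebesgue integral of a process that is integrable on $[0,T]$, because $x\mapsto\Phi_{\lambda,n}(s,x)$ is Lipschitz and $V^{\lambda,n},P\in\cS^2$; so $K^{\lambda,n}$ is absolutely continuous in $t$. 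Thus every $V^{\lambda,n}$ has continuous paths, and we may fix a continuous version of each.

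Next I would use Lemma~\ref{lem: VlgP}. It says $(V^{\lambda,n})_{n\ge2}$ is Cauchy in $\cS^2$, so it has an $\cS^2$-limit $\bar V$. Pick a subsequence $n_k$ with $\sum_k \|V^{\lambda,n_{k+1}}-V^{\lambda,n_k}\|_{\cS^2}<\infty$. By Borel--Cantelli (or Chebyshev plus summability), $\sup_{t\in[0,T]}|V^{\lambda,n_k}_t-\bar V_t|\to0$ almost surely. A uniform limit of continuous functions is continuous, so $\bar V$ has continuous paths on a full-measure set.

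It remains to identify $\bar V$ with the monotone pointwise limit $V^\lambda$ of \eqref{existence:limit:Vlambda}. For each fixed $t$, $V^{\lambda,n}_t\to V^\lambda_t$ a.s. by monotonicity \eqref{increasing:sequence:vlambdan}, and $V^{\lambda,n_k}_t\to\bar V_t$ a.s.; hence $V^\lambda_t=\bar V_t$ a.s. for every $t$. Taking $t$ over rationals together with continuity of $\bar V$ shows that $\bar V$ is a continuous modification of $V^\lambda$. There is also a cleaner route: the ordering $V^{\lambda,n}\le V^{\lambda,n+1}$ holds for every $t$ simultaneously once we pass to continuous versions, since it holds a.s. at rational times and the paths are continuous. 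So $V^\lambda=\sup_n V^{\lambda,n}$ pathwise, and this sup equals $\bar V$ pathwise on the full-measure set. Hence $V^\lambda$ itself, not merely a modification, is continuous.

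The main point of care is this identification step. One has to make sure the monotone limit is taken pathwise for continuous versions, so that the statement concerns $V^\lambda$ and not just an indistinguishable copy. The rest is routine given Lemma~\ref{lem: VlgP} and Assumption~\ref{hyp: Fcont}.
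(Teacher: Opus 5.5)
Your proof is correct and follows the same route as the paper. The paper observes that under Assumption~\ref{hyp: Fcont} each $V^{\lambda,n}$ has continuous paths, and that the $\cS^2$-convergence from Lemma~\ref{lem: VlgP} lets the limit inherit continuity. You also spell out two steps the paper leaves implicit: extracting a subsequence that converges uniformly almost surely, and identifying the $\cS^2$-limit with the monotone pointwise limit $V^\lambda$ up to indistinguishability.
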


For fixed $n \geq 2$ and $0\leq \lambda \leq 1$, using \eqref{Vln} we may rewrite $V^{\lambda,n}$ as
\begin{equation} \label{eq: V_lambda_n_doob}
	V^{\lambda,n}
	= V^{\lambda,n}_{0} + M^{\lambda,n} - \bigl(K^{\lambda,n,\epsilon} + A^{\lambda,n,\epsilon}\bigr).
\end{equation}
We then observe that $V^{\lambda,n} + K^{\lambda,n,\epsilon}$ is a continuous supermartingale in $\cS^{2}$. By the dominated convergence theorem, we pass the limit as $n \to \infty$ and obtain, for all $s \leq t$,
\begin{align} \label{super}
	\EE\big[V^{\lambda}_t + K^{\lambda,\epsilon}_t \big| \cF_s \big]
	\leq V^{\lambda}_s + K^{\lambda,\epsilon}_s,
\end{align}
where integrability follows from Corollary \ref{cor: VlS2} and Lemma \ref{lem: KleS2}, since $V^{\lambda}$, $K^{\lambda,\epsilon}$, and $P$ all lie in $\cS^{2}$. Consequently, the process $V^{\lambda} + K^{\lambda,\epsilon}$ is a continuous supermartingale of class~(D). 
By Lemma \ref{lem: VlgP}, together with the monotonicity of $K^{\lambda,\epsilon}$ in $\epsilon$, taking the limit as $\epsilon \to 0$ yields
\begin{align} \label{supermartingale}
	\EE\left[V^\lambda_t + K^\lambda_t \middle| \cF_s\right] \leq V^\lambda_s + K^\lambda_s .
\end{align}

Moreover, by Corollary \ref{cor: VlS2} and Lemma \ref{lem: KleS2}, both $V^\lambda$ and $K^\lambda$ lie in $\cS^{2}$; hence $V^\lambda + K^\lambda$ is a process of class (D). The Doob-Meyer decomposition applied to this continuous supermartingale then guarantees the existence of a uniformly integrable martingale $M^\lambda$ and an integrable, non-decreasing process $A^\lambda$ such that
\begin{align*}
	V^\lambda_t = P_T - \int^T_t dM^\lambda_s + \int^T_t \Phi_{\lambda,\infty}(s, V^{\lambda}_s) ds + (A^\lambda_T - A^\lambda_t).
\end{align*}
It remains to prove that $M^\lambda \in \cH^{2}$, and that the non-decreasing process $A^\lambda$ belongs to $\cK^{2}$ and satisfies the Skorokhod reflection condition $\int^T_0 ( V^\lambda_{s} - P_s) dA^\lambda_s = 0$. To this end, we show that the continuous supermartingale $V^{\lambda} + K^{\lambda}$ has an optimal stopping representation.

\begin{theorem}\label{t4.2}
	There exist processes $(M^\lambda, A^\lambda) \in \cH^2 \times \cK^2$ such that 
	\begin{align*}
		V^{\lambda}_t + K^{\lambda}_t & = \esssup_{\sigma \in \mathcal{T}_{t,T}}\mathbb{E}[P_{\sigma} + K^{\lambda}_{\sigma} |\cF_t] = M^\lambda_t - A^\lambda_t,
	\end{align*}
	
	\noindent and the process $A^\lambda$ satisfies the Skorokhod reflection condition $\int^T_0 (V^\lambda_s - P_s) \,  dA^\lambda_s = 0$.
\end{theorem}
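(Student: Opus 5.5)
We identify $Y:=V^\lambda+K^\lambda$ with the Snell envelope of the continuous obstacle $L:=P+K^\lambda$. We already know from \eqref{supermartingale} that $Y$ is a continuous supermartingale of class (D). By Lemma \ref{lem: VlgP} we also have $V^\lambda\ge P$, hence $Y\ge L$. Since the Snell envelope $S_t:=\esssup_{\sigma\in\mathcal{T}_{t,T}}\EE[L_\sigma|\cF_t]$ is the smallest supermartingale dominating $L$, this gives $Y\ge S$. It therefore remains to prove $Y\le S$ and then to read off the decomposition.

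\textbf{Upper bound.} For the inequality $Y\le S$ we work at the level of the truncated scheme and pass to the limit. For fixed $n$ and $\epsilon$, the process $V^{\lambda,n}+K^{\lambda,n,\epsilon}=V^{\lambda,n}_0+M^{\lambda,n}-A^{\lambda,n,\epsilon}$ by \eqref{eq: V_lambda_n_doob}. Moreover, $A^{\lambda,n,\epsilon}$ increases only on $\{V^{\lambda,n}\le P+\epsilon\}$. For $\delta>\epsilon$ we introduce the stopping time
\[
\tau_\delta:=\inf\{s\ge t: V^\lambda_s\le P_s+\delta\}\wedge T.
\]
Because $V^{\lambda,n}\uparrow V^\lambda$ uniformly in $L^2$ (Lemma \ref{lem: VlgP}), on $[t,\tau_\delta)$ we eventually have $V^{\lambda,n}>P+\epsilon$ outside a set of small probability. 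The increments of $A^{\lambda,n,\epsilon}$ on $[t,\tau_\delta]$ should therefore vanish in $L^1$ as $n\to\infty$; we control them using the $\cS^2$ bound of Lemma \ref{lem: Ale_Klne_s2_bound} together with $\PP(\sup_s|V^{\lambda,n}_s-V^\lambda_s|>\delta-\epsilon)\to0$. Passing to the limit in $n$ with dominated convergence, as in \eqref{super}, yields
\[
V^\lambda_t+K^{\lambda,\epsilon}_t=\EE\big[V^\lambda_{\tau_\delta}+K^{\lambda,\epsilon}_{\tau_\delta}\big|\cF_t\big].
\]
At $\tau_\delta$ we have $V^\lambda_{\tau_\delta}\le P_{\tau_\delta}+\delta$ on $\{\tau_\delta<T\}$ by continuity, and $V^\lambda_T=P_T$. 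Letting $\epsilon\downarrow0$ by monotone convergence (Lemma \ref{lem: KleS2}) gives
\[
Y_t\le \EE[L_{\tau_\delta}|\cF_t]+\delta\le S_t+\delta,
\]
and then $\delta\downarrow0$ gives $Y\le S$.

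\textbf{Decomposition and integrability.} Since $Y=S$ is a continuous class (D) supermartingale, Doob--Meyer gives $Y=M^\lambda-A^\lambda$ with $A^\lambda$ continuous and nondecreasing. The classical Snell envelope theory for a continuous obstacle says that $A^\lambda$ increases only on $\{S=L\}=\{V^\lambda=P\}$, which is exactly the Skorokhod condition. Alternatively, the same $\tau_\delta$-martingale identity shows directly that $A^\lambda$ is flat on $\{V^\lambda>P+\delta\}$ for every $\delta$. For square integrability, note that $L\in\cS^2$ because $P\in\cS^2$ and $K^\lambda\in\cS^2$ by Lemma \ref{lem: KleS2}. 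Hence Steps 1--4 of Lemma 3.3 in \cite{GIOOQ2017} give $S\in\cS^2$, $M^\lambda\in\cH^2$ and $A^\lambda\in\cK^2$. Equivalently, one can apply Itô's formula to $Y^2$ and use the standard BDG estimate together with $\int(Y-L)dA^\lambda=0$.

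\textbf{Main obstacle.} The hard step is showing that the $A^{\lambda,n,\epsilon}$ increments vanish on $[t,\tau_\delta]$. Monotone convergence $V^{\lambda,n}\uparrow V^\lambda$ by itself is not uniform pathwise. I will first try to use Dini's theorem, valid because all processes are continuous, to obtain pathwise uniform convergence. This ensures that for a.e.\ $\omega$ there is an $n_0(\omega)$ with $A^{\lambda,n,\epsilon}$ flat on $[t,\tau_\delta]$ for $n\ge n_0$. Uniform integrability from Lemma \ref{lem: Ale_Klne_s2_bound} then lets the expectation pass to the limit.
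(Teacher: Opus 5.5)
Your proof is correct. It has the same three-step structure as the paper:
\begin{itemize}
\item a lower bound from the minimality of the Snell envelope;
\item an upper bound by optional sampling in the penalized scheme;
\item Doob--Meyer plus the square-integrability argument of \cite{GIOOQ2017}.
\end{itemize}
The genuine difference is how you stop in the upper bound.

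The paper uses hitting times of the approximants, $\sigma^\epsilon_n=\inf\{s\ge t: V^{\lambda,n}_s\le P_s+\epsilon\}\wedge T$, on which $A^{\lambda,n,\epsilon}$ is exactly flat. This gives $V^{\lambda,n}_t+K^{\lambda,n,\epsilon}_t\le \EE[P_{\sigma^\epsilon_n}+K^{\lambda,n,\epsilon}_{\sigma^\epsilon_n}|\cF_t]+\epsilon$. The paper then uses $\sigma^\epsilon_n\uparrow\sigma^\epsilon$ (from monotonicity in $n$) to pass to the limit along moving stopping times, by dominated convergence and continuity of $P$, and finally lets $\epsilon\downarrow0$ along $\sigma^\epsilon\uparrow\sigma^*$.

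You instead freeze a single stopping time $\tau_\delta$ built from the limit $V^\lambda$, and pay for this by showing that the increments of $A^{\lambda,n,\epsilon}$ on $[t,\tau_\delta]$ vanish. This uses the $\cS^2$ convergence of Lemma~\ref{lem: VlgP}, and the resulting continuity of $V^\lambda$ (also needed for $V^\lambda_{\tau_\delta}\le P_{\tau_\delta}+\delta$), in an essential way. The paper's hitting times only need continuity of the approximants.

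In return, you obtain an exact identity $Y_t=\EE[Y_{\tau_\delta}|\cF_t]$. This gives the Skorokhod condition directly, as you note, without quoting Snell-envelope theory. Moreover, since $V^\lambda>P+\delta>P+\epsilon$ on $[t,\tau_\delta)$, the increments of $K^{\lambda,\epsilon}$ and $K^\lambda$ coincide there, so your $\epsilon\downarrow0$ limit is not actually needed.

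Finally, the step you flag as the main obstacle is already closed by the estimate in your first paragraph. On $\{\sup_s|V^{\lambda,n}_s-V^\lambda_s|<\delta-\epsilon\}$, the process $A^{\lambda,n,\epsilon}$ is flat on $[t,\tau_\delta]$. Cauchy--Schwarz and Lemma~\ref{lem: Ale_Klne_s2_bound} then give $\EE[A^{\lambda,n,\epsilon}_{\tau_\delta}-A^{\lambda,n,\epsilon}_t]\le C\,\PP(\sup_s|V^{\lambda,n}_s-V^\lambda_s|\ge\delta-\epsilon)^{1/2}\to0$. Dini is therefore unnecessary, though it works. So does the simpler observation that a monotone sequence with a uniformly convergent subsequence converges uniformly.
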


\begin{proof}
	
	\noindent \emph{Step 1.} From \eqref{supermartingale} and Lemma \ref{lem: VlgP} we know that $V^\lambda + K^{\lambda}$ is a continuous supermartingale and that $V^\lambda_t \geq P_t$ a.s. for all $t\in [0,T]$. Consequently, $V^\lambda_t + K^{\lambda}_t \geq P_t + K^{\lambda}_t$ and, by the defining property of the Snell envelope associated with the process $P + K^\lambda$, it follows that
	\begin{gather*}
		V^\lambda_t + K^{\lambda}_t \geq \esssup_{\sigma \in \mathcal{T}_{t,T}}\mathbb{E}[P_\sigma + K^{\lambda}_\sigma | \cF_t], \quad a.s.
	\end{gather*}
	
	\vskip5pt
	\emph{Step 2.} To establish the reverse inequality, we introduce the sequence of hitting times $\sigma_n^\epsilon := \inf\{s\geq t: V^{\lambda,n}_s \leq P_s + \epsilon\}\wedge T$. Since the sequence $(V^{\lambda, n})_{n\geq1}$ is non-decreasing, we have $\sigma_n^\epsilon \leq \sigma_{n+1}^\epsilon$ for all $n$. We may therefore define 
	$
	\sigma^\epsilon := \lim_{n\rightarrow \infty}\sigma_n^\epsilon,
	$ 
	\noindent which is a stopping time, as $\{\sigma^\epsilon \leq t\} = \cup_n \{\sigma^\epsilon_n \leq t\} \in \cF_t$. By \eqref{ALE}, we have $A_{\sigma^{\epsilon}_{n}}^{\lambda,n, \epsilon} - A_{t}^{\lambda,n, \epsilon} = 0$. Hence, invoking the decomposition \eqref{eq: V_lambda_n_doob} together with the optional sampling theorem, we obtain 
	\begin{align*}
		V^{\lambda,n}_t + K^{\lambda,n,\epsilon}_t
		& = \EE[M^{\lambda, n}_{\sigma^{\epsilon}_{n}} - A^{\lambda, n, \epsilon}_{t} | \cF_{t}] \\
		& =  \EE[V^{\lambda,n}_{\sigma_{n}^{\epsilon}} + K^{\lambda,n,\epsilon}_{\sigma_{n}^{\epsilon}} |\cF_{t}] \leq \mathbb{E}[P_{\sigma_n^\epsilon} + K^{\lambda,n,\epsilon}_{\sigma_n^\epsilon} |\cF_t] + \epsilon.
	\end{align*}

	It follows from Lemmas \ref{lem:phin_root} and \ref{lip}, together with the inequality $V_s^{\lambda, n}\leq V_s$ that $|\Phi_{\lambda,n} (s, V^{\lambda,n}_s\vee (P_s+ \epsilon))| \leq \frac{\lambda}{\epsilon}\left(|V^\lambda_s| + |P_s| + \lambda\right)$. Combining this estimate with the continuity of the mapping $s\mapsto P_s$, we may pass to the limit as $n\rightarrow \infty$ in the previous inequality. An application of the dominated convergence theorem, together with \eqref{eq3.24}, then yields
	\begin{align*}
		V^{\lambda}_t + K^{\lambda,\epsilon}_t
		& \leq \mathbb{E}[P_{\sigma^\epsilon} + K^{\lambda,\epsilon}_{\sigma^\epsilon} |\cF_t] + \epsilon.
	\end{align*}
	
	Next, we observe that the stopping times $\sigma^\epsilon$ form an increasing family as $\epsilon \in \mathbb{Q}\cap (0,\infty)$ decreases to zero. Consequently the limit $\sigma^*:= \lim_{\epsilon \rightarrow 0} \sigma^\epsilon$ exists and defines a stopping time. In order to pass to the limit as $\epsilon$, we first note that the process $K^{\lambda,\epsilon}$ can be decomposed as the sum of two integrable increasing processes,
	\begin{align*}
		K^{\lambda,\epsilon}_t  
		& = \int^t_0 \Phi_{\lambda,\infty}(s, V^{\lambda}_s\vee (P_s+ \epsilon))\I_{\{V^\lambda_s \leq P_s + \lambda\}} ds + \int^t_0 \Phi_{\lambda,\infty}(s, V^{\lambda}_s)\I_{\{V^\lambda_s > P_s + \lambda\}} ds.
	\end{align*}
	The first integral defines an integrable increasing process whose integrand is monotone in $\epsilon < \lambda$ as $\epsilon \rightarrow 0$, while the second integral is an integrable decreasing process. Therefore, letting $\epsilon \rightarrow 0$ and applying the monotone convergence theorem, we obtain 
	\begin{align*}
		V^{\lambda}_t + K^{\lambda}_t & \leq \mathbb{E}[P_{\sigma^*} + K^{\lambda}_{\sigma^*} |\cF_t] \leq \esssup_{\sigma \in \mathcal{T}_{t,T}}\mathbb{E}[P_{\sigma} + K^\lambda_{\sigma} |\cF_t]. 
	\end{align*}
	Combining the above with the converse bound established earlier, we conclude that 
	$$
	V^{\lambda}_t + K^{\lambda}_t = \esssup_{\sigma \in \mathcal{T}_{t,T}}\mathbb{E}[P_{\sigma} + K^{\lambda}_{\sigma} |\cF_t].
	$$ 	
	By the uniqueness of the Doob–Meyer decomposition and standard results from optimal stopping theory (see, for instance, Theorem D.13 in Karatzas and Shreve \cite{KS1998}), we deduce that the process $A^\lambda$ satisfies the required Skorokhod reflection condition. 
	
	\newpage
	\emph{Step 3.} To establish that $M^\lambda \in \cH^2$ and $A^\lambda\in \cK^2$, we follow the argument of Lemma 3.2 in Grigorova \emph{et al.} \cite{GIOOQ2017}, which relies on Theorem A.2 and Corollary A.1 therein. To this end, we consider the process
	\begin{gather*}
		Y_t := V^\lambda_t + K^\lambda_t  = \esssup_{\sigma \in \mathcal{T}_{t,T}}\mathbb{E}[P_{\sigma} + K^{\lambda}_{\sigma} |\cF_t] = M^\lambda_t - A^\lambda_t.
	\end{gather*}
	
	By Jensen’s inequality, we obtain
	\begin{align*}
		|Y_t| = |\esssup_{\sigma \in \mathcal{T}_{t,T}}\mathbb{E}[P_{\sigma} + K^{\lambda}_{\sigma} |\cF_t] | 
		& \leq \esssup_{\sigma \in \mathcal{T}_{t,T}}\mathbb{E}[|P_{\sigma} + K^{\lambda}_{\sigma}||\cF_t] \\
		& \leq \mathbb{E}[\,\sup_{0\leq t\leq T}|P_{t}| + \sup_{0\leq t\leq T}|K^{\lambda}_{t}||\cF_t].
	\end{align*}
	
	We therefore define $X :=\sup_{0\leq t\leq T}|P_{t}| + \sup_{0\leq t\leq T}|K^{\lambda}_{t}|$. Under assumption \ref{A}, by Lemma \ref{lem: KleS2} the Cauchy-Schwarz inequality, it follows that
	\begin{gather*}
		\EE[X^2] \leq C\|P\|_{\cS^2} + C\|K^\lambda\|_{\cS^2} < \infty.
	\end{gather*}
	Applying Doob’s martingale inequality yields
	\begin{gather*}
		\EE[\,\sup_{0\leq t\leq T} |Y_t|^{2}] \leq \EE[\,\sup_{0\leq t\leq T} |\EE[X|\cF_t]|^2] \leq C\EE[X^2].
	\end{gather*}
	Next, we note that the process $Y_t - \EE[Y_T|\cF_t]$ coincides with the potential generated by $A^\lambda$, namely 
	$$
	Y_t - \EE[Y_T|\cF_t] = \EE[A^\lambda_T - A^\lambda_t|\cF_t].
	$$
	\noindent Moreover,
	$|\EE[Y_T|\cF_t] - Y_t | \leq \EE[|Y_T||\cF_t] + |Y_t|  \leq 2\mathbb{E}[X|\cF_t]$. Invoking Theorem A.2 of \cite{GIOOQ2017}, we therefore deduce the existence of a constant $c>0$ such that $\EE[|A^\lambda_T|^{2}] \leq c\EE[X^2] < \infty$ which shows that $A^{\lambda} \in \cK^{2}$. Finally, since $V^\lambda, K^\lambda \in \cS^{2}$ and $A^\lambda \in \cK^2$, it follows directly that the martingale component $M^{\lambda}$ belongs to $\cH^2$.
\end{proof}

\subsection{Probabilistic Interpretation} \label{subsec: prob_interpretation}
We provide a financial interpretation of the singular reflected BSDE identified in equation \eqref{entropyrbsde}. In particular, we interpret the
value process $V^\lambda$ as the price of an American-style claim subject to default risk.

To this end, we introduce an endogenous default intensity process $\gamma^\lambda$ defined by
\begin{align}
	\gamma^\lambda_s := \frac{\lambda }{P_s + \lambda- V^\lambda_s }  \ln\left(\frac{\lambda }{V^\lambda_s - P_s}\right)\quad \mathrm{and} \quad \Gamma^\lambda_t := \int^t_0 \gamma^\lambda_s ds. \label{intensity_gamma}
\end{align}
The process $\gamma^\lambda$ is strictly positive and therefore defines a valid default intensity. Moreover, it explodes whenever $V^\lambda$ approaches the payoff process $P$, reflecting an imminent default when the continuation value is close $P$. 


For $t\in[0,T]$, we define a default time $\sigma^\lambda_t$ by
$\sigma^\lambda_t  := \inf\{s\geq t: 1-e^{-(\Gamma^\lambda_s - \Gamma^\lambda_t)} \geq U\}$
where $U$ is a uniform random variable on $[0,1]$ independent of $\mathcal{F}_\infty$. By construction, $\sigma^\lambda_t$ is a random time with $\mathbb{F}$-intensity $\gamma^\lambda$. In particular, for any $u \geq t$, it satisfies
\begin{align}
	\mathbb{P}(\sigma^\lambda_t > u \,|\, \mathcal{F}_\infty) = \mathbb{P}(\sigma^\lambda_t > u \,|\, \mathcal{F}_u) = e^{-(\Gamma^\lambda_u - \Gamma^\lambda_t)}, \label{H}
\end{align}

\noindent highlighting the structural link between default risk and early exercise incentives.

\begin{theorem}
	\label{thm:representation}
	The process $V^\lambda$ admits the representation for each $t\in[0,T]$ 
	\begin{equation}
		\label{eq:main-rep}
		V_t^\lambda 
		= \esssup_{\tau_t\in\mathcal{T}_{t,T}}
		\mathbb{E}\Big[ P_{\tau_t} \I_{\{\sigma_t^\lambda>\tau_t\}} 
		+ (P_{\sigma_t^\lambda}+\lambda)\,\I_{\{\sigma_t^\lambda\le \tau_t\}} \;\Big|\;\mathcal{F}_t\Big].
	\end{equation}
	Consequently, $V^\lambda$ is the value of a defaultable American option with exercise payoff $P$ and recovery payoff $P+\lambda$, where default occurs at $\sigma^\lambda_t$ with intensity $\gamma^\lambda$.
\end{theorem}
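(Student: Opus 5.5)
The plan is to rewrite the singular driver as a linear ``default-intensity'' term and then apply the standard discounting argument for linear BSDEs, localizing away from the barrier where $\gamma^\lambda$ explodes.

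\emph{Step 1 (the key identity).} By the definition \eqref{intensity_gamma}, on $\{V^\lambda_s>P_s\}$ we have
\[
\Phi_{\lambda,\infty}(s,V^\lambda_s)=\lambda\ln\Big(\frac{\lambda}{V^\lambda_s-P_s}\Big)=\gamma^\lambda_s\big(P_s+\lambda-V^\lambda_s\big).
\]
Here $\gamma^\lambda_s$ is extended by continuity (value $1$) on $\{V^\lambda_s=P_s+\lambda\}$, and $\gamma^\lambda_s>0$ because numerator and denominator have the same sign. Since $K^\lambda\in\cS^2$ by Lemma~\ref{lem: KleS2}, the integral $\int_0^T\Phi_{\lambda,\infty}(s,V^\lambda_s)\,ds$ is finite a.s. Because $\Phi_{\lambda,\infty}=+\infty$ on $\{x\le P_s\}$, this forces $V^\lambda_s>P_s$ for $ds$-a.e.\ $s$, a.s. Hence the RBSDE \eqref{entropyrbsde} obtained in Theorem~\ref{t4.2} can be written as
\[
dV^\lambda_s=-\gamma^\lambda_s\,(P_s+\lambda-V^\lambda_s)\,ds-dA^\lambda_s+dM^\lambda_s .
\]

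\emph{Step 2 (discounting with localization).} For $k\ge1$ set $\rho_k:=\inf\{s\ge t: V^\lambda_s\le P_s+1/k\}\wedge T$. On $[t,\rho_k]$ the intensity $\gamma^\lambda$ is bounded by continuity of $V^\lambda$ and $P$ (Assumption~\ref{hyp: Fcont}). Itô's formula applied to $e^{-(\Gamma^\lambda_s-\Gamma^\lambda_t)}V^\lambda_s$ then gives, for any $\tau\in\cT_{t,T}$,
\[
V^\lambda_t=\EE\Big[e^{-(\Gamma^\lambda_{\tau\wedge\rho_k}-\Gamma^\lambda_t)}V^\lambda_{\tau\wedge\rho_k}+\int_t^{\tau\wedge\rho_k}e^{-(\Gamma^\lambda_s-\Gamma^\lambda_t)}\big[\gamma^\lambda_s(P_s+\lambda)\,ds+dA^\lambda_s\big]\,\Big|\,\cF_t\Big].
\]
The local martingale term is a true martingale since $M^\lambda\in\cH^2$ and the discount factor is at most $1$. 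Letting $k\to\infty$, $\rho_k\uparrow\rho^*:=\inf\{s\ge t:V^\lambda_s=P_s\}\wedge T$. The integrals converge by monotone convergence, and the terminal term converges by dominated convergence with dominating variable $\sup|V^\lambda|\in L^2$, using $\int_t^{\cdot}\gamma^\lambda_s e^{-(\Gamma^\lambda_s-\Gamma^\lambda_t)}(P_s+\lambda)\,ds\le\sup_s(P_s+\lambda)$. If $\Gamma^\lambda_{\rho^*}=\infty$ the discount factor vanishes, which is harmless. Beyond $\rho^*$ I would iterate the same argument (restart at later times), or more simply work with $\tau\wedge\rho_k$ directly.

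\emph{Step 3 (the two inequalities).} Conditionally on $\cF_\infty$, by \eqref{H},
\[
\EE\big[P_{\tau}\I_{\{\sigma^\lambda_t>\tau\}}+(P_{\sigma^\lambda_t}+\lambda)\I_{\{\sigma^\lambda_t\le\tau\}}\,\big|\,\cF_\infty\big]=e^{-(\Gamma^\lambda_\tau-\Gamma^\lambda_t)}P_\tau+\int_t^\tau e^{-(\Gamma^\lambda_s-\Gamma^\lambda_t)}\gamma^\lambda_s(P_s+\lambda)\,ds .
\]
For the inequality ``$\ge$'': in the identity of Step 2 use $V^\lambda_\tau\ge P_\tau$ and $dA^\lambda\ge0$. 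For the reverse inequality: take $\tau^*:=\inf\{s\ge t:V^\lambda_s=P_s\}\wedge T$. Before $\tau^*$, the Skorokhod condition of Theorem~\ref{t4.2} shows that $A^\lambda$ does not increase, and at $\tau^*$ we have $V^\lambda_{\tau^*}=P_{\tau^*}$ (or $\tau^*=T$ with $V^\lambda_T=P_T$). Hence equality holds for $\tau=\tau^*$, which yields \eqref{eq:main-rep}.

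The main obstacle is Step 2: the explosion of $\gamma^\lambda$ near the barrier. One must justify passing to the limit in the localization and ensure that contributions on $\{V^\lambda=P\}$ vanish. This rests on the a.e.\ strict inequality $V^\lambda>P$ derived from $K^\lambda\in\cS^2$, together with the bound $\gamma e^{-\Gamma}$-integral $\le1$ used for dominated convergence.
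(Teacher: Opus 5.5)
Your route is the paper's: you rewrite the driver as $\gamma^\lambda_s(P_s+\lambda-V^\lambda_s)$ and discount by $e^{-\Gamma^\lambda}$. You then get equality at $\tau^*=\inf\{s\ge t: V^\lambda_s=P_s\}$ from the Skorokhod condition, and a one-sided bound for arbitrary $\tau$ from $V^\lambda\ge P$ and $dA^\lambda\ge 0$. The paper applies It\^o's formula on $[t,\tau]$ with no localization at all.

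The gap is in your Step~2. Localizing with $\rho_k$ only gives the discounted identity up to $\tau\wedge\rho^*$, and $\rho^*=\tau^*$. That is enough for $V^\lambda_t\le\esssup$, since there you only need $\tau=\tau^*$. Your ``$\ge$'' direction, however, needs the identity up to an arbitrary $\tau\in\cT_{t,T}$, in particular on $\{\tau>\tau^*\}$. Neither continuation you offer delivers this.
\begin{itemize}
\item Working with $\tau\wedge\rho_k$ only bounds $V^\lambda_t$ below by the payoff of the rule $\tau\wedge\tau^*$, not of $\tau$.
\item Restarting at $\rho^*$ is degenerate: since $V^\lambda_{\rho^*}=P_{\rho^*}$, every restarted $\rho_k$ equals $\rho^*$.
\item After $\rho^*$ the process $V^\lambda$ may return to $P$ infinitely often, so patching excursion by excursion is not routine.
\end{itemize}

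The missing observation is that $\gamma^\lambda$, although unbounded, is pathwise integrable, so no localization at the barrier is needed. Step~1 gives $\int_0^T\Phi_{\lambda,\infty}(s,V^\lambda_s)\,ds<\infty$ a.s. Together with $\Phi^-_{\lambda,\infty}(s,x)\le x-P_s$, this yields $\int_0^T\Phi^+_{\lambda,\infty}(s,V^\lambda_s)\,ds<\infty$ a.s. Now split into two regions.
\begin{itemize}
\item On $\{V^\lambda_s-P_s<\lambda/2\}$, the denominator in \eqref{intensity_gamma} exceeds $\lambda/2$, so $\gamma^\lambda_s\le\frac{2}{\lambda}\Phi_{\lambda,\infty}(s,V^\lambda_s)$.
\item On $\{V^\lambda_s-P_s\ge\lambda/2\}$, $\gamma^\lambda_s\le 2\ln 2$. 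This holds because $y\mapsto\frac{\lambda}{\lambda-y}\ln(\lambda/y)$ is $\lambda$ times a secant slope of the concave function $\ln$, hence non-increasing.
\end{itemize}
Hence $\Gamma^\lambda_T<\infty$ a.s., so the case $\Gamma^\lambda=\infty$ you allow for never occurs. Then $e^{-\Gamma^\lambda}$ is a continuous finite-variation process, and It\^o's product rule holds directly on $[t,\tau]$ for every $\tau$. The stochastic integral is a true martingale, because its integrand is bounded by $1$ and $M^\lambda\in\cH^2$. The other terms are integrable by your bound $\int_t^\tau e^{-(\Gamma^\lambda_s-\Gamma^\lambda_t)}\gamma^\lambda_s(P_s+\lambda)\,ds\le\sup_s(P_s+\lambda)$ and by $A^\lambda\in\cK^2$. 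Equivalently, you can localize with $\inf\{s\ge t:\Gamma^\lambda_s-\Gamma^\lambda_t\ge k\}\wedge T$, which equals $T$ once $k$ is large. With this in place, your Step~3 goes through as written.
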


\begin{proof}
	In view of \eqref{intensity_gamma}, the backward dynamics of $V^\lambda$ can be written as 
	\begin{align*}
		V_t^\lambda = P_T + \int^T_t (P_s + \lambda - V_s^\lambda) \gamma^\lambda_s \, ds  - (M^\lambda_T - M^\lambda_t) + A^\lambda_T - A^\lambda_t.
	\end{align*}
	Define the stopping time $\tau^*_t := \inf\{s\geq t : V^\lambda_s = P_s\}$ and by applying It\^o's formula to the process $e^{-\Gamma^\lambda_s}V^\lambda_s$, and taking conditional expectations yields
	\begin{align*}
		e^{-\Gamma_t^\lambda}V^\lambda_t
		&= \mathbb{E}[P_{\tau^*_t} e^{-\Gamma_{\tau^*_t}^\lambda} + \int^{\tau^*_t}_t  e^{-\Gamma_s^\lambda}(P_s +\lambda) \gamma_s^{\lambda} ds \mid \mathcal{F}_t].
	\end{align*}
	where we used the fact that the reflection process $A^\lambda$ does not increase on $[t,\tau^*_t]$. Using the conditional survival probability \eqref{H}, it follows that
	\begin{align*}
		V^{\lambda}_t & \leq \esssup_{\tau_t \in \mathcal{T}_{t,T}}\mathbb{E}[P_{\tau_t}e^{-(\Gamma_{\tau_t}^\lambda-\Gamma_t^\lambda)}+ \int^{\tau_t}_t (P_s + \lambda) \gamma_s^{\lambda} e^{-(\Gamma_s^\lambda-\Gamma_t^\lambda)}ds|\mathcal{F}_t]\\
		& = \esssup_{\tau_t \in \mathcal{T}_{t,T}}\mathbb{E}[P_{\tau_t}\I_{\{\sigma^\lambda_t > \tau_t\}} + \I_{\{\sigma^\lambda_t\leq \tau_t\}}(P_{\sigma^\lambda_t} + \lambda)|\mathcal{F}_t]
	\end{align*}
	
	To prove the reverse inequality, let $\tau_t \geq t$ be an arbitrary $\mathbb{F}$-stopping time. Repeating the above computation up to $\tau_t$ yields
	\begin{align*}
		e^{-\Gamma_t^\lambda}V^\lambda_t
		&\geq  \mathbb{E}[P_{\tau_t} e^{-\Gamma_{\tau_t}^\lambda} + \int^{\tau_t}_t  e^{-\Gamma_s^\lambda}(P_s +\lambda) \gamma_s^{\lambda} ds \mid \mathcal{F}_t].
	\end{align*}
	where we have used the fact that $A^\lambda$ is increasing and $P \leq V^\lambda$. Combining the above computations, we deduce that
	\begin{align*}
		V^{\lambda}_t
		& = \esssup_{\tau_t \in \mathcal{T}_{t,T}}\mathbb{E}[P_{\tau_t}\I_{\{\sigma^\lambda_t > \tau_t\}} + \I_{\{\sigma^\lambda_t\leq \tau_t\}}(P_{\sigma^\lambda_t} + \lambda)|\mathcal{F}_t].
	\end{align*}
\end{proof}


\section{Numerical Experiments}\label{NEP}
In this section, we illustrate the practical feasibility of our methodology on a simple low-dimensional example with $d=2$.
We consider the symmetric case of an American max-call option.
Specifically, the underlying assets are assumed to follow a $d$-dimensional Black--Scholes model with dividends,
\begin{equation} \label{eq: GBMwD}
	S_t^i = S_0^i \exp\big((r - \delta - \sigma^2/2) t + \sigma W_t^i\big), \quad i = 1, \dots, d,
\end{equation}

\noindent where $S_0^i$ denotes the initial asset prices, $r$ the risk-free interest rate, $\delta$ the constant dividend yield, $\sigma$ the volatility parameter, and
$W=(W^1,\dots,W^d)$ a standard $d$-dimensional Brownian motion.
Given a strike price $K$, the value of the American max-call option is
\begin{align*}
	\sup_{\tau \in \mathcal{T}_{0,T}} \mathbb{E}\!\left[e^{-r\tau} \Big(\max_{1 \leq i \leq d} S_{\tau}^{i} - K\Big)^{+}\right].
\end{align*}

Throughout the numerical experiment, we fix the parameters
\[
S_0^1 = S_0^2= S_0, \quad K = 50, \quad r = 0.05, \quad \sigma = 0.2, \quad \delta = 0.1, \quad T = 3.
\]
We discretise the time interval $[0,T]$ using the uniform time grid $t_{k}=k\Delta t $, $k=0,\cdots, N$, with mesh size $\Delta t = T/N$. We fix $N=100$ in our numerical experiments. We compare the prices obtained from the classical penalisation scheme~\eqref{pscheme}, the entropy-regularised penalised BSDE~\eqref{Vln}, and the PIA defined in~\eqref{eq: pin2}-\eqref{eq: pin3}, against a binomial tree benchmark. We note that the BSDE formulations introduced above do not explicitly include discounting. Applying It\^o's formula to the randomized stopping representation~\eqref{rstopping} shows that discounting by a constant rate $r$ corresponds to adding a $-rV$ term to the driver. 


	\subsection{Numerical Implementation of the PIA} \label{mIPIA}
	We initialize the PIA with $\mathscr{V}^{\lambda,0}=P_0+1$, to ensure that the initial value is strictly positive.
	Given $\mathscr{V}^{\lambda,m}$, the next iterate $\mathscr{V}^{\lambda,m+1}$ satisfies the linear BSDE
	\begin{align*}
		\mathscr{V}^{\lambda,m+1}_t  & = P_T - (\mathscr{N}^{\lambda,m+1}_T - \mathscr{N}^{\lambda,m+1}_t) + \int^T_t  \left\{ G(s, \mathscr{V}^{\lambda,m+1}_{s} , \pi^{m+1}_{s})  -r\mathscr{V}^{\lambda,m+1}_s \right\} \, ds, \nonumber
	\end{align*}
	where $G$ is defined in~\eqref{eq: Gpi_def}.
	By Theorem~3.3 in~\cite{OZ}, this BSDE satisfies
	\begin{align*}
		\mathscr{V}^{\lambda,m+1}_t & =\mathbb{E} \Big[ e^{-\int_t^T a^m_u \, du} \, P_T + \int_t^T e^{-\int_t^s a^m_u  \, du} \,b^m_s ds \Big| \mathcal{F}_t \Big]
	\end{align*}
	
	\noindent so that 
	$$
	\mathscr{V}^{\lambda,m+1}_{t_k}  = \mathbb{E}\Big[ e^{-\int^{t_{k+1}}_{t_k} a^{m}_{s} ds}  \mathscr{V}^{\lambda,m+1}_{t_{k+1}} + \int_{t_k}^{t_{k+1}} e^{-\int_{t_k}^s a^{m}_u \, du} \, b^{m}_s \, ds  \,\Big|\, \mathcal{F}_{t_k} \Big], \quad k=0, \cdots, N-1.
	$$
	We thus consider the approximation scheme
	\begin{align}
		\widehat{\mathscr{V}}_{t_k}^{\lambda, m+1}& = e^{-a_{t_k}^{m}\Delta t} \EE\left[\widehat{\mathscr{V}}^{\lambda,m+1}_{t_{k+1}} \Big | \cF_{t_k}\right] + \frac{b^{m}_{t_k}}{a^{m}_{t_k}}(1 - e^{-a^{m}_{t_k}\Delta t}), \quad k=N-1, \cdots, 0.
		\label{eq: PIA_linear}
	\end{align}
	where
	\begin{align*}
		a^m_{t_k} = \mu_{\pi^{m+1}_{t_k}}+r \qquad \text{and} \qquad b^m_{t_k} = 
		\lambda \Phi\!\Big( \frac{P_{t_k} - \mathscr{V}^{\lambda,m}_{t_k}}{\lambda/n} \Big)
		+ \lambda \ln(n)
		+ \mathscr{V}^{\lambda,m}_{t_k} (a^{m}_{t_k}-r).
	\end{align*}
	The conditional mean $\mu_{\pi^{m+1}_{t_k}}$ admits the explicit expression
	\begin{equation}
		\mu_{\pi^{m+1}_{t_k}} = \mu(\alpha_{t_k}^m,n) 
		:= \frac{n}{1 - e^{-\alpha_{t_k}^m n}} - \frac{1}{\alpha_{t_k}^m} \quad \mathrm{where} \quad \alpha_{t_k}^m = \frac{P_{t_k} - \widehat{\mathscr{V}}^{\lambda,m}_{t_k}}{\lambda}.
	\end{equation}
	Therefore, each policy update reduces to a regression step, which is iterated backwards in time for each $m$. Finally, all methods require the estimation of conditional expectations.
	We employ a least-squares regression based on the 13 basis functions proposed by Andersen and Broadie~\cite{AB2004}.

	We report in Table~\ref{table:results} the numerical prices obtained for the American max-call option using the entropy-regularised implicit BSDE solver (see $\theta$-scheme in for instance, Lionnet \emph{et al.}~\cite{LRS2015}). and the PIA.
	For comparison, we also include the results obtained from the classical penalisation approach of El Karoui \emph{et al.} \cite{EKPPQ1997} and a binomial tree approximation, which serves as a benchmark.
	All entropy-based methods are implemented with temperature parameter $\lambda=1/n$, and prices are reported for several values of the initial asset price $S_0$ and truncation level $n$.

	\begin{table}[ht]
		\centering
		\begin{tabular}{r r |>{\centering\arraybackslash}m{2.5cm} >{\centering\arraybackslash}m{2.5cm}
				|>{\centering\arraybackslash}m{2.0cm} >{\centering\arraybackslash}m{2.2cm}}
			{$S_0$} & {$n$} & {Implicit solver} & {PIA} & {Classical penalization} & {Binomial} \\
			\hline
			\hline
			90 & 10 & 7.388 & 7.463 & 8.208 & 8.296 \\
			90 & 100 & 8.150 & 8.231 & 8.424 & 8.296 \\
			90 & 1000 & 8.285 & 8.367 & 8.460 & 8.296 \\
			\hline
			100 & 10 & 13.246 & 13.349 & 14.040 & 14.211 \\
			100 & 100 & 14.086 & 14.213 & 14.357 & 14.211 \\
			100 & 1000 & 14.227 & 14.350 & 14.408 & 14.211 \\
			\hline
			110 & 10 & 20.821 & 20.926 & 21.494 & 21.799 \\
			110 & 100 & 21.678 & 21.814 & 21.914 & 21.799 \\
			110 & 1000 & 21.815 & 21.926 & 21.980 & 21.799 \\
			\hline
		\end{tabular}
		\caption{Results for American max-call option using implicit solver and policy improvement compared to classical penalization and binomial tree. The temperature parameter is set as $\lambda = 1/n$. The implicit solver uses 20 steps of Newton iterations, and the PIA is computed over 10 iterations. }\label{table:results}
	\end{table}
	
	Several observations follow from Table~\ref{table:results}. First, the
	entropy-regularized method, computed via the implicit solver or the
	PIA—produces prices that converge monotonically in $n$ and consistently
	approach the binomial benchmark. By contrast, the classical penalization
	scheme converges more slowly and tends to overestimate the option value.
	Second, the PIA yields prices that are systematically (slightly) higher than
	those from the implicit solver, in line with its theoretical interpretation
	as an increasing sequence of approximations to the entropy-regularized value
	function. Finally, even for moderate $n$, both entropy-based approaches provide accurate
	approximations to the benchmark, illustrating the practical efficiency and
	numerical stability of the methodology.

	\newpage
	\section{Appendix}\label{AL}
	
	The proofs of the first three lemmas are provided in the appendix of \cite{CFL2025}.

	\begin{lemma} \label{lem:psi:cdf}
		The function $\Psi$ defined in \eqref{def:psi:phi} is a cumulative distribution function.
	\end{lemma}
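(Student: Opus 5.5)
To show that $\Psi(x)=\frac{1}{x}\ln\big(\frac{e^x-1}{x}\big)$ is a cumulative distribution function, I will check four properties: $\Psi$ extends continuously to $\mathbb{R}$, it is non-decreasing, $\Psi(x)\to 0$ as $x\to-\infty$, and $\Psi(x)\to 1$ as $x\to+\infty$. Continuity gives right-continuity. The guiding idea is that $\Phi(x)=\ln\big(\int_0^1 e^{xu}\,du\big)$, since $\frac{e^x-1}{x}=\int_0^1 e^{xu}du$. So $\Phi$ is the cumulant generating function of a uniform random variable on $[0,1]$. Then $\Psi(x)=\Phi(x)/x$ is a difference quotient of a convex function through the origin, because $\Phi(0)=0$.

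\emph{Continuity.} The map $x\mapsto\int_0^1 e^{xu}du$ is smooth and positive, so $\Phi$ is smooth on $\mathbb{R}$ with $\Phi(0)=0$. Differentiating under the integral gives $\Phi'(0)=\int_0^1u\,du=\tfrac12$. Hence $\Psi$ extends continuously to $0$ by setting $\Psi(0)=\tfrac12$.

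\emph{Monotonicity.} By Hölder's inequality, $\Phi$ is convex; equivalently, $\Phi''(x)$ is the variance of $U$ under the tilted density $\propto e^{xu}$, so $\Phi''\ge 0$. For a convex function with $\Phi(0)=0$, the slope $x\mapsto(\Phi(x)-\Phi(0))/(x-0)$ is non-decreasing on $\mathbb{R}\setminus\{0\}$, by the three-chord lemma. Together with continuity at $0$, this shows $\Psi$ is non-decreasing on all of $\mathbb{R}$.

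\emph{Limits.} For $x\to+\infty$, write
\[
\Psi(x)=\frac{1}{x}\Big(x+\ln(1-e^{-x})-\ln x\Big)=1+\frac{\ln(1-e^{-x})-\ln x}{x}\longrightarrow 1 .
\]
For $x\to-\infty$, put $y=-x$. Then $\frac{e^x-1}{x}=\frac{1-e^{-y}}{y}$, so
\[
\Psi(x)=\frac{\ln y-\ln(1-e^{-y})}{y}\longrightarrow 0 .
\]
The same expression shows $\Psi>0$ everywhere, and the $+\infty$ computation shows $\Psi<1$, so $\Psi$ takes values in $[0,1]$.

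The only delicate step is monotonicity across $x=0$, where $\Psi$ is defined by extension. The convexity and slope argument handles it cleanly once $\Psi(0)=\Phi'(0)$ is identified. If one wanted strict monotonicity, it would follow from $\Phi''>0$, since a uniform random variable is non-degenerate under every tilt.
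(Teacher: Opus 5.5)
\textbf{Verdict: correct, and not comparable line by line.} The paper gives no proof of this lemma; it refers to the appendix of \cite{CFL2025}. Your argument is correct and complete on its own.

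\textbf{Why the steps hold.} The identity $\frac{e^x-1}{x}=\int_0^1 e^{xu}\,du$ makes $\Phi$ the cumulant generating function of a uniform variable. Hence $\Phi(0)=0$ and $\Phi'(0)=\tfrac12$, which gives the continuous extension $\Psi(0)=\tfrac12$. Convexity of $\Phi$ plus monotonicity of chord slopes from the origin then shows $\Psi$ is non-decreasing on all of $\mathbb{R}$, including across $0$. Both limit computations are right.

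\textbf{Comparison with the hands-on route.} The natural elementary alternative is to compute $\Psi'(x)=x^{-2}\bigl(x\Phi'(x)-\Phi(x)\bigr)$ for $x\neq 0$, sign the numerator, and patch $x=0$ by continuity. The numerator's sign is exactly your tangent-line inequality $\Phi(0)\ge \Phi(x)-x\Phi'(x)$ in disguise. So the two routes share the same core, but yours avoids special-casing the origin.

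\textbf{What your representation buys.} It yields the neighbouring appendix facts at no extra cost:
\begin{itemize}
\item $\Phi'(x)$ is the tilted mean of $U\in[0,1]$, so $0\le\Phi'\le 1$ (Lemma~\ref{lipphi}).
\item $\Phi''>0$ is the tilted variance; this is the strict convexity used in Lemma~\ref{lip}.
\item Writing $\Phi_n(x)=\ln\int_0^n e^{xu}\,du$ gives $0\le \Phi_n'\le n$ and monotonicity in $n$ directly. This is Lemma~\ref{lem:prop:phi:n}, whose proof the paper omits.
\end{itemize}

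\textbf{A small wording fix.} The $x<0$ expression only shows $\Psi>0$ on $(-\infty,0)$. For $x>0$ use $\frac{e^x-1}{x}>1$. In any case, $\Psi(\mathbb{R})\subset[0,1]$ already follows from monotonicity and the two limits, so that closing remark is redundant.
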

	
	\begin{lemma}\label{lipphi}
		The function $\Phi$ defined in \eqref{def:psi:phi} satisfies the properties that $ 0 \leq \Phi'(x)\leq 1$. In particular, $\Phi$ is Lipschitz continuous with Lipschitz constant~$1$.
	\end{lemma}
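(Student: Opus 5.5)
The plan is to recognise $\Phi$ as a log-moment generating function, which gives both bounds on $\Phi'$ at once without manipulating the explicit formula. For $x\neq 0$ we have $\frac{e^x-1}{x}=\int_0^1 e^{xu}\,du$, and at $x=0$ the right-hand side equals $1$, matching the continuous extension $\Phi(0)=0$. Hence
\begin{equation*}
\Phi(x)=\ln \int_0^1 e^{xu}\,du, \qquad x\in\mathbb{R},
\end{equation*}
so $\Phi$ is the cumulant generating function of a uniform random variable $U$ on $[0,1]$. In particular $\Phi$ is smooth on all of $\mathbb{R}$, including the origin, which removes any need to treat $x=0$ separately.

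The first step is to differentiate under the integral sign. This is justified because $u\in[0,1]$ is bounded, so $e^{xu}$, $ue^{xu}$ and $u^2e^{xu}$ are dominated uniformly for $x$ in compact sets. We get
\begin{equation*}
\Phi'(x)=\frac{\int_0^1 u\,e^{xu}\,du}{\int_0^1 e^{xu}\,du}=\int_0^1 u\,\rho_x(u)\,du, \qquad \rho_x(u):=\frac{e^{xu}}{\int_0^1 e^{xv}\,dv}.
\end{equation*}
Here $\rho_x$ is a probability density on $[0,1]$; it is exactly the Gibbs density appearing in \eqref{optimalpi} after rescaling. Thus $\Phi'(x)$ is the mean of a probability measure supported in $[0,1]$, and therefore $0\le \Phi'(x)\le 1$. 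Strictly, the inequalities are $0<\Phi'(x)<1$, since $\rho_x$ is not a point mass.

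As a sanity check I will compare this with the closed form: for $x\neq0$ the integral representation gives
\begin{equation*}
\Phi'(x)=\frac{1}{1-e^{-x}}-\frac{1}{x},
\end{equation*}
and letting $x\to 0$ recovers $\Phi'(0)=\tfrac12$, the value used in Lemma~\ref{lip}. Differentiating once more gives $\Phi''(x)=\int_0^1 u^2\rho_x(u)\,du-\big(\Phi'(x)\big)^2$, which is the variance under $\rho_x$. So $\Phi''\ge0$ and $\Phi'$ is non-decreasing, which is the monotonicity and convexity used later in the paper.

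Finally, the Lipschitz property follows from the mean value theorem: $|\Phi(x)-\Phi(y)|\le \sup_z|\Phi'(z)|\,|x-y|\le |x-y|$. I do not expect a real obstacle. The only point needing care is differentiability at the removable singularity $x=0$, and the integral representation handles it automatically. A direct attack on $\frac{1}{1-e^{-x}}-\frac1x\in[0,1]$ would instead require separate elementary inequalities for $x>0$ and $x<0$, such as $e^{x}\ge 1+x$, and I would avoid it.
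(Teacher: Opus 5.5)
Your proof is correct and complete. The paper does not prove this lemma in its text; it refers to the appendix of \cite{CFL2025}, so there is no in-paper argument to compare against.

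Writing $\Phi(x)=\ln\int_0^1 e^{xu}\,du$ identifies $\Phi$ as the cumulant generating function of a uniform variable. Then $\Phi'(x)$ and $\Phi''(x)$ are the mean and variance of the tilted density $\rho_x$ on $[0,1]$, and the removable point $x=0$ needs no separate treatment.

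The direct check that $\frac{1}{1-e^{-x}}-\frac1x\in[0,1]$ is also short. Both bounds reduce to $e^{y}\ge 1+y$, applied with some sign bookkeeping on each half-line. Your route, however, gives more for free, and several of the extras are used later in the paper:
\begin{itemize}
\item The strict signs $\Phi'>0$ and $\Phi''>0$ (strict because $\rho_x$ is not a point mass) and the value $\Phi'(0)=\tfrac12$. The proof of Lemma~\ref{lip} needs all three, although the statement here only asserts $0\le\Phi'\le 1$.
\item The scaled bound $0\le \Phi_n'(x)=n\Phi'(nx)\le n$ of Lemma~\ref{lem:prop:phi:n}, whose proof the paper omits. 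It follows at once from $\Phi_n(x)=\ln n+\Phi(nx)$.
\item The identification of $\rho_x$ with the rescaled Gibbs policy \eqref{optimalpi}. Its mean $\mu_{\pi}$ in \eqref{mu} is exactly $n\Phi'(n\alpha)$, so the bound $0\le\mu_\pi\le n$ used in the policy-improvement analysis is the same fact.
\end{itemize}
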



	\begin{lemma}\label{lemma1.1}
		For any $\varepsilon\in(0,1)$ and $c>0$, the following inequality holds:
		\begin{equation}\label{eq1.1}
			0 \le x^{+}-c\,\Phi(x/c)
			\le \varepsilon
			- c\ln\bigl(1-e^{-\varepsilon/c}\bigr)
			+ c[\ln|x|]^{+}
			- c\ln c .
		\end{equation}
	\end{lemma}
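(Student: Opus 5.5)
The plan is to substitute $y=x/c$, so that the quantity becomes $x^{+}-c\Phi(x/c)=c\,\big(y^{+}-\Phi(y)\big)$, and then treat $y>0$ and $y<0$ separately using the explicit form $\Phi(y)=\ln\big((e^y-1)/y\big)$. At $y=0$ both sides are trivial, since $\Phi(0)=0$ by continuity.

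\emph{Lower bound.} We need $\Phi(y)\le y^{+}$.
\begin{itemize}
\item For $y>0$ this is $(e^y-1)/y\le e^y$, which is equivalent to $1-e^{-y}\le y$.
\item For $y<0$ it is $(1-e^{-|y|})/|y|\le 1$, which follows from the same elementary inequality $1-e^{-u}\le u$ for $u\ge0$.
\end{itemize}

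\emph{Rewriting the difference.} The algebraic identity behind the upper bound is that, for every $y\neq 0$,
\begin{equation*}
y^{+}-\Phi(y)=\ln|y|-\ln\big(1-e^{-|y|}\big).
\end{equation*}
For $y>0$, write $e^y-1=e^y(1-e^{-y})$. For $y<0$, write $(e^y-1)/y=(1-e^{-|y|})/|y|$. Multiplying by $c$ gives
\begin{equation*}
x^{+}-c\Phi(x/c)=c\ln|x|-c\ln c-c\ln\big(1-e^{-|x|/c}\big).
\end{equation*}

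\emph{Case $|x|\ge\varepsilon$.} Use $c\ln|x|\le c[\ln|x|]^{+}$. Since $u\mapsto -\ln(1-e^{-u/c})$ is decreasing on $(0,\infty)$, we also have $-c\ln(1-e^{-|x|/c})\le -c\ln(1-e^{-\varepsilon/c})$. This yields the claimed bound without even needing the additive $\varepsilon$.

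\emph{Case $0<|x|<\varepsilon$.} This is the only step requiring care, because $\ln|x|\to-\infty$ while $-\ln(1-e^{-|x|/c})\to+\infty$ as $x\to0$. Rather than balancing these singular terms directly, I would use monotonicity of $g(x):=x^{+}-c\Phi(x/c)$, based on Lemma~\ref{lipphi} ($0\le\Phi'\le1$).
\begin{itemize}
\item On $x>0$, $g'(x)=1-\Phi'(x/c)\ge0$, so $g(x)\le g(\varepsilon)$ for $0<x<\varepsilon$.
\item On $x<0$, $g'(x)=-\Phi'(x/c)\le0$, so $g(x)\le g(-\varepsilon)$ for $-\varepsilon<x<0$.
\end{itemize}
Evaluating at $|x|=\varepsilon<1$ with the identity above gives
\begin{equation*}
g(\pm\varepsilon)=c\ln\varepsilon-c\ln c-c\ln(1-e^{-\varepsilon/c})\le -c\ln c-c\ln(1-e^{-\varepsilon/c}).
\end{equation*}
This is dominated by the right-hand side of \eqref{eq1.1}, because $\varepsilon\ge0$ and $c[\ln|x|]^{+}\ge0$. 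Combining the two cases completes the proof. The only potential subtlety is checking the sign conventions for $y<0$ in the identity, which is a direct computation.
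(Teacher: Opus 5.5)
Your proof is correct. The paper does not prove this lemma itself; it defers to the appendix of \cite{CFL2025}. So the comparison below is with the argument suggested by the shape of \eqref{eq1.1}, not with a proof printed in the paper.

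That argument would split at $|x|=\varepsilon$. For $|x|\ge\varepsilon$ it would use the identity $x^{+}-c\Phi(x/c)=c\ln|x|-c\ln c-c\ln(1-e^{-|x|/c})$, exactly as you do. On $|x|<\varepsilon$ it would bound crudely: $x^{+}\le\varepsilon$, and $-c\Phi(x/c)\le-c\Phi(-\varepsilon/c)=c\ln\varepsilon-c\ln c-c\ln(1-e^{-\varepsilon/c})$, using only that $\Phi$ is nondecreasing. This accounts exactly for the additive $\varepsilon$ in \eqref{eq1.1}.

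Your small-$|x|$ step instead uses both halves of $0\le\Phi'\le1$ from Lemma~\ref{lipphi}: $g(x)=x^{+}-c\Phi(x/c)$ is monotone on each half-line. You can also see this without Lemma~\ref{lipphi}. Since $\Phi(y)-\Phi(-y)=y$, $g$ is even and equals $c\,h(|x|/c)$ with $h(u)=\ln u-\ln(1-e^{-u})$, and $h'(u)=1/u-1/(e^{u}-1)\ge0$.

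What your route buys is a slightly sharper statement:
\[
0\le x^{+}-c\,\Phi(x/c)\le -c\ln\bigl(1-e^{-\varepsilon/c}\bigr)+c[\ln|x|]^{+}-c\ln c .
\]
So the additive $\varepsilon$ is superfluous. With $c=\varepsilon=\lambda/n$, this would remove the leading term $\lambda$ in Step~1 of the proof of Theorem~\ref{t3.1}.

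One small correction: the case $x=0$ is not entirely trivial. There the upper bound requires the right-hand side to be nonnegative, with the convention $[\ln 0]^{+}=0$. Your own argument covers it: $g$ is nondecreasing on $[0,\infty)$, so $0=g(0)\le g(\varepsilon)$, and you have already bounded $g(\varepsilon)$ by the right-hand side.
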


	
	The proof of the following result being similar to the one of Lemma \ref{lipphi} is omitted.
	
	\begin{lemma} \label{lem:prop:phi:n}
		The function $\Phi_n$ defined in \eqref{def:Phi:n} satisfies the property that $ 0 \leq \Phi'_n(x)\leq n$ and for any $x$ it is monotonically increasing in $n$.
	\end{lemma}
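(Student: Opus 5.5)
The statement to prove is Lemma \ref{lem:prop:phi:n}: $0\le \Phi_n'(x)\le n$, and $\Phi_n(x)$ is nondecreasing in $n$ for every fixed $x$. My plan is to reduce everything to the known function $\Phi$ of \eqref{def:psi:phi} through a scaling identity, and then invoke Lemma \ref{lipphi}.

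\emph{Step 1 (scaling identity).} For $x\neq 0$,
\[
\Phi_n(x)=\ln\Big(\frac{e^{nx}-1}{nx}\Big)+\ln n=\Phi(nx)+\ln n .
\]
At $x=0$ this identity is consistent with the convention $\Phi_n(0)=\ln n$, because $\Phi(0)=\lim_{y\to0}\ln\big((e^y-1)/y\big)=0$.

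\emph{Step 2 (derivative bound).} Since $\Phi$ is smooth, including at $0$ where $(e^y-1)/y$ extends analytically with positive values, the chain rule gives $\Phi_n'(x)=n\,\Phi'(nx)$. Lemma \ref{lipphi} gives $0\le\Phi'\le1$, so $0\le \Phi_n'(x)\le n$.

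If I were to avoid relying on Lemma \ref{lipphi}, I would check directly that
\[
\Phi'(y)=\frac{e^y}{e^y-1}-\frac1y=\frac{1}{1-e^{-y}}-\frac1y .
\]
To see this lies in $[0,1]$: the inequality $\frac{1}{1-e^{-y}}\ge \frac1y$ for $y>0$ is equivalent to $y\ge 1-e^{-y}$, and the case $y<0$ is analogous with the sign flips. The upper bound $\Phi'(y)\le 1$ follows from the symmetry $\Phi'(-y)=1-\Phi'(y)$. The only mildly delicate point here is handling $y$ near $0$, which is covered by analytic continuation ($\Phi'(0)=1/2$).

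\emph{Step 3 (monotonicity in $n$).} I treat $n$ as a real parameter $\ge1$ and differentiate, as already done in \eqref{deriv:phin}. For $x\neq0$,
\[
\partial_n\Phi_n(x)=\frac{x}{1-e^{-nx}}\ge 0,
\]
because numerator and denominator have the same sign. For $x=0$, $\partial_n\ln n=1/n>0$. Integrating in $n$ gives $\Phi_n(x)\le\Phi_{n+1}(x)$.

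I expect no genuine obstacle. The only care needed is the removable singularity at $x=0$, which is resolved by continuity and the smoothness of $\Phi$ there.
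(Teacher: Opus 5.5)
Your proof is correct and is essentially the argument the paper intends: it omits this proof as ``similar to Lemma \ref{lipphi},'' and your scaling identity $\Phi_n(x)=\Phi(nx)+\ln n$ reduces the derivative bound $0\le \Phi_n'=n\Phi'(n\,\cdot)\le n$ to that lemma (the paper uses the same identity in the proof of Lemma \ref{lip}). Your monotonicity step in $n$ is exactly the computation \eqref{deriv:phin}, including the value $1/n$ at $x=0$.
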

	

	\begin{lemma} \label{lem: cauchy_increasing}
		Let $n>m$. The function 
		$
		f(x)=\frac{x^{n}-1}{x^{m}-1}
		$
		is increasing on $(0,1)$.
	\end{lemma}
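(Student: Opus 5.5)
Fix $n>m$ and substitute $x=e^{-y}$ with $y>0$, so that $x\in(0,1)$. The function becomes $g(y)=\frac{e^{-ny}-1}{e^{-my}-1}=\frac{1-e^{-ny}}{1-e^{-my}}$, and it suffices to show that $g$ is decreasing in $y$ on $(0,\infty)$. Equivalently, we show that $\ln g(y)=\ln(1-e^{-ny})-\ln(1-e^{-my})$ is decreasing.

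Differentiating gives
\begin{equation*}
\frac{d}{dy}\ln g(y)=\frac{ne^{-ny}}{1-e^{-ny}}-\frac{me^{-my}}{1-e^{-my}}=\frac{1}{y}\Big(h(ny)-h(my)\Big),
\qquad h(z):=\frac{z}{e^{z}-1}.
\end{equation*}
It therefore suffices to show that $h$ is strictly decreasing on $(0,\infty)$. Since $ny>my>0$, this gives $h(ny)<h(my)$, hence $(\ln g)'<0$. So $g$ is decreasing in $y$, and $f$ is increasing in $x$ because $y\mapsto e^{-y}$ is decreasing.

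For the monotonicity of $h$, compute
\begin{equation*}
h'(z)=\frac{(e^{z}-1)-ze^{z}}{(e^{z}-1)^2}.
\end{equation*}
Set $k(z)=e^{z}-1-ze^{z}$. Then $k(0)=0$ and $k'(z)=-ze^{z}<0$ for $z>0$, so $k<0$ on $(0,\infty)$, and hence $h'<0$ there.

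The statement is meant for real $n>m>0$; integrality of $n,m$ is not used. The only mildly delicate point is the sign bookkeeping under the substitution and the cancellation of the minus signs, since both $x^{n}-1$ and $x^{m}-1$ are negative on $(0,1)$. A direct alternative would be to show $\frac{f'}{f}(x)=\frac{nx^{n-1}}{x^n-1}-\frac{mx^{m-1}}{x^m-1}>0$. Through $x=e^{-y}$ this is the same computation, so the $h$-monotonicity step is the heart of the argument, and it is elementary.
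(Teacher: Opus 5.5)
Your proof is correct, and it takes a different route from the paper's. The paper stays in the variable $x$. It applies the quotient rule and factors
$f'(x)=x^{m-1}\,[(n-m)x^{n}-nx^{n-m}+m]/(x^m-1)^2$. It then shows the bracket is positive on $(0,1)$, because its derivative $n(n-m)x^{n-m-1}(x^m-1)$ is negative there and the bracket vanishes at $x=1$.

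You instead pass to $y=-\ln x$ and take a logarithmic derivative, which decouples the two exponents. The derivative becomes $y^{-1}\big(h(ny)-h(my)\big)$ for the single fixed function $h(z)=z/(e^z-1)$. The whole argument therefore reduces to one fact independent of $n$ and $m$: $h$ is decreasing.

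Your route makes it transparent that the ordering $n>m>0$ is exactly what is used and that integrality plays no role. The paper's computation also works for real $n>m>0$ and also needs $m>0$, but this is less visible there. It also requires spotting the auxiliary polynomial and using its boundary value at $x=1$.

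Your variable $y$ is also the one in which the lemma is actually used, in the Cauchy estimate for $(V^{\lambda,n})_n$. There $y=(x\vee(P_s+\epsilon)-P_s)/\lambda$, and the quantity to control is $\lambda\ln\big((e^{-ny}-1)/(e^{-my}-1)\big)$. Your argument gives its monotonicity directly, without the extra composition step back to $x$. The paper's route, for its part, is a purely algebraic computation with no change of variables and no appeal to an auxiliary transcendental function.
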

	
	\begin{proof}
		A straightforward computation yields
		\begin{equation*}
			f'(x) = \frac{(n-m)x^{n+m-1} - nx^{n-1}+mx^{m-1}}{(x^{m}-1)^{2}} = \frac{x^{m-1}[(n-m)x^{n} - nx^{n-m}+m]}{(x^{m}-1)^{2}}.
		\end{equation*}
		Since $x^{m-1}>0$ and $(x^{m}-1)^2>0$ for $x\in(0,1)$, the sign of $f'(x)$ is determined by the function $g(x) := (n-m)x^{n} - nx^{n-m}+m$ for $x\in (0,1)$. Differentiating, we obtain $g'(x) = n(n-m)x^{n-1} - n(n-m)x^{n-m-1} = n(n-m)x^{n-m-1}(x^{m} - 1) < 0$ so that $g$ is strictly decreasing on $(0,1)$. Moreover, $\lim_{x \rightarrow 1^{-}}g(x) = 0$, which implies that $g(x)\ge 0$ for all $x\in(0,1)$. Consequently, $f'(x)\ge 0$ on $(0,1)$, and the result follows.
	\end{proof}

\end{document}